\newtheorem{theorem}{Theorem}
\newtheorem{lemma}[theorem]{Lemma}
\newtheorem{corollary}[theorem]{Corollary}
\theoremstyle{definition}
\newtheorem{remark}{Remark}
\numberwithin{remark}{section}
\providecommand{\algorithmname}{Algorithm}
\renewcommand{\cite}{\citeasnoun}
\newcommand{\citet}{\cite}
\renewcommand{\thepage}{}
\renewcommand{\thefootnote}{\fnsymbol{footnote}}
\renewcommand{\baselinestretch}{1.3}
\begin{document}

\title{Spatial Correlation Robust Inference\thanks{%
M{\"{u}}ller acknowledges financial support from the National Science
Foundation grant SES-191336.}}
\author{Ulrich K. M{\"{u}}ller and Mark W. Watson ~~ \\
\relax Department of Economics, Princeton University~~\\
\relax Princeton, NJ, 08544 \leavevmode \leavevmode }
\date{First Draft: December 2020\\
This Draft: February 2021}
\maketitle

\begin{abstract}
We propose a method for constructing confidence intervals that account for
many forms of spatial correlation. The interval has the familiar `estimator
plus and minus a standard error times a critical value' form, but we propose
new methods for constructing the standard error and the critical value. The
standard error is constructed using population principal components from a
given `worst-case' spatial covariance model. The critical value is chosen to
ensure coverage in a benchmark parametric model for the spatial
correlations. The method is shown to control coverage in large samples
whenever the spatial correlation is weak, i.e., with average pairwise
correlations that vanish as the sample size gets large. We also provide
results on correct coverage in a restricted but nonparametric class of
strong spatial correlations, as well as on the efficiency of the method. In
a design calibrated to match economic activity in U.S. states the method
outperforms previous suggestions for spatially robust inference about the
population mean.

Key Words: Confidence interval, HAR, HAC, Random field

JEL: C12, C20
\end{abstract}

\newpage \setcounter{page}{1} \renewcommand{\thepage}{\arabic{page}} %
\renewcommand{\thefootnote}{\arabic{footnote}} 
\renewcommand{\baselinestretch}{1.23} \small \normalsize%

\section{\label{sec:Introduction}Introduction}

Prompted by advances in both data availability and theory in economic
geography, international trade, urban economics, development and other
fields, empirical work using spatial data has become commonplace in
economics. These applications highlight the importance of econometric
methods that appropriately account for spatial correlation in real-world
settings. While important advances have been made, researchers arguably lack
practical methods that allow for reliable inference about parameters
estimated from spatial data for the wide-range spatial designs and
correlation patterns encountered in applied work.\footnote{\cite{Ibragimov10}%
, \cite{Sun_Kim_2012} and \cite{Bester_Conley_Hansen_Vogelsang_2016}, for
instance, find nontrivial size distortions of modern methods even in
arguably fairly benign designs, and \cite{Kelly2019} reports very large
distortions under spatial correlations calibrated to real-world data.} This
paper takes a step forward in this regard.

Specifically, we consider the problem of constructing a confidence interval
(or test of a hypothesized value) for the mean of a spatially-sampled random
variable. We propose a confidence interval constructed in the usual way,
i.e., as the sample mean plus and minus an estimate of its standard error
multiplied by a critical value. The novelty is that the standard error and
critical value are constructed so the resulting confidence interval has the
desired large-sample coverage probability (say, $95\%$) for a relatively
wide range of correlation patterns and spatial designs. The analysis is
described for the mean, but the required modifications for regression
coefficients or parameters in GMM settings follow from standard arguments.

To be more precise, suppose that a random variable $y$ is associated with a
location $s\in \mathcal{S}$, where $\mathcal{S}\subset \mathbb{R}^{d}$.
Figure \ref{fig:Three-Time-Series Designs} shows three one-dimensional ($d=1$%
) spatial designs. Panel (a) shows the familiar case of regularly spaced
locations, corresponding to the standard time series setting; panels (b) and
(c) show randomly selected locations drawn from a density $g$, where $g$ is
uniform in panel (b) and triangular in panel (c). Figure 2 shows two
geographic examples, so $d=2$, for the U.S.~state of Texas. In panel (a),
the locations are randomly selected from a uniform distribution, while in
panel (b) they are more likely to be sampled from areas with high economic
activity, here measured by light intensity as seen from space.\footnote{%
The light data are from \cite{Henderson_et_al_light}.} In much of our
analysis, we will assume that locations are i.i.d.~draws from a distribution
with density $g$, and so will encompass the irregularly spaced time series
and Texas examples.

\begin{figure}[t]
\begin{centering}
\includegraphics[scale=0.75]{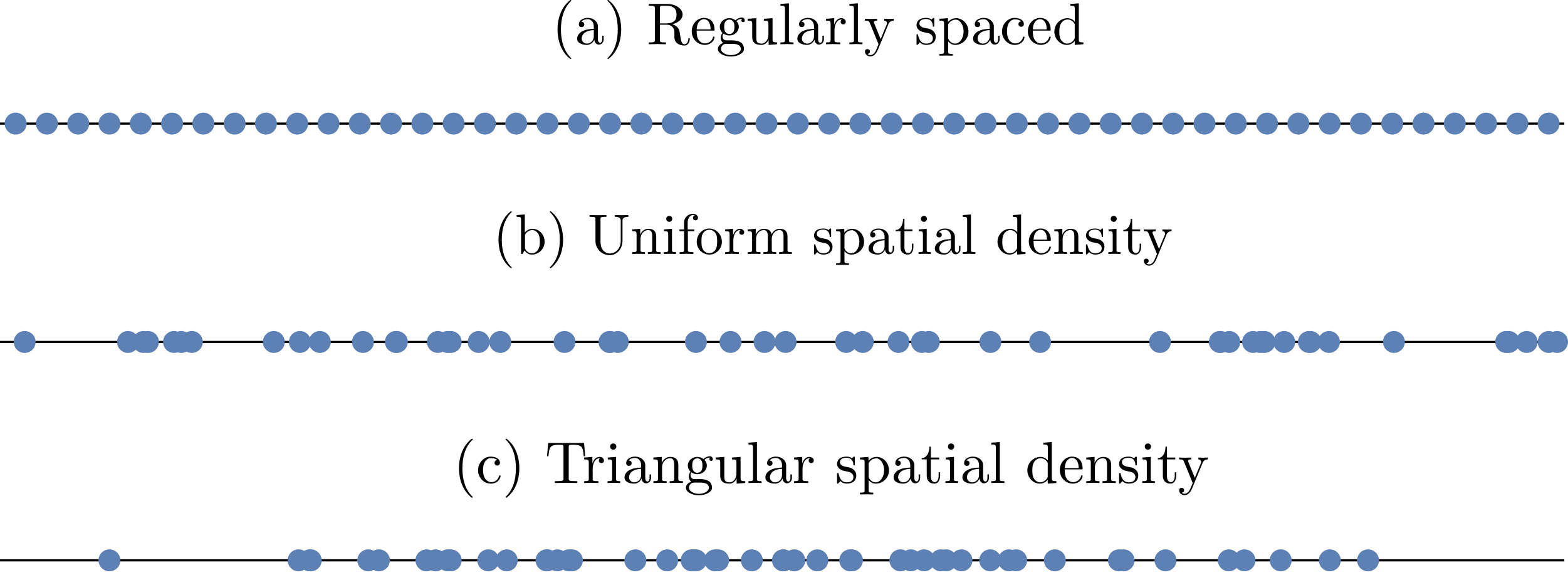} 
\par\end{centering}
\vspace{0.3in}
\caption{Three One-Dimensional Spatial Designs}
\label{fig:Three-Time-Series Designs}
\end{figure}

\begin{figure}[tbp]
\begin{centering}
\includegraphics[scale=0.65]{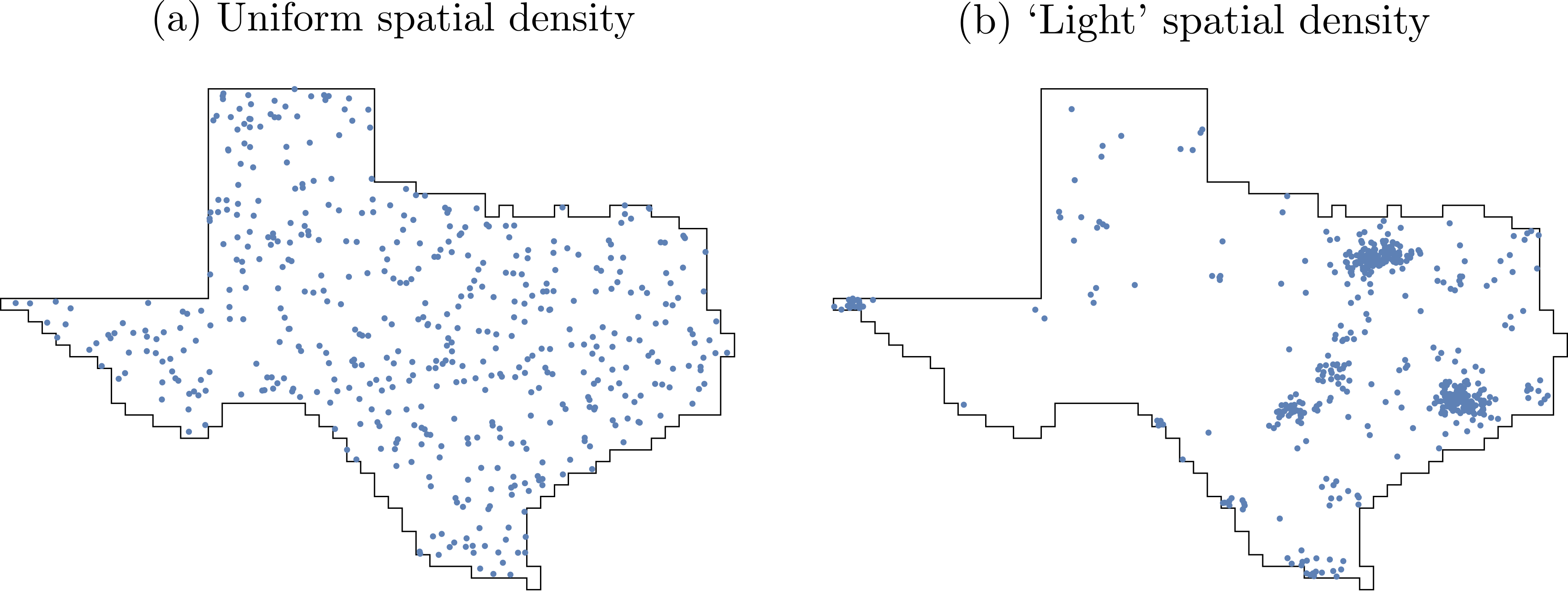}
\par\end{centering}
\caption{Two Geographic Spatial Designs}
\label{fig:Two-Spatial-Designs}
\end{figure}

Adding some notation, suppose 
\begin{equation}
y_{l}=\mu+u_{l}\text{ for $l=1,...,n$}
\label{eq:y=00003D00003D00003D00003Dmu+u}
\end{equation}
where $y_{l}$ is associated with the spatial location $s_{l}$, $\mu$ is the
mean of $y_{l}$, and $u_{l}$ is an unobserved error, assumed to be
covariance stationary with mean zero and covariance function $\mathbb{E}%
[u(r)u(s)]=\sigma_{u}(r-s)$. Let $\overline{y}$ denote the sample mean, and
consider the usual t-statistic 
\begin{equation*}
\tau=\frac{\sqrt{n}(\overline{y}-\mu_{0})}{\hat{\sigma}}
\end{equation*}
where $\hat{\sigma}^{2}$ is an estimator for the variance of $\sqrt{n}(%
\overline{y}-\mu)$. Tests of the null hypothesis $H_{0}:\mu=\mu_{0}$ reject
when $|\tau|>\func{cv}$, where $\func{cv}$ is the critical value, and the
corresponding confidence interval for $\mu$ has endpoints $\overline{y}\pm%
\func{cv}\hat{\sigma}/\sqrt{n}$. Inference methods in this class differ in
their choice of $\hat{\sigma}^{2}$ and critical value $\func{cv}$.

The case of regularly-spaced time series observations (panel (a) of Figure %
\ref{fig:Three-Time-Series Designs}) is the most well-studied version of
this problem. Here $\func{Var}(\sqrt{n}(\overline{y}-\mu ))$ is the long-run
variance of $y$. Classic choices for $\hat{\sigma}^{2}$ are kernel-based
consistent estimators such as those proposed in \cite{Newey87} and \cite%
{Andrews91}, and associated standard normal critical values. A more recent
literature initiated by \cite{Kiefer00} and \cite{Kiefer05} accounts for the
sampling uncertainty of kernel-based $\hat{\sigma}^{2}$ by considering
\textquotedblleft fixed-$b$\textquotedblright\ asymptotics where the
bandwidth is a fixed fraction of the sample size, which leads to a
corresponding upward adjustment of the critical value. Closely related are
projection estimators of $\hat{\sigma}^{2}$ where the number of projections
is treated as fixed in the asymptotics, as in M\"{u}ller (2004, 2007)\nocite%
{Muller04}\nocite{Muller07c}, \cite{Phillips05}, \cite{Sun13}, and others,
leading to Student-t critical values. These newer methods are found to
markedly improve size control under moderate serial correlation compared to
inference based on standard normal critical values.

In the general spatial case, the variance of $\overline{y}$ depends on the
correlation between all of the observations, and this in turn depends on two
distinct features of the problem. The first is the correlation between
observations at arbitrary locations (say $r$ and $s$); this is given by the
covariance function $\sigma_{u}(r-s)$. The second feature is which locations
in $\mathcal{S}$ are likely to be sampled; this is given by the spatial
density $g$. Only the first of these features is important in the
regularly-spaced time series example because the locations do not vary from
one application to the next.

Most existing suggestions for spatial inference are derived under the
assumption that the locations are (asymptotically) uniformly distributed,
corresponding to a constant density $g$: This includes the consistent
kernel-based estimator in \cite{Conley99}, the spatial analogue of the fixed-%
$b$ kernel approach analyzed in \cite{Bester_Conley_Hansen_Vogelsang_2016},
as well as the spatial projection-based estimator put forward in \cite%
{Sun_Kim_2012}. Exceptions include \cite{Kelejian2007}, who derive a
consistent kernel for $\hat{\sigma}^{2}$ under assumptions that can
accommodate arbitrary locations $s_{l}$, and the cluster approach suggested
by Ibragimov and M\"{u}ller (2010, 2015)\nocite{Ibragimov10}\nocite%
{Ibragimov15} and \cite{Bester11} (also see \cite{Cao2020}).

This paper makes progress over this literature by developing a method that
(i) accounts for sampling uncertainty in $\hat{\sigma}^{2}$ in a spatial
context while allowing for nonuniform spatial densities $g$; (ii) is valid
under generic weakly correlated $u_{l}$; (iii) also controls size under a
restricted but nonparametric form of strongly correlated $u_{l}$. The last
property sets it apart from all previously mentioned methods; in a time
series setting, \cite{Robinson05} and \cite{Muller14} derive inference under
parametric forms of strong dependence, and \cite{Dou_2019} derives optimal
inference under a non-parametric form of strong dependence under a
simplifying Whittle-type approximation to the implied covariance matrices.

Our method works as follows: First, a benchmark parametric model is
specified for the covariance function, say $\sigma_{u}^{0}(\cdot)=%
\sigma_{u}^{0}(\cdot|c)$, where $c$ is a persistence parameter with larger
values indicating less dependence. For a given lower bound on the
persistence parameter, say $c_{0}$, a hypothetical covariance matrix for $%
(y_{1},...,y_{n})^{\prime}$ is constructed using $\sigma_{u}^{0}(%
\cdot|c_{0}) $ evaluated at the actual sample locations $(s_{1},...,s_{n})$.
The eigenvectors of the demeaned version of this covariance matrix are the
(population) principal components of the residuals $\hat{u}_{l}=y_{l}-%
\overline{y}$ under $\sigma_{u}^{0}(\cdot|c_{0})$, and the sample variance
of $q$ of these principal components is the estimator $\hat{\sigma}^{2}$.
The critical value is chosen to ensure coverage for all $c\geq c_{0}$. The
number of principal components $q$ is chosen to minimize the expect length
of the confidence interval in the model where $u_{l}$ is i.i.d.~For
shorthand, we refer to the method as \emph{spatial correlation principal
components}, abbreviated SCPC.

Intuitively, variance estimators $\hat{\sigma}^{2}$ that are quadratic forms
in $\hat{u}$ are sums of squares of weighted averages of $\hat{u}$. Under
spatial correlation, most weighted averages are less variable than $%
\overline{y}$, leading to a downward biased $\hat{\sigma}^{2}$. SCPC selects
the linear combinations of $\hat{u}$ that are most variable, so that the
bias is as small as possible in the benchmark model with parameter $c_{0}$.

The remainder of the paper studies this method. Section 2 provides the
specifics for SCPC. These specifics raise a variety of issues that are the
focus of the remaining sections of the paper. In particular, Section 3 lays
out the analytic framework used to study the large-sample and finite-sample
Gaussian properties of spatial t-statistics. We use the framework to analyze
SCPC, but several of the results in Section 3 encompass other methods,
notably \textquotedblleft fixed-$b$\textquotedblright\ kernel-based methods,
and general projection estimators with a fixed number of basis functions. We
find that in contrast to the regularly spaced time series case, such
t-statistics with analogously adjusted critical values are \emph{not}
generically valid under weak correlation as soon as the spatial density
function is not uniform. We develop an alternative approach to the
construction of critical values that restores validity, and this is used for
SCPC inference. Section 4 thus shows that SCPC has the desired large-sample
coverage probability under generic weak correlation. Moreover, Section 4
provides a set of (easily verifiable) sufficient conditions that guarantee
coverage under arbitrary mixtures of a set of strong correlation patterns in
a finite-sample Gaussian setting. Section 4 also investigates the
finite-sample coverage probability of SCPC confidence sets when there is
heteroskedasticity across locations or measurement errors in locations ---
two problems faced in some applications. Section 5 addresses the question of
efficiency of SCPC by computing a lower bound on the expected length of
confidence intervals for \emph{any }inference method that controls coverage
in a particular class of spatial correlations. Comparing the expected length
of SCPC to this lower bound provides a measure of the efficiency of the
method. Section 6 compares the properties of SCPC to other methods that have
been proposed in the literature, and the results suggest that SCPC dominates
these methods over the range of covariance functions and spatial designs
considered. Section 7 discusses extensions and implementation issues. First,
it discusses how the results developed in the body of the paper for
inference about the population mean can be applied to inference problems
about regression coefficients or parameters in GMM models. It then discusses
two important computational issues involved in computing the critical value
and computing the required eigenvectors for the construction of SCPC in very
large-$n$ applications. Finally, Section 7 provides a sketch of the
generalization of the SCPC method to multivariate (F-test) settings. Proofs
are collected in the appendix.

\section{\label{sec:SCPC}Spatial Correlation Principal Components}

This section provides details for computing the SCPC t-statistic, critical
value and associated confidence interval. The construction of SCPC raises a
variety of questions about its properties, many of which are posed here and
discussed in detail in the remaining sections of the paper.

The construction of the SCPC t-test and confidence interval involves, among
other things, various covariance matrices and probability calculations. We
stress at the outset that these are used to describe the required
calculations, and they are not assumptions about the probability
distribution of the data under study. Those assumptions will be listed in
Section 3 and, it will turn out, are significantly more general than what
would follow from the description in this section.

Let $\mathbf{y}=(y_{1},y_{2},...,y_{n})^{\prime }$ and similarly for $%
\mathbf{s}=(s_{1},s_{2},...,s_{n})^{\prime },$ $\mathbf{u}%
=(u_{1},u_{2},...,u_{n})^{\prime }$ and the vector of residuals $\mathbf{%
\hat{u}}=(\hat{u}_{1},\hat{u}_{2},...,\hat{u}_{n})^{\prime }$. Let $\mathbf{l%
}$ denote an $n\times 1$ vector of $1$s, and $\mathbf{M}=\mathbf{I}-\mathbf{l%
}(\mathbf{l}^{\prime }\mathbf{l})^{-1}\mathbf{l}^{\prime }$. Consider a
benchmark model for $u_{l}$ with a parametric covariance function $\func{Cov}%
(u(r),u(s))=\sigma _{u}^{0}(r-s|c)$, where smaller values of the scalar
parameter $c$ indicate stronger correlations. In the following, we focus on
the simple Gaussian exponential (`AR(1)') model where $\sigma
_{u}^{0}(r-s|c)=\exp (-c||r-s||)$ for $c>0$. Let $\mathbf{\Sigma }(c)$
denote the $n\times n$ covariance matrix with $\mathbf{\Sigma }(c)_{ij}=\exp
(-c||s_{i}-s_{j}||)$, so that $\mathbf{\Sigma }(c)$ is the covariance matrix
of $u(s)$ evaluated at the sample locations $\mathbf{s}.$ Let $c_{0}$ denote
a pre-determined value of $c$ that is meant to capture an upper bound on the
spatial persistence in the data. (The choice of $c_{0}$ is discussed below).
Let $\mathbf{r}_{1},\mathbf{r}_{2},...,\mathbf{r}_{n}$ denote the
eigenvectors of $\mathbf{M}\mathbf{\Sigma }(c_{0})\mathbf{M}$ corresponding
to the eigenvalues ordered from largest to smallest, and normalized so that $%
n^{-1}\mathbf{r}_{j}^{\prime }\mathbf{r}_{j}=1$ for all $j$. The scalar
variable $n^{-1/2}\mathbf{r}_{j}^{\prime }\mathbf{\hat{u}}$ has the
interpretation as the $j$th population principle component of \textbf{$%
\mathbf{\hat{u}|}$}$\mathbf{s}$ $\sim \mathcal{N}(\mathbf{0},\mathbf{M}%
\mathbf{\Sigma }(c_{0})\mathbf{M)}$. The SCPC estimator of $\sigma ^{2}$
based on the first $q$ of these principal components is 
\begin{equation}
\hat{\sigma}_{\text{SCPC}}^{2}(q)=q^{-1}\sum_{j=1}^{q}(n^{-1/2}\mathbf{r}%
_{j}^{\prime }\mathbf{\hat{u}})^{2},  \label{eq:sigma_hat(q) for SCPC}
\end{equation}%
and the corresponding SCPC t-statistic is 
\begin{equation}
\tau _{\text{SCPC}}(q)=\frac{\sqrt{n}(\overline{y}-\mu _{0})}{\hat{\sigma}_{%
\text{SCPC}}(q)}.  \label{eq:tau(q) for SCPC}
\end{equation}

The critical value $\func{cv}_{\text{SCPC}}(q)$ of the level-$\alpha$ SCPC
test is chosen so that size is equal to $\alpha$ under the Gaussian
benchmark model with $c\geq c_{0}$. That is, $\func{cv}_{\text{SCPC}}(q)$
satisfies 
\begin{equation}
\sup_{c\geq c_{0}}\mathbb{P}_{\mathbf{\Sigma}(c)}^{0}(|\tau_{\text{SCPC}%
}(q)|>\func{cv}_{\text{SCPC}}(q)|\mathbf{s})=\alpha,  \label{eq:SCPC cvq}
\end{equation}
where $\mathbb{P}_{\mathbf{\Sigma}(c)}^{0}$ means that the probability is
computed in the benchmark model $\mathbf{y|s}\sim\mathcal{N}(\mu_{0}\mathbf{l%
},\mathbf{\Sigma}(c))$.

The final ingredient in the method is the choice of $q$. Let $\mathbb{E}%
^{1}[2\hat{\sigma}_{\text{SCPC}}(q)\func{cv}_{\text{SCPC}}(q)|\mathbf{s]}$
denote the expected length of the confidence interval constructed using $%
\tau _{\text{SCPC}}(q)$ under the Gaussian i.i.d.~model $\mathbf{y|s}\sim 
\mathcal{N}(\mathbf{l}\mu ,\mathbf{I})$. (The superscript \textquotedblleft
1\textquotedblright\ on $\mathbb{E}$ differentiates this from the benchmark
model with covariance matrix $\mathbf{\Sigma }(c)$.) SCPC chooses $q_{\func{%
SCPC}_{\text{{}}}}$ to make this length as small as possible, that is $q$
solves 
\begin{equation}
\min_{q\geq 1}\mathbb{E}^{1}[2\hat{\sigma}_{\text{SCPC}}(q)\func{cv}_{\text{%
SCPC}}(q)|\mathbf{s]=}\min_{q\geq 1}\sqrt{8}n^{-1/2}q^{-1/2}\func{cv}_{\text{%
SCPC}}(q)\frac{\Gamma ((q+1)/2)}{\Gamma (q/2)}  \label{eq:SCPC Length}
\end{equation}%
with the equality exploiting that $q\hat{\sigma}_{\text{SCPC}}^{2}(q)\mathbf{%
|s}\sim \chi _{q}^{2}$ in the Gaussian i.i.d.~model.

\begin{remark}
The primary concern in the construction of $\hat{\sigma}^{2}$ is downward
bias. Recall that the eigenvector $\mathbf{r}_{1}$ maximizes $\mathbf{h}%
^{\prime}\mathbf{M}\mathbf{\Sigma}(c_{0})\mathbf{Mh}$ among all vectors $%
\mathbf{h}$ of the same length, the second eigenvector $\mathbf{r}_{2}$
maximizes $\mathbf{h}^{\prime}\mathbf{M}\mathbf{\Sigma}(c_{0})\mathbf{Mh}$
subject to $\mathbf{h}^{\prime}\mathbf{r}_{1}=0$, and so forth, and for any $%
q\geq1$, the $n\times q$ matrix $(\mathbf{r}_{1},\ldots,\mathbf{r}_{q})$
maximizes $\limfunc{tr}\mathbf{H}^{\prime}\mathbf{M}\mathbf{\Sigma}(c_{0})%
\mathbf{MH}$ among all $n\times q$ matrices $\mathbf{H}$ with $n^{-1}\mathbf{%
H}^{\prime}\mathbf{H=I}_{q}$. Thus, the SCPC method selects the linear
combinations of $\mathbf{\hat{u}}$ in the estimator of $\sigma^{2}$ that
induce the smallest bias in the benchmark model with $c=c_{0}$, under the
constraint of being unbiased in the i.i.d.~model.

The choice of $q$ trades off the downward bias in $\hat{\sigma}_{\text{SCPC}%
}^{2}(q)$ that occurs when $q$ is large and its large variance when $q$ is
small. Both bias and variance lead to a large critical value, and (\ref%
{eq:SCPC Length}) leads to a choice of $q$ that optimally trades off these
two effects to obtain the shortest possible expected confidence interval
length in the i.i.d.~model.
\end{remark}

\begin{remark}
By construction, SCPC confidence intervals have correct coverage in Gaussian
models with a spatial exponential covariance function (`AR(1)' models) with
spatial persistence level less than or equal to the model with $c=c_{0}$.
Lemma \ref{lm:NormalLimit_q} in Section \ref{sec:An-Analytic-Framework}
provides a central limit result that rationalizes the normality assumption.
Theorem \ref{thm:cv_n} provides conditions on the choice of $c_{0}$ so that
the SCPC t-test controls size in large samples not just in the exponential
model, but under generic `weak correlation', as defined in Section \ref%
{sec:An-Analytic-Framework}. Theorem \ref{thm:robust} provides easily
verifiable sufficient conditions for size control under mixtures of
parametric small sample Gaussian models.
\end{remark}

\begin{remark}
SCPC requires that the researcher chooses a value for $c_{0}$ which
represents the highest degree of spatial correlation allowed by the method.
One way to calibrate $c_{0}$ is via the average pairwise correlation of the
spatial observations 
\begin{equation*}
\overline{\rho}=\frac{1}{n(n-1)}\sum_{l=1}^{n}\sum_{\ell\neq l}\func{Cor}%
\left(y_{l},y_{\ell}\left\vert \mathbf{s}_{n}\right.\right)
\end{equation*}
that is, we set $c_{0}$ so that it implies a given value $\overline{\rho}%
_{0} $ of $\bar{\rho}$. For example, $\overline{\rho}_{0}=0.001$ implies
very weak correlation, $\overline{\rho}_{0}=0.02$ stronger correlation, and $%
\overline{\rho}_{0}=0.10$ very strong correlation. In our examples, we
calibrate $c_{0}$ to these three values of $\overline{\rho}$.
\end{remark}

\begin{remark}
The SCPC method with $c_{0}$ calibrated in this way is invariant to the
scale of the locations $\{s_{l}\}_{l=1}^{n}\mapsto \{as_{l}\}_{l=1}^{n}$ for 
$a>0$, and (in contrast to Sun and Kim's (2012) suggestion) also to
arbitrary distance preserving transformations, such as rotations.
\end{remark}

\begin{figure}[tbp]
\centering{}\includegraphics[scale=0.72]{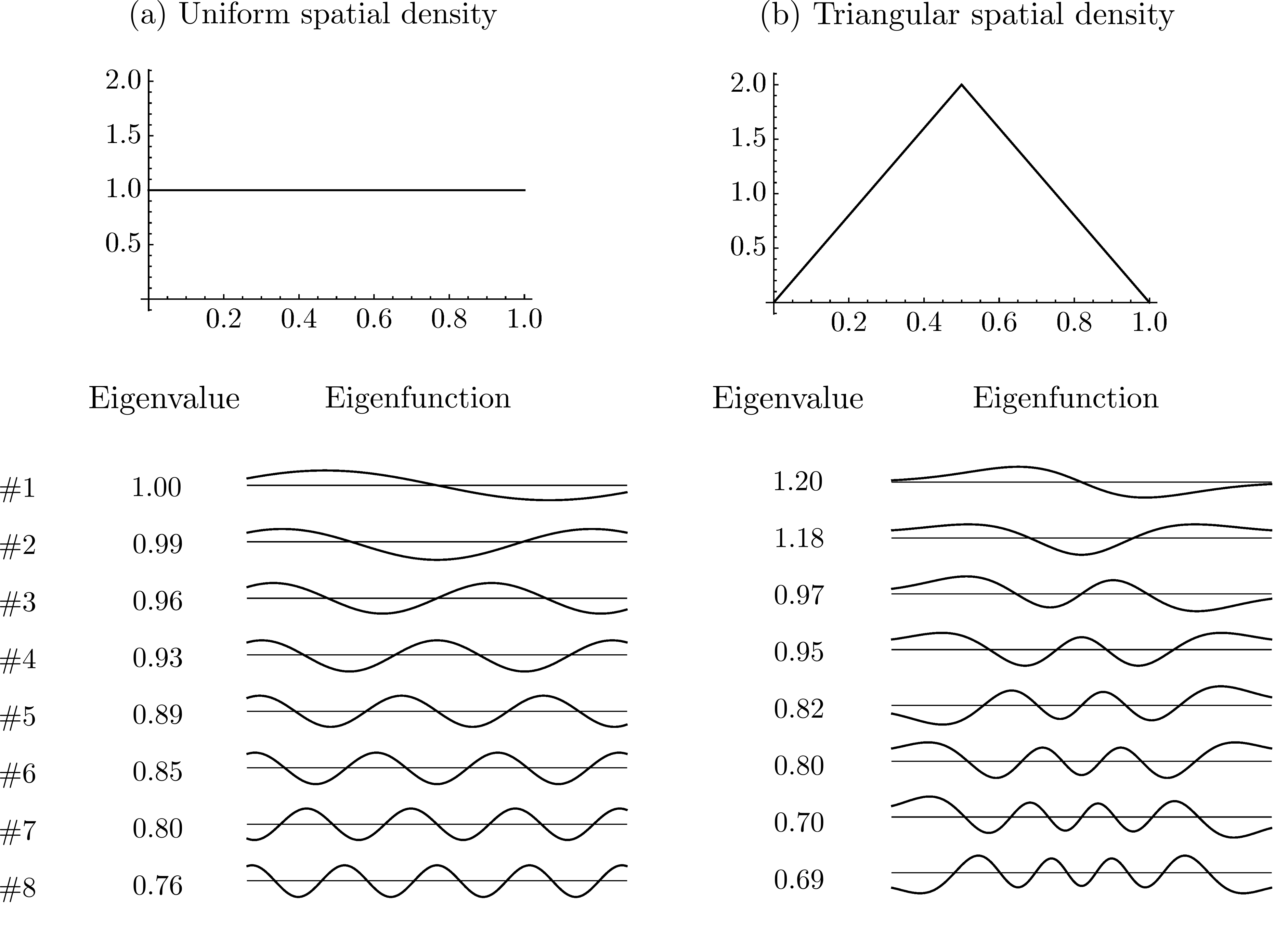}
\caption{Eigenfunctions for Two One-Dimensional Spatial Designs}
\label{fig:Eigenfunctions-for-time series}
\end{figure}

\begin{figure}[tbp]
\centering{}\includegraphics[scale=0.7]{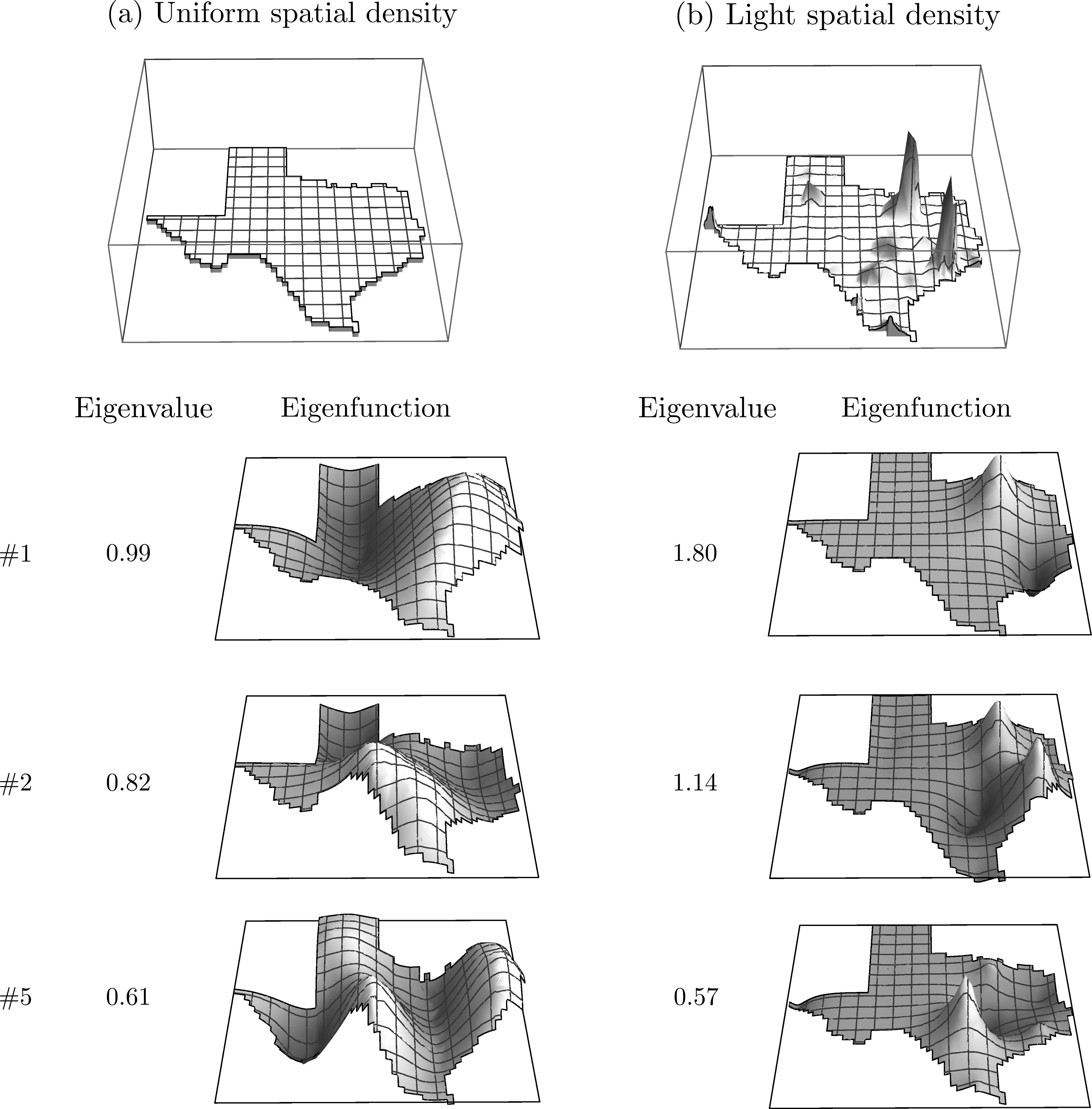}
\caption{Eigenfunctions for Two Geographic Spatial Designs}
\label{fig:Eigenfunctions-for-texas}
\end{figure}

\begin{remark}
The weights $\mathbf{r}_{j}$ used to construct the principal components and $%
\hat{\sigma}_{\text{SCPC}}^{2}(q)$ depend on $\mathbf{s}$, the sample values
of the spatial locations. Because the spatial locations are randomly drawn,
the $\mathbf{r}_{j}$ weights are random. But as shown in Section \ref%
{sec:An-Analytic-Framework}, the weights have well-defined limits in terms
of appropriately defined nonrandom eigenfunctions. Figure \ref%
{fig:Eigenfunctions-for-time series} plots selected eigenfunctions for two
one-dimensional spatial designs and Figure \ref{fig:Eigenfunctions-for-texas}
shows the associated plots for the Texas example, where in both cases $%
\overline{\rho }_{0}=0.02$. With uniform spatial densities (panel (a) in
both figures), the eigenfunctions are much like the weighting functions used
for low-frequency projection methods for regularly spaced time series (e.g., 
\cite{Muller04}, \cite{Phillips05}, \cite{Sun13}) or its spatial analogue
(e.g., \cite{Sun_Kim_2012}). In contrast, the non-uniform densities (panel
(b) in the figures) produce weights that are distorted versions of their
uniform counterparts, with most of the variation concentrated in
high-density areas.

The figures also show the associated normalized eigenvalues, that is the
variance of the principal components under the assumed exponential model,
relative to the variance of $\sqrt{n}(\overline{y}-\mu)$. When the density
is uniform, these relative variances are slightly below $1.0$ for small $j$,
and decline monotonically with $j$. This leads to the familiar downward bias
of $\hat{\sigma}^{2}$ in projection methods. When the spatial density is not
uniform, the relative variance of the principal components can be larger
than unity, mitigating this downward bias.
\end{remark}

\begin{remark}
In the regular spaced time series case, the eigenvectors of SCPC for $\bar{%
\rho}_{0}\in\{0.02,0.10\}$ are numerically close to the type-II\ cosine
transforms considered in M\"{u}ller (2004, 2007), \cite%
{Lazarus_etal_JBES_2018} and \cite{Dou_2019}. What is more, the SCPC choice
of $q$ is also numerically close to the corresponding optimal choice of $q$
in \cite{Dou_2019}. So when applied to time series, SCPC comes close to
replicating Dou's (2019) suggestion for optimal inference, with $c_{0}$
representing the upper bound for the degree of persistence. The same is true
in a spatial design with locations that happen to fall on a line with
approximately uniform empirical distribution.
\end{remark}

\noindent \textbf{\textit{U.S.~states spatial correlation designs}}. Before
making two additional remarks about the SCPC\ method, we introduce a set of
spatial correlation designs that will be used throughout the paper. The idea
is to consider a set of real world designs to learn about the usefulness of
the SCPC and other methods in practice. In particular, we randomly draw $%
n=500$ locations within the boundaries of the 48 contiguous states of the
U.S.~(we also considered $n=1000$ draws, and found nearly identical results
in all exercises). The density of locations $g$ within each state is either
uniform ($g_{\text{uniform}}$), or it is proportional to light measured from
space ($g_{\text{light}}$) as a proxy for economic activity. We draw five
sets of 500 independent locations under each density $g\in\{g_{\text{uniform}%
},g_{\text{light}}\}$ and $\bar{\rho}_{0}\in\{0.02,0.10\}$ for each state,
for a total of 240 (= 48 states $\times$ 5 location draws) sets of locations 
$\{s_{l}\}_{l=1}^{500}$ and associated covariances under each of the four $%
(g,\bar{\rho}_{0})$ pairs.

\begin{remark}
The critical value of the SCPC t-statistic reflects randomness in both $%
\overline{y}$ and $\hat{\sigma}_{\text{SCPC}}^{2}$. This is analogous to
inference in small-sample Gaussian models using critical values from the
Student-t distribution. Figure \ref{fig:Expected-length-relative to oracle}
shows the effect of the uncertainty in $\sigma ^{2}$ on the expected length
of $95\%$ confidence intervals in the U.S.~states spatial correlation
designs, by comparing the expected length of the SCPC confidence interval in
the i.i.d.~model to the the length with $\sigma ^{2}$ known: this relative
length is $\mathbb{E}^{1}[(\func{cv}/1.96)(\hat{\sigma}_{\text{SCPC}}/\sigma
)|\mathbf{s]}$, where 1.96 is the standard normal critical value. The figure
plots the CDF of these relative lengths over the 240 draws under each $(g,%
\bar{\rho}_{0})$ pair. For example, the left-most CDF (dashed blue, for $%
g=g_{\text{light}}$ and $\overline{\rho }_{0}=0.02$) shows that the relative
expected length ranges from roughly 1.08 to 1.18 across the 240 draws. The
figure indicates that the expected lengths are higher under $g_{\text{uniform%
}}$ than under the $g_{\text{light}}$ design and are higher under $\bar{\rho}%
_{0}=0.10$ than $\bar{\rho}_{0}=0.02.$ For comparison the figure also shows
the relative expected lengths of Student-t confidence intervals with $8$ and 
$3$ degrees of freedom, in multiples of the length of the known variance
z-interval. Evidently, when $\bar{\rho}_{0}=0.02$, the increase in expected
length of the SCPC confidence interval relative to an oracle endowed with
the value of $\sigma ^{2}$ is roughly like learning about the value of $%
\sigma ^{2}$ from $8$ i.i.d.$\mathcal{N}(0,\sigma ^{2})$ observations. When $%
\bar{\rho}_{0}=0.10$, relative lengths increase to approximately what would
obtain from Student-$t_{3}$ inference.
\end{remark}

\begin{figure}[tbp]
\centering{}\includegraphics[scale=0.7]{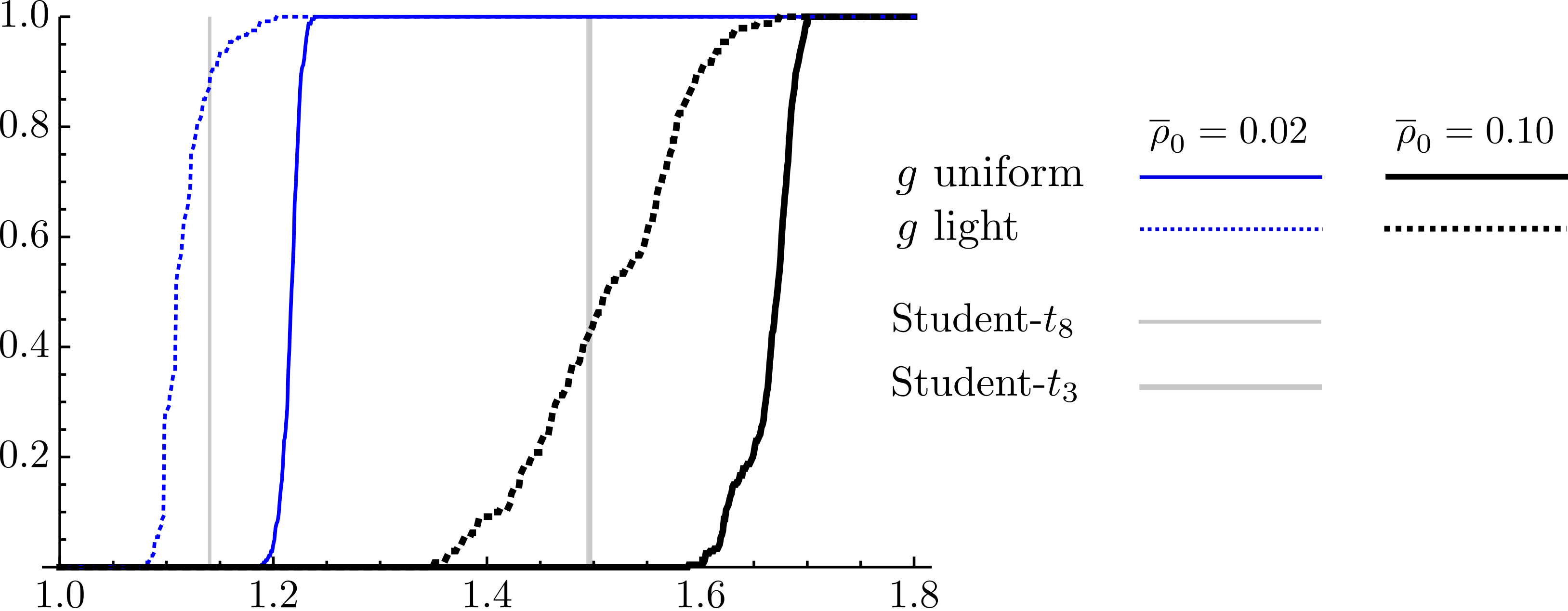}
\caption{CDFs of Expected Length of SCPC Confidence Interval Relative to
Known Variance Interval}
\label{fig:Expected-length-relative to oracle}
\end{figure}

\begin{remark}
Consider the related question about the efficiency of SCPC relative to other
methods that do not assume that the value of $\sigma^{2}$ is known. This
question can be answered in two ways. The first is to compare SCPC to
methods that have previously been proposed. This is done in Section \ref%
{sec:Comparison-with-Other}. A more ambitious approach compares SCPC to the
most efficient method constructed for any particular spatial density that,
like SCPC, produces confidence intervals with the desired coverage over a
wide range of covariance functions. This is done in Section \ref%
{sec:Efficiency-of-SCPC} which computes a lower bound on the expected length
of confidence intervals for any such method.
\end{remark}

\section{\label{sec:An-Analytic-Framework}Large-sample analysis of spatial
t-statistics}

This section outlines a large-sample framework used to study SCPC and other
spatial t-statistics. The first two subsections introduce notation and the
asymptotic sampling framework. With these in hand, the remainder of the
section summarizes the large-sample distribution of various statistics
including the SCPC and kernel-based t-statistics. Proofs are provided in the
appendix.

\subsection{Notation}

Some of this notation has been introduced earlier, but is repeated here for
easy reference.

The sample mean is denoted by $\overline{y}_{n}$, where here and elsewhere
we append the subscript $n$ for clarity in the asymptotic analysis. The
residual is $\hat{u}_{l}=y_{l}-\overline{y}_{n}$. Let $\mathbf{y}%
_{n}=(y_{1},...,y_{n})^{\prime}$, and similarly for $\mathbf{u}_{n}$, $\hat{%
\mathbf{u}}_{n}$ and $\mathbf{s}_{n}$. The vector $\mathbf{l}_{n}$ is a $%
n\times1$ vector of 1s, and $\mathbf{M}_{n}=\mathbf{I}_{n}-\mathbf{l}_{n}(%
\mathbf{l}_{n}^{\prime}\mathbf{l}_{n})^{-1}\mathbf{l}_{n}$, so that $\mathbf{%
\hat{u}}_{n}=\mathbf{M}_{n}\mathbf{u}_{n}$.

Generically, we consider estimators $\hat{\sigma}_{n}^{2}$ that are
quadratic forms in $\hat{\mathbf{u}}_{n}$. Let $\mathbf{Q}_{n}$ be a
positive semidefinite matrix with $\mathbf{Q}_{n}\mathbf{l}_{n}=\mathbf{0}$.
We consider estimators of the form 
\begin{equation}
\hat{\sigma}_{n}^{2}(\mathbf{Q}_{n})=n^{-1}\hat{\mathbf{u}}_{n}^{\prime}%
\mathbf{Q}_{n}\hat{\mathbf{u}}_{n}=n^{-1}\mathbf{u}_{n}^{\prime}\mathbf{Q}%
_{n}\mathbf{u}_{n}  \label{eq:sigma_hat(Q)}
\end{equation}
where the final equality follows from $\mathbf{Q}_{n}\mathbf{l}_{n}=\mathbf{0%
}$.

Two leading examples of estimators in this class are kernel-based estimators
and orthogonal-projections estimators. For kernel-based estimators, let $%
k(r,s)$ denote a positive semi-definite kernel, $k:\mathcal{S}\times 
\mathcal{S}\mapsto \mathbb{R}$. Let $\mathbf{K}_{n}$ denote an $n\times n$
matrix with $(l,\ell )$ element equal to $k(s_{l},s_{\ell })$ and let $%
\mathbf{Q}_{n}=\mathbf{M}_{n}\mathbf{K}_{n}\mathbf{M}_{n}$. Then $\hat{\sigma%
}_{n}^{2}=n^{-1}\sum_{l}\sum_{\ell }k(s_{l},s_{\ell })\hat{u}_{l}\hat{u}%
_{\ell }=$ $n^{-1}\hat{\mathbf{u}}_{n}^{\prime }\mathbf{Q}_{n}\hat{\mathbf{u}%
}_{n}$. For orthogonal-projection estimators, let $\mathbf{\hat{W}}_{n}$ be
an $n\times q$ matrix with $j$th column given by $\mathbf{\hat{w}}_{j}$
satisfying $n^{-1}\mathbf{\hat{W}}_{n}^{\prime }\mathbf{\hat{W}}_{n}=q^{-1}%
\mathbf{I}_{q}$ and $\mathbf{\hat{W}}_{n}^{\prime }\mathbf{l}_{n}=\mathbf{0}$
(the `hat' notation is a reminder that $\mathbf{\hat{W}}$ depends on the
locations $\mathbf{s}_{n}$, which are random). With $\mathbf{Q}_{n}=\mathbf{%
\hat{W}}_{n}\mathbf{\hat{W}}_{n}^{\prime }$, the orthogonal projection
estimator is $\hat{\sigma}_{n}^{2}=\sum_{j=1}^{q}(n^{-1/2}\mathbf{\hat{w}}%
_{j}^{\prime }\hat{\mathbf{u}}_{n})^{2}=n^{-1}\hat{\mathbf{u}}_{n}^{\prime }%
\mathbf{Q}_{n}\hat{\mathbf{u}}_{n}$. The SCPC estimator is an
orthogonal-projection estimator using the first $q$ eigenvectors of $\mathbf{%
M}_{n}\mathbf{\Sigma }(c_{0})\mathbf{M}_{n}$, scaled to have length $1/\sqrt{%
q},$ as the columns of $\mathbf{\hat{W}}_{n}$.

For quadratic form estimators $\hat{\sigma}_{n}^{2}(\mathbf{Q}_{n})$, under
the null hypothesis the squared t-statistic is a ratio of quadratic forms in 
$\mathbf{u}_{n}$ 
\begin{equation}
\tau_{n}^{2}(\mathbf{Q}_{n})=\frac{\left(\sqrt{n}(\overline{y}%
_{n}-\mu_{0})\right)^{2}}{\hat{\sigma}_{n}^{2}(\mathbf{Q}_{n})}=\frac{%
\mathbf{u}_{n}^{\prime}\mathbf{l}_{n}\mathbf{l}_{n}^{\prime}\mathbf{u}_{n}}{%
\mathbf{u}_{n}^{\prime}\mathbf{Q}_{n}\mathbf{u}_{n}}.  \label{eq:tau_2(Q)}
\end{equation}

\subsection{\label{subsec:Sampling-and-large-}Sampling and large-$n$
framework}

The spatial locations $s$ are chosen from $\mathcal{S}$, a compact subset of 
$\mathbb{R}^{d}$. Sample locations are selected as i.i.d.~draws from a
distribution $G$ with density $g$, where $g(s)$ is continuous and positive
for all $s\in\mathcal{S}$.

The average pairwise correlation of $y$, conditional on the sample locations
is $\bar{\rho}_{n}=\frac{1}{n(n-1)}\sum_{l=1}^{n}\sum_{\ell\neq l}\func{Cor}%
\left(y_{l},y_{\ell}\left\vert \mathbf{s}_{n}\right.\right)$. When $%
\overline{\rho}_{n}=0$, $\mathbf{y}_{n}\left\vert \mathbf{s}_{n}\right.$ is
white noise. When $\overline{\rho}_{n}=O_{p}(1)$ (and not $o_{p}(1)$), we
will say the process exhibits \emph{strong} \emph{correlation}. When $%
\overline{\rho}_{n}=O_{p}(1/c_{n}^{d})$ where $c_{n}$ is a sequence of
constants with $c_{n}\rightarrow\infty$, we follow \cite{Lahiri_2003} and
say the process exhibits \emph{weak correlation}.

The following asymptotic framework, adapted from \cite{Lahiri_2003}, is
useful for representing weak and strong correlation. Let $B$ be a zero-mean
stationary random field on $\mathbb{R}^{d}$ with continuous covariance
function $\mathbb{E}[B(s)B(r)]=\sigma_{B}\left(s-r\right)$, and $B$ and $%
\{s_{l}\}_{l=1}^{n}$ are independent. To avoid pathological cases, we
further assume $\int\sigma_{B}(s)ds>0$ and that $B$ is nonsingular in the
sense that $\inf_{||f||=1}\int\int f(r)f(s)\sigma_{B}(s-r)dG(r)dG(s)>0$ with 
$||f||^{2}=\int f^{2}(s)dG(s)$.

Let $c_{n}$ denote a sequence of constants with either $c_{n}\rightarrow
\infty $ or $c_{n}=c>0$. We consider a triangular-array framework with $%
u_{l}=B(c_{n}s_{l})$ for $s_{l}\in \mathcal{S}$, so that $\sigma
_{u}(s)=\sigma _{B}(c_{n}s)$. The sequence $c_{n}$ determines the `infill'
and `outfill' nature of the asymptotics. To see this, note that the volume
of the relevant domain for the random field $B$ is $c_{n}^{d}\func{vol}(%
\mathcal{S)}$, where $\func{vol}(\mathcal{S)}$ is the volume of $\mathcal{S}%
. $ The average number of sample points per unit of volume is then $%
n/(c_{n}^{d}\func{vol}(\mathcal{S)}).$ If $c_{n}^{d}\propto n,$ the volume
of the domain is increasing, while the number of points per unit of volume
is not; this is the usual outfill asymptotic sampling scheme. On the other
hand, when $c_{n}=c$, a constant, the volume of the domain is fixed, and the
number of points per unit of volume is proportional to $n$; this is the
usual infill sampling. Finally, when $c_{n}\rightarrow \infty $ with $%
c_{n}^{d}=o(n)$ the sampling scheme features both infill and outfill
asymptotics. A calculation shows that $\overline{\rho }%
_{n}=O_{p}(1/c_{n}^{d})$, so the sequence $c_{n}$ characterizes weak and
strong correlation as described above. With this background, let $%
a_{n}=c_{n}^{d}/n$; we will assume that $a_{n}\rightarrow a\in \lbrack
0,\infty )$.

Finally, we specify a set of weighting functions. To simplify the problem,
we initially consider weights that are nonrandom. For $j=1,\ldots ,q$, let $%
w_{j}:\mathcal{S}\mapsto 
\mathbb{R}
$ denote a set of continuous functions that satisfy $\int w_{j}(s)dG(s)=0$
and $\int w_{j}^{2}(s)dG(s)>0$. We introduce the following notation
involving these functions: $\mathbf{w}(s)$ is a $q\times 1$ vector-valued
continuous function with $\mathbf{w}(s)=(w_{1}(s),...,w_{q}(s))^{\prime }$; $%
\mathbf{w}^{0}(s)=(1,\mathbf{w}(s)^{\prime })^{\prime }$; $\mathbf{W}_{n}$
is a $n\times q$ matrix with $l$th row given by $\mathbf{w}(s_{l})^{\prime }$%
, and $\mathbf{W}_{n}^{0}$ is a $n\times (q+1)$ matrix with $l$th row given
by $\mathbf{w}^{0}(s_{l})^{\prime }$ so that $\mathbf{W}_{n}^{0}=[\mathbf{l}%
_{n},\mathbf{W}_{n}]$.

\begin{remark}
In our framework, locations $s_{l}$ are sampled within $\mathcal{S}$ for a
fixed and given $\mathcal{S}$. But nothing changes in our derivations if
instead we treated the observations $y_{l}$ as being indexed by $%
c_{n}s_{l}\in c_{n}\mathcal{S}$, as in \cite{Lahiri_2003}, or any other
one-to-one transformation of $s_{l}$. The essential characteristic is the
dependence pattern over the spatial domain of the observations, governed by $%
c_{n}$ and $B$.
\end{remark}

With this background, we now present the large-sample analysis.

\subsection{\label{sec:weighted_averages}Large-sample behavior of weighted
averages}

As is evident from equation (\ref{eq:tau_2(Q)}) the squared t-statistic is a
ratio of squares of weighted average of the elements of $\mathbf{u}_{n}$.
This subsection discusses the large-sample distribution of such weighted
averages. These results involve weak converge (i.e., convergence in
distribution) where our interest lies in these limits conditional on the
locations $\mathbf{s}_{n}$. With this in mind, for $\mathbf{X}_{n}$ and $%
\mathbf{X}$ $p$-dimensional random vectors, we use the notation $\mathbf{X}%
_{n}|\mathbf{s}_{n}\Rightarrow _{p}\mathbf{X}$ to denote $\mathbb{E}[h(%
\mathbf{X}_{n})|\mathbf{s}_{n}]\overset{p}{\rightarrow }\mathbb{E}[h(\mathbf{%
X})]$ for any bounded continuous function $h:\mathbb{R}^{p}\mapsto \mathbb{R}
$. This notion of weak convergence in probability is slightly weaker than
almost sure weak convergence of conditional distributions, but still ensures
that the limiting distribution is not induced by the randomness in the
locations $\mathbf{s}_{n}$.

Lemma \ref{lm:NormalLimit_q} characterizes the large-sample behavior of sums
of the form $\sum_{l=1}^{n}\mathbf{w}^{0}(s_{l})u(s_{l}).$ For the weak
correlation result, we invoke the mixing and moment assumptions of \cite%
{Lahiri_2003} on $B$ that underlie his Theorem 3.2.

\begin{lemma}
\label{lm:NormalLimit_q}(i) (strong correlation) Suppose $c_{n}=c>0$ and $B$
is a Gaussian process. Then 
\begin{equation*}
n^{-1}\mathbf{W}_{n}^{0\prime }\mathbf{u}_{n}|\mathbf{s}_{n}\Rightarrow _{p}%
\mathbf{X}\sim \mathcal{N}(0,\mathbf{\Omega }_{sc})
\end{equation*}%
with 
\begin{equation*}
\mathbf{\Omega }_{sc}=\int \int \mathbf{w}^{0}(r)\mathbf{w}^{0}(s)^{\prime
}\sigma _{B}(c(r-s))dG(r)dG(s).
\end{equation*}

(ii) (weak correlation) Suppose $c_{n}\rightarrow \infty $, and the
assumptions of Lahiri's (2003) Theorem 3.2 hold. Then 
\begin{equation*}
a_{n}^{1/2}n^{-1/2}\mathbf{W}_{n}^{0\prime }\mathbf{u}_{n}|\mathbf{s}%
_{n}\Rightarrow _{p}\mathbf{X}\sim \mathcal{N}(0,\mathbf{\Omega }_{wc})
\end{equation*}%
with 
\begin{equation*}
\mathbf{\Omega }_{wc}=a\sigma _{B}(0)\mathbf{V}_{1}+\left( \int \sigma
_{B}(s)ds\right) \mathbf{V}_{2}
\end{equation*}%
where 
\begin{equation*}
\mathbf{V}_{1}=\int \mathbf{w}^{0}(s)\mathbf{w}^{0}(s)^{\prime }g(s)ds\text{
and $\mathbf{V}_{2}=\int \mathbf{w}^{0}(s)\mathbf{w}^{0}(s)^{\prime
}g(s)^{2}ds.$}
\end{equation*}
\end{lemma}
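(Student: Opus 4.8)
The plan is to condition throughout on $\mathbf{s}_n$, so that $\mathbf{X}_n=a_n^{1/2}n^{-1/2}\mathbf{W}_n^{0\prime}\mathbf{u}_n$ (with $a_n=1$ in part (i)) becomes a linear functional of the field $B$ evaluated at the fixed points $c_n\mathbf{s}_n$, and to reduce each part to two ingredients: (a) convergence in probability of the conditional covariance $\mathbf{\Sigma}_n=\operatorname{Var}(\mathbf{X}_n\mid\mathbf{s}_n)$ to the stated limit $\mathbf{\Omega}$, and (b) conditional asymptotic normality. To assemble these into the required $\Rightarrow_p$ convergence I would work with the conditional characteristic function $\phi_n(\mathbf{t})=\mathbb{E}[\exp(i\mathbf{t}^{\prime}\mathbf{X}_n)\mid\mathbf{s}_n]$ and show $\phi_n(\mathbf{t})\overset{p}{\to}\exp(-\tfrac12\mathbf{t}^{\prime}\mathbf{\Omega}\mathbf{t})$ for each fixed $\mathbf{t}$. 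A subsequence argument then upgrades this pointwise-in-probability convergence to $\Rightarrow_p$: along any subsequence one can extract a further subsequence on which $\phi_n(\mathbf{t})$ converges almost surely for all $\mathbf{t}$ in a countable dense set, whence Lévy's continuity theorem gives almost-sure weak convergence of the conditional laws and hence $\mathbb{E}[h(\mathbf{X}_n)\mid\mathbf{s}_n]\to\mathbb{E}[h(\mathbf{X})]$ for bounded continuous $h$; as the limit is the same along every such subsequence, the full sequence converges in probability.

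For part (i), step (b) is exact rather than asymptotic: with $c_n=c$ fixed and $B$ Gaussian, $\mathbf{X}_n$ is conditionally a linear combination of jointly Gaussian variables, so $\mathbf{X}_n\mid\mathbf{s}_n\sim\mathcal{N}(0,\mathbf{\Sigma}_n)$ with $\mathbf{\Sigma}_n=n^{-2}\sum_{l,l^{\prime}}\mathbf{w}^{0}(s_l)\mathbf{w}^{0}(s_{l^{\prime}})^{\prime}\sigma_B(c(s_l-s_{l^{\prime}}))$ and $\phi_n(\mathbf{t})=\exp(-\tfrac12\mathbf{t}^{\prime}\mathbf{\Sigma}_n\mathbf{t})$ identically. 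This double sum is a $V$-statistic in the i.i.d.\ draws $s_l$ with bounded continuous kernel (using compactness of $\mathcal{S}$ together with continuity of $\mathbf{w}^{0}$ and $\sigma_B$), so the law of large numbers for $V$-statistics yields $\mathbf{\Sigma}_n\overset{p}{\to}\int\int\mathbf{w}^{0}(r)\mathbf{w}^{0}(s)^{\prime}\sigma_B(c(r-s))\,dG(r)\,dG(s)=\mathbf{\Omega}_{sc}$, and the general strategy closes the argument.

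For part (ii) the covariance computation is the crux and must separate the infill and outfill contributions in $\mathbf{\Sigma}_n=a_nn^{-1}\sum_{l,l^{\prime}}\mathbf{w}^{0}(s_l)\mathbf{w}^{0}(s_{l^{\prime}})^{\prime}\sigma_B(c_n(s_l-s_{l^{\prime}}))$. The diagonal terms equal $a_n\sigma_B(0)\,n^{-1}\sum_l\mathbf{w}^{0}(s_l)\mathbf{w}^{0}(s_l)^{\prime}$, which converges in probability to $a\sigma_B(0)\mathbf{V}_1$ by the ordinary law of large numbers and $a_n\to a$. For the off-diagonal part the prefactor satisfies $a_nn^{-1}\cdot n(n-1)\approx c_n^{d}$, which combines with the substitution $t=c_n(r-s)$ (Jacobian $c_n^{-d}$) to turn its mean into $\int\int\mathbf{w}^{0}(r)\mathbf{w}^{0}(r-t/c_n)^{\prime}\sigma_B(t)g(r)g(r-t/c_n)\,dt\,dr$; since $t/c_n\to0$, dominated convergence---justified by $\int|\sigma_B(t)|\,dt<\infty$ from Lahiri's assumptions and boundedness on the compact support---sends this to $\left(\int\sigma_B(s)\,ds\right)\mathbf{V}_2$. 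Showing that the off-diagonal double sum concentrates on this mean requires a variance bound, the one genuinely technical estimate, which is controlled by the mixing and moment conditions underlying Lahiri's Theorem 3.2; together these give $\mathbf{\Sigma}_n\overset{p}{\to}\mathbf{\Omega}_{wc}=a\sigma_B(0)\mathbf{V}_1+\left(\int\sigma_B(s)\,ds\right)\mathbf{V}_2$.

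Step (b) for part (ii) is where I would invoke Lahiri's (2003) Theorem 3.2 directly: applied through the Cramér–Wold device to the scalar sums $\mathbf{t}^{\prime}\mathbf{X}_n=a_n^{1/2}n^{-1/2}\sum_l(\mathbf{t}^{\prime}\mathbf{w}^{0}(s_l))u(s_l)$, it delivers asymptotic normality of these weighted averages of the possibly non-Gaussian field under the mixed infill/outfill asymptotics. I expect the main obstacle to lie not in the central limit theorem itself, which is imported, but in two places: the variance bound for the off-diagonal sum noted above, and the bookkeeping needed to pass from Lahiri's unconditional statement to the conditional $\Rightarrow_p$ convergence---that is, verifying that the limiting randomness is generated by $B$ rather than by the locations, which the characteristic-function formulation together with the already-established covariance convergence are designed to handle.
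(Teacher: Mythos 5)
Your proposal is correct and follows essentially the paper's own route: part (i) is argued exactly as in the paper (exact conditional Gaussianity, so it suffices to show the conditional covariance $\mathbf{\Omega}_{n}\overset{p}{\rightarrow}\mathbf{\Omega}_{sc}$, which the paper establishes by the same U/V-statistic mean-and-variance computation you invoke), and part (ii) rests, as in the paper, on Lahiri's (2003) Theorem 3.2 combined with the Cram\'er--Wold device. The differences are presentational rather than substantive: the paper's proof of (ii) is a bare citation---Lahiri's theorem already contains the infill/outfill covariance identification you re-derive (your diagonal/off-diagonal split and change of variables), and its conclusion is already conditional on the design, so your characteristic-function/subsequence machinery for $\Rightarrow_{p}$, while sound once tightness from the covariance convergence is noted, makes explicit what the paper leaves implicit; the one imprecision is that concentration of the off-diagonal sum is controlled by the i.i.d.\ location randomness together with $\int|\sigma_{B}(s)|ds<\infty$, not by the mixing conditions on $B$.
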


\begin{remark}
Note that the variance of $\sum_{l=1}^{n}\mathbf{w}^{0}(s_{l})u(s_{l})$
conditional on $\mathbf{s}_{n}$ is 
\begin{align}
\limfunc{Var}\left[\sum_{l=1}^{n}\mathbf{w}^{0}(s_{l})u(s_{l})\left\vert 
\mathbf{s}_{n}\right.\right] & =\sum_{l}\sum_{\ell}\mathbf{w}^{0}(s_{l})%
\mathbf{w}^{0}(s_{\ell})^{\prime}\sigma_{u}(s_{l}-s_{\ell})  \notag \\
& =\sum_{l}\sum_{\ell}\mathbf{w}^{0}(s_{l})\mathbf{w}^{0}(s_{\ell})^{\prime}%
\sigma_{B}(c_{n}\left(s_{l}-s_{\ell}\right)).  \label{eq: var(sum)}
\end{align}
The strong-correlation covariance matrix, $\mathbf{\Omega}_{sc}$, is
recognized as the large-$n$ analogue of this expression after appropriate
normalization and averaging over the locations. The weak-correlation
covariance matrix, $\mathbf{\Omega}_{wc},$ differs from $\mathbf{\Omega}%
_{sc} $ in two ways. First, because $c_{n}\rightarrow\infty$ in the
weak-correlation case, and $\sigma_{B}(r)$ vanishes for large $|r|$, the
second term in $\mathbf{\Omega}_{wc}$ is recognized as the limit of $\mathbf{%
\Omega}_{sc}$ as the double integral concentrates entirely on `the diagonal'
where $r\approx s$. Second, as outfill becomes more important (that is, $%
a_{n}=c_{n}^{n}/n$ gets larger), variances become more important relative to
covariances; this explains the first term in $\mathbf{\Omega}_{wc}.$
\end{remark}

\begin{remark}
The form of $\mathbf{V}_{2}$ is further recognized as the limit covariance
matrix in a model where the observations are independent, with variance
proportional to $g(s_{l})$. Thus, $\mathbf{V}_{2}$ is what one would obtain
for the limit covariance matrix under a specific form of non-stationarity.
Intuitively, a high density area does not only yield many observations, but
under spatial correlation, the variance contribution is further amplified by
the resulting high average correlation.
\end{remark}

\begin{remark}
In the strong-correlation case, normality is assumed. That said, CLTs have
been established also for strongly correlated models when $d=1$ (i.e., the
time series case), such as \cite{Taqqu1975}, \cite{Phillips87b} or \cite%
{Chan87}, and to a lesser extent also for $d>1$, as in \cite{Wang2014} or 
\cite{Lahiri2016}. For the weak correlation case, large-sample normality
follows from Theorem 3.2 in \cite{Lahiri_2003}.
\end{remark}

\begin{remark}
When $g(s)$ is constant, so the spatial distribution is uniform, $\mathbf{V}%
_{1}\propto \mathbf{V}_{2}$ and $\mathbf{\Omega }_{wc}\propto \int \mathbf{w}%
^{0}(s)\mathbf{w}^{0}(s)^{\prime }ds$. Thus, in a leading case with
orthogonal $w_{j}$ of length $1/\sqrt{q}$, $\int w_{j}(s)w_{i}(s)dG(s)=q^{-1}%
\mathbf{1}[i=j]$, $\mathbf{\Omega }_{wc}\propto \func{diag}(1,q^{-1}\mathbf{I%
}_{q})$, a familiar result from the literature on HAR inference in time
series with regularly spaced observations. Importantly, while this result
holds under constant $g(s)$, it does \emph{not} hold for other spatial
distributions, so that the typical HAR results about inconsistent variance
estimators for regularly spaced time series under weak dependence do not
carry over to the spatial case.
\end{remark}

\subsection{\label{sec:asy_tau}Large-sample null rejection probability of
spatial t-tests}

This section presents a useful representation for the limiting distribution
of $\tau_{n}^{2}(\mathbf{W}_{n}\mathbf{W}_{n}^{\prime})$ under the
assumptions of Lemma \ref{lm:NormalLimit_q}.

\begin{theorem}
\label{thm:tau_finite_q}For $\func{cv}>0$, let $\mathbf{D}(\func{cv})=%
\limfunc{diag}(1,-\func{cv}^{2}\mathbf{I}_{q})$, $\mathbf{A}=\mathbf{D}(%
\func{cv})\mathbf{\Omega }$ with $\mathbf{\Omega }\in \{\mathbf{\Omega }%
_{sc},\mathbf{\mathbf{\Omega }}_{wc}\}$, and let $(\omega _{0},\omega
_{1},...,\omega _{q})$ denote the eigenvalues of $\mathbf{A}$ ordered from
largest to smallest. Then under the assumptions of Lemma \ref%
{lm:NormalLimit_q}, under the null hypothesis and with $%
(Z_{0},Z_{1},...,Z_{q})^{\prime }\sim \mathcal{N}(0,\mathbf{I}_{q+1})$,

(i) $\omega_{0}>0$, and $\omega_{i}\leq0$ for $i\geq1$;

(ii) $\mathbb{P}\left(\tau_{n}^{2}(\mathbf{W}_{n}\mathbf{W}_{n}^{\prime})>%
\func{cv}^{2}|\mathbf{s}_{n}\right)\overset{p}{\rightarrow}\mathbb{P}%
\left(Z_{0}^{2}>\sum_{i=1}^{q}(-\frac{\omega_{i}}{\omega_{0}}%
)Z_{i}^{2}\right)$.
\end{theorem}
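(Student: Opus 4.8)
The plan is to start from the ratio-of-quadratic-forms representation in equation (\ref{eq:tau_2(Q)}) and reduce the probability statement to a linear combination of independent $\chi^2_1$ variables. Writing $\mathbf{W}_n^0 = [\mathbf{l}_n, \mathbf{W}_n]$, the numerator $\mathbf{u}_n'\mathbf{l}_n\mathbf{l}_n'\mathbf{u}_n$ and the denominator $\mathbf{u}_n'\mathbf{W}_n\mathbf{W}_n'\mathbf{u}_n$ are both picked out from the single vector $n^{-1}\mathbf{W}_n^{0\prime}\mathbf{u}_n$ (up to the normalization $a_n^{1/2}$ in the weak-correlation case, which cancels in the ratio). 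By Lemma \ref{lm:NormalLimit_q}, this vector converges conditionally to $\mathbf{X}\sim\mathcal{N}(0,\mathbf{\Omega})$. Since the event $\{\tau_n^2 > \mathrm{cv}^2\}$ is $\{\mathbf{u}_n'\mathbf{l}_n\mathbf{l}_n'\mathbf{u}_n - \mathrm{cv}^2\,\mathbf{u}_n'\mathbf{W}_n\mathbf{W}_n'\mathbf{u}_n > 0\}$, it is exactly $\{\,(n^{-1}\mathbf{W}_n^{0\prime}\mathbf{u}_n)'\,\mathbf{D}(\mathrm{cv})\,(n^{-1}\mathbf{W}_n^{0\prime}\mathbf{u}_n) > 0\}$ with $\mathbf{D}(\mathrm{cv}) = \mathrm{diag}(1, -\mathrm{cv}^2\mathbf{I}_q)$. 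So the whole task is to push the weak-convergence-in-probability through the map $\mathbf{v}\mapsto \mathbf{1}[\mathbf{v}'\mathbf{D}(\mathrm{cv})\mathbf{v} > 0]$ and then identify the limiting probability.

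First I would handle the convergence of the probability itself. The map $h(\mathbf{v}) = \mathbf{1}[\mathbf{v}'\mathbf{D}(\mathrm{cv})\mathbf{v}>0]$ is not continuous, so I cannot apply the definition of $\Rightarrow_p$ directly; instead I would invoke the continuous mapping / portmanteau argument, noting that the boundary $\{\mathbf{v}: \mathbf{v}'\mathbf{D}(\mathrm{cv})\mathbf{v}=0\}$ is a quadric surface of Lebesgue measure zero, hence has probability zero under the (nonsingular) Gaussian limit $\mathbf{X}$. Nonsingularity of $\mathbf{\Omega}$ is where the assumption $\inf_{\|f\|=1}\int\int f(r)f(s)\sigma_B(s-r)\,dG(r)\,dG(s)>0$ and the positivity of $g$ enter: they guarantee $\mathbf{\Omega}$ is positive definite so the continuous distribution of $\mathbf{X}$ assigns zero mass to the boundary. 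Because $\Rightarrow_p$ means conditional expectations of bounded continuous $h$ converge in probability, an approximation of $\mathbf{1}[\,\cdot\,>0]$ from above and below by bounded continuous functions (squeezed using the zero-boundary-mass property) gives $\mathbb{P}(\tau_n^2 > \mathrm{cv}^2 \mid \mathbf{s}_n) \xrightarrow{p} \mathbb{P}(\mathbf{X}'\mathbf{D}(\mathrm{cv})\mathbf{X}>0)$.

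Next I would diagonalize to get the stated $\chi^2$ form and prove part (i). Write $\mathbf{X} = \mathbf{\Omega}^{1/2}\mathbf{Z}$ with $\mathbf{Z}\sim\mathcal{N}(0,\mathbf{I}_{q+1})$; then $\mathbf{X}'\mathbf{D}(\mathrm{cv})\mathbf{X} = \mathbf{Z}'\mathbf{\Omega}^{1/2}\mathbf{D}(\mathrm{cv})\mathbf{\Omega}^{1/2}\mathbf{Z}$, and $\mathbf{\Omega}^{1/2}\mathbf{D}(\mathrm{cv})\mathbf{\Omega}^{1/2}$ is similar to $\mathbf{A} = \mathbf{D}(\mathrm{cv})\mathbf{\Omega}$, so it shares the eigenvalues $\omega_0,\ldots,\omega_q$. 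Taking the orthonormal eigenbasis of the symmetric matrix $\mathbf{\Omega}^{1/2}\mathbf{D}(\mathrm{cv})\mathbf{\Omega}^{1/2}$ and relabeling the $\mathcal{N}(0,\mathbf{I})$ coordinates accordingly turns the quadratic form into $\sum_{i=0}^q \omega_i Z_i^2$, yielding part (ii) once I establish the sign pattern. For part (i), the key is the inertia of $\mathbf{D}(\mathrm{cv})\mathbf{\Omega}$: since $\mathbf{\Omega}$ is positive definite, $\mathbf{A}$ has the same signature as $\mathbf{D}(\mathrm{cv})$ by Sylvester's law of inertia (congruence via $\mathbf{\Omega}^{1/2}$), and $\mathbf{D}(\mathrm{cv})$ has exactly one positive and $q$ negative eigenvalues; hence exactly one $\omega_i$ is positive and the rest are negative. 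Ordering largest to smallest gives $\omega_0>0$ and $\omega_i\le 0$ for $i\ge 1$. Dividing the event $\sum_i \omega_i Z_i^2 > 0$ through by $\omega_0>0$ produces the clean statement $\mathbb{P}(Z_0^2 > \sum_{i=1}^q(-\omega_i/\omega_0)Z_i^2)$.

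\textbf{The main obstacle} will be the first step: rigorously transferring convergence of a discontinuous indicator through the nonstandard notion $\Rightarrow_p$ (weak convergence \emph{in probability} of conditional laws), rather than ordinary weak convergence. I expect to need a careful squeezing argument that sandwiches $\mathbf{1}[\mathbf{v}'\mathbf{D}\mathbf{v}>0]$ between bounded continuous functions whose limiting expectations differ by at most the $\mathbf{X}$-mass of an $\varepsilon$-neighborhood of the boundary quadric, then send $\varepsilon\to 0$ using that the boundary has $\mathbf{X}$-measure zero. The rest — the Sylvester-inertia sign count and the spectral reduction to independent chi-squares — is routine linear algebra.
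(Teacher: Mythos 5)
Your proposal is correct and takes essentially the same route as the paper's proof: convergence of the conditional rejection probability via Lemma \ref{lm:NormalLimit_q} together with a continuous-mapping step, then the rewriting of the event as $\{\mathbf{X}^{\prime}\mathbf{D}(\func{cv})\mathbf{X}>0\}$ and the similarity of $\mathbf{\Omega}^{1/2}\mathbf{D}(\func{cv})\mathbf{\Omega}^{1/2}$ and $\mathbf{D}(\func{cv})\mathbf{\Omega}$ to obtain $\sum_{i=0}^{q}\omega_{i}Z_{i}^{2}$, with the sign pattern coming from an inertia/congruence argument (which the paper delegates to Lemma \ref{lm:lam1Abar}, whose proof is the same Sylvester-type reasoning you give). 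Your explicit portmanteau treatment of the discontinuous indicator, using that the boundary quadric has measure zero under the nonsingular Gaussian limit, merely fills in a detail that the paper's one-line appeal to the continuous mapping theorem leaves implicit.
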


\begin{remark}
In the weak-correlation case with constant spatial density $g(s)$ and
orthogonal $w_{j}$ of length $1/\sqrt{q}$, $\mathbf{\Omega =\mathbf{\Omega }}%
_{wc}\propto \func{diag}(1,q^{-1}\mathbf{I}_{q})$. Thus $-\omega _{i}/\omega
_{0}=\func{cv}^{2}/q$, and the asymptotic rejection probability becomes the
corresponding quantile of the $F_{1,q}$ distribution, a result familiar from
the limiting distribution of projection based squared t-statistics in the
regularly spaced time series case.
\end{remark}

\begin{remark}
In the general weak correlation case with arbitrary spatial density $g$, $%
\mathbf{\Omega}_{wc}=a\sigma_{B}(0)\mathbf{V}_{1}+\left(\int\sigma_{B}(s)ds%
\right)\mathbf{V}_{2}$. Because $\tau_{n}^{2}$ is a scale-invariant function
of $\mathbf{u}_{n}$, it is without loss of generality to normalize the scale
of $\sigma_{B}(\cdot)$ so that $a\sigma_{B}(0)+\int\sigma_{B}(s)ds=1$. Under
this normalization 
\begin{equation}
\mathbf{\Omega}_{wc}=\kappa\mathbf{V}_{1}+(1-\kappa)\mathbf{V}_{2}
\label{eq:Om_wd_kappa}
\end{equation}
where $\kappa$ is scalar with $0\leq\kappa<1$. Thus, the limiting CDF of $%
\tau_{n}^{2}$ is seen to depend on $\sigma_{B}$ only through the scalar $%
\kappa$; the matrices $\mathbf{V}_{1}$ and $\mathbf{V}_{2}$ are functions of
the weights $\mathbf{w}^{0}$ and the spatial density $g$. The scalar $\kappa$
thus completely summarizes the large sample effect of alternative underlying
random fields $B$ and weak correlation sequences $c_{n}\rightarrow\infty$.
\end{remark}

\subsubsection{Extensions for estimated weights}

For SCPC and other estimators, the weights in $\mathbf{w}(s)$ are estimated
using the sample locations $\mathbf{s}_{n}$. The conditions under which
Lemma \ref{lm:NormalLimit_q} continues to hold for such estimated weights is
given in the following theorem.

\begin{theorem}
\label{thm:what}Suppose the mapping $\mathbf{\hat{w}}^{0}:\mathcal{S}\mapsto 
\mathbb{R}
^{q+1}$ is a function of $\mathbf{s}_{n}$ (but not of $B$), and 
\begin{equation}
\sup_{s\in \mathcal{S}}||\mathbf{\hat{w}}^{0}(s)-\mathbf{w}^{0}(s)||\overset{%
p}{\rightarrow }0.  \label{eq:what_conv}
\end{equation}%
Then Lemma \ref{lm:NormalLimit_q} and Theorem \ref{thm:tau_finite_q}
continue to hold with $\mathbf{\hat{W}}_{n}^{0}$ in place of $\mathbf{W}%
_{n}^{0}$, where the $l$th row of $\mathbf{\hat{W}}_{n}^{0}$ is equal to $(1,%
\mathbf{\hat{w}}(s_{l})^{\prime })$.
\end{theorem}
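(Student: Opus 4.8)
The plan is to treat $\hat{\mathbf{W}}_n^0$ as a perturbation of the nonrandom $\mathbf{W}_n^0$ and show the perturbation is asymptotically negligible, so that the conclusions of Lemma \ref{lm:NormalLimit_q} and Theorem \ref{thm:tau_finite_q} transfer verbatim. Writing $\mathbf{d}_n(s)=\hat{\mathbf{w}}^0(s)-\mathbf{w}^0(s)$, condition (\ref{eq:what_conv}) says $\epsilon_n:=\sup_{s\in\mathcal{S}}\|\mathbf{d}_n(s)\|\overset{p}{\rightarrow}0$. In the weak-correlation case I would decompose
\begin{equation*}
a_n^{1/2}n^{-1/2}\hat{\mathbf{W}}_n^{0\prime}\mathbf{u}_n = a_n^{1/2}n^{-1/2}\mathbf{W}_n^{0\prime}\mathbf{u}_n + \mathbf{R}_n, \qquad \mathbf{R}_n = a_n^{1/2}n^{-1/2}\sum_{l=1}^n \mathbf{d}_n(s_l)u(s_l)
\end{equation*}
(in the strong-correlation case drop the $a_n^{1/2}$ factor throughout). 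The first term converges conditionally on $\mathbf{s}_n$ to $\mathbf{X}\sim\mathcal{N}(0,\mathbf{\Omega})$ by Lemma \ref{lm:NormalLimit_q}, so it suffices to show $\mathbf{R}_n|\mathbf{s}_n\Rightarrow_p 0$ and then invoke a conditional Slutsky argument.

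The crucial structural fact is that $\hat{\mathbf{w}}^0$, and hence $\mathbf{d}_n$, is a function of $\mathbf{s}_n$ alone and is independent of $B$. Therefore, conditional on $\mathbf{s}_n$ the coefficients $\mathbf{d}_n(s_l)$ are fixed, $\mathbb{E}[\mathbf{R}_n|\mathbf{s}_n]=0$, and the conditional second moment is the quadratic form in (\ref{eq: var(sum)}) with $\mathbf{w}^0$ replaced by $\mathbf{d}_n$,
\begin{equation*}
\mathbb{E}[\|\mathbf{R}_n\|^2|\mathbf{s}_n] = a_n n^{-1}\sum_{l}\sum_{\ell}\mathbf{d}_n(s_l)^\prime \mathbf{d}_n(s_\ell)\,\sigma_B(c_n(s_l-s_\ell)).
\end{equation*}
Bounding $|\mathbf{d}_n(s_l)^\prime\mathbf{d}_n(s_\ell)|\le\epsilon_n^2$ by Cauchy--Schwarz gives
\begin{equation*}
\mathbb{E}[\|\mathbf{R}_n\|^2|\mathbf{s}_n] \le \epsilon_n^2 \cdot a_n n^{-1}\sum_l\sum_\ell |\sigma_B(c_n(s_l-s_\ell))|.
\end{equation*}
The second factor is $O_p(1)$: the diagonal contributes $a_n|\sigma_B(0)|\to a|\sigma_B(0)|$, while the off-diagonal sum has exactly the order of the covariance mass already controlled in the proof of Lemma \ref{lm:NormalLimit_q}(ii) (under Lahiri's integrability and mixing conditions, $a_n n^{-1}\sum_{l\neq\ell}|\sigma_B(c_n(s_l-s_\ell))|$ converges in probability to $(\int|\sigma_B(s)|ds)\int g(s)^2ds$). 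In the strong-correlation case the analogous bound is $\epsilon_n^2\cdot n^{-2}\sum_l\sum_\ell|\sigma_B(c(s_l-s_\ell))|$, and the double sum converges by the law of large numbers for the i.i.d.\ pairs $(s_l,s_\ell)$ to $\int\int|\sigma_B(c(r-s))|dG(r)dG(s)<\infty$. Either way $\mathbb{E}[\|\mathbf{R}_n\|^2|\mathbf{s}_n]\le\epsilon_n^2\,O_p(1)\overset{p}{\rightarrow}0$.

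To convert this into conditional weak convergence, note that bounded Lipschitz functions determine weak convergence, so it suffices to verify the defining convergence for such $h$. For bounded uniformly continuous $h$ with modulus of continuity $\omega_h$, splitting on $\{\|\mathbf{R}_n\|\le\eta\}$ and its complement and applying conditional Chebyshev gives, with $\mathbf{X}_n=a_n^{1/2}n^{-1/2}\mathbf{W}_n^{0\prime}\mathbf{u}_n$,
\begin{equation*}
\big|\mathbb{E}[h(\mathbf{X}_n+\mathbf{R}_n)|\mathbf{s}_n]-\mathbb{E}[h(\mathbf{X}_n)|\mathbf{s}_n]\big| \le \omega_h(\eta) + 2\|h\|_\infty\,\eta^{-2}\,\mathbb{E}[\|\mathbf{R}_n\|^2|\mathbf{s}_n].
\end{equation*}
Letting $n\to\infty$ and then $\eta\to0$ shows the left side is $o_p(1)$, so combined with Lemma \ref{lm:NormalLimit_q} we obtain $a_n^{1/2}n^{-1/2}\hat{\mathbf{W}}_n^{0\prime}\mathbf{u}_n|\mathbf{s}_n\Rightarrow_p\mathbf{X}$, i.e.\ Lemma \ref{lm:NormalLimit_q} holds with hats. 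Because the proof of Theorem \ref{thm:tau_finite_q} uses this convergence only through the continuous mapping theorem applied to the ratio of quadratic forms in (\ref{eq:tau_2(Q)})---and the first component of $\hat{\mathbf{w}}^0$ equals the nonrandom constant $1$, so the numerator $\mathbf{l}_n^\prime\mathbf{u}_n$ is unaffected---the same conclusion for $\tau_n^2(\hat{\mathbf{W}}_n\hat{\mathbf{W}}_n^\prime)$ follows. I expect the main obstacle to be the off-diagonal bound in the weak-correlation case, namely establishing that the total covariance mass $a_n n^{-1}\sum_l\sum_\ell|\sigma_B(c_n(s_l-s_\ell))|$ remains $O_p(1)$ as $c_n\to\infty$; this is precisely where the independence of $\hat{\mathbf{w}}^0$ from $B$ (which makes the conditional variance a fixed quadratic form) and the integrability of $\sigma_B$ do the real work.
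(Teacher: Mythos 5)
Your proposal is correct and follows essentially the same route as the paper: the same decomposition into $\mathbf{W}_n^{0\prime}\mathbf{u}_n$ plus a remainder, the same conditional-variance bound $\sup_s\|\hat{\mathbf{w}}^0(s)-\mathbf{w}^0(s)\|^2\sum_{l,\ell}|\sigma_B(c_n(s_l-s_\ell))|$ exploiting that $\hat{\mathbf{w}}^0$ depends only on $\mathbf{s}_n$, and the same $O_p(1)$ control of the covariance mass (via the Lemma \ref{lm:NormalLimit_q}(i) argument in the strong case and Lahiri's Lemma 5.2 in the weak case). The only cosmetic difference is that you prove the conditional Slutsky step inline with a bounded-Lipschitz/Chebyshev argument, whereas the paper delegates exactly this step to its Lemma \ref{lem:weak_in_p}, which is itself proved by the same bounded-Lipschitz metric reasoning.
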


\begin{remark}
The theorem also accommodates location dependent convergent critical values $%
\func{cv}_{n}\overset{p}{\rightarrow}\func{cv}$ by setting $\mathbf{\hat{w}}%
^{0}(s)=(\func{cv}_{n}/\func{cv})\mathbf{w}^{0}(s)$.
\end{remark}

\subsubsection{\label{sec:kernel_est}Extension for kernel variance estimators%
}

This subsection discusses how these results can be generalized so they apply
to kernel-based variance estimators, $\hat{\sigma}_{n}^{2}(\mathbf{M}_{n}%
\mathbf{K}_{n}\mathbf{M}_{n})$ and associated t-statistics $\tau _{n}^{2}(%
\mathbf{M}_{n}\mathbf{K}_{n}\mathbf{M}_{n})$, where the $n\times n$ matrix $%
\mathbf{K}_{n}$ has $(l,\ell )$ element equal to $k(s_{l},s_{\ell })$ for a
positive semidefinite continuous kernel $k:\mathcal{S\times S}\mapsto 
\mathbb{R}
$. Since in our framework, $s_{l}\in \mathcal{S}$ for a fixed sampling
region $\mathcal{S}$, and $k$ does not depend on $n$, these kernel
estimators are spatial analogues of fixed-$b$ time series long-run variance
estimators considered by \cite{Kiefer05}, as also investigated by \cite%
{Bester_Conley_Hansen_Vogelsang_2016}.

Let $\mathbf{\hat{K}}_{n}=\mathbf{M}_{n}\mathbf{K}_{n}\mathbf{M}_{n}$, and
note that the $(l,\ell )$ element of $\mathbf{\hat{K}}_{n}$ is $\hat{k}%
_{n}(s_{l},s_{\ell })$ with 
\begin{equation}
\hat{k}_{n}(r,s)=k(r,s)-n^{-1}\sum_{l=1}^{n}k(s_{l},s)-n^{-1}%
\sum_{l=1}^{n}k(r,s_{l})-n^{-2}\sum_{l=1}^{n}\sum_{\ell
=1}^{n}k(s_{l},s_{\ell }).  \label{eq: kbar_n}
\end{equation}%
To begin, consider a simpler problem using a kernel that replaces the sample
means in (\ref{eq: kbar_n}) with populations means 
\begin{equation}
\overline{k}(r,s)=k(r,s)-\int k(u,s)dG(u)-\int k(r,u)dG(u)+\int \int
k(u,t)dG(u)dG(t).  \label{eq:kbar}
\end{equation}%
By Mercer's Theorem, $\overline{k}(r,s)$ has the representation 
\begin{equation}
\overline{k}(s,r)=\sum_{i=1}^{\infty }\lambda _{i}\varphi _{i}(s)\varphi
_{i}(r)  \label{eq:kbar spectral representation}
\end{equation}%
where $\{\lambda _{i},\varphi _{i}\}$ are the eigenvalues and eigenfunctions
of $\overline{k}$, with eigenvalues ordered from largest to smallest, $%
\mathbb{\int }\varphi _{i}(s)dG(s)=0$ and $\mathbb{\int }\varphi
_{i}(s)\varphi _{j}(s)dG(s)=\mathbf{1[}i=j]$.

Consider the problem with a truncated version of $\overline{k}$, 
\begin{equation*}
\overline{k}_{q}(s,r)=\sum_{i=1}^{q}\lambda_{i}\varphi_{i}(s)\varphi_{i}(r).
\end{equation*}
We can directly apply Theorem \ref{thm:tau_finite_q} using $%
w_{j}(s)=\lambda_{j}^{1/2}\varphi_{j}(s)$. Specifically, let $\mathbf{\bar{K}%
}_{n,q}$ be an $n\times n$ matrix with $(l,\ell)$ element equal to $%
\overline{k}_{q}(s_{l},s_{\ell})$. Then $\mathbf{u}_{n}^{\prime}\mathbf{\bar{%
K}}_{n,q}\mathbf{u}_{n}=\mathbf{u}_{n}^{\prime}\mathbf{W}_{n}\mathbf{W}%
_{n}^{\prime}\mathbf{u}_{n}$ so that $\tau_{n}^{2}(\mathbf{\bar{K}}%
_{n,q})=\tau_{n}^{2}(\mathbf{W}_{n}\mathbf{W}_{n}^{\prime})$, and $\mathbb{P}%
\left(\tau_{n}^{2}(\mathbf{\bar{K}}_{n,q})>\func{cv}^{2}|\mathbf{s}%
_{n}\right)\overset{p}{\rightarrow}\mathbb{P}\left(Z_{0}^{2}>\sum_{i=1}^{q}(-%
\frac{\omega_{i}}{\omega_{0}})Z_{i}^{2}\right)$ by Theorem \ref%
{thm:tau_finite_q}.

To extend this result to the original problem, it is useful to reformulate
it in terms of eigenvalues of linear operators. Specifically, denote by $%
\mathcal{L}_{G}^{2}$ the Hilbert space of functions $\mathcal{S}\mapsto 
\mathbb{R}
$ with inner product $\langle f_{1},f_{2}\rangle =\int f_{1}(s)f_{2}(s)dG(s)$%
. Normalize $\mathbf{\Omega }_{wc}=\kappa \mathbf{V}_{1}+(1-\kappa )\mathbf{V%
}_{2}$, as in (\ref{eq:Om_wd_kappa}). A tedious but straightforward
calculation (see (\ref{eq:op_equivalence})\ in the appendix) shows that the
eigenvalues $\omega _{i}$ of $\mathbf{A}=\mathbf{D}(\func{cv})\mathbf{\Omega 
}$ with $\mathbf{\Omega =\{\Omega }_{sc}\mathbf{,\Omega }_{wc}\}$ are also
the eigenvalues of finite rank self-adjoint linear operators $\mathcal{L}%
_{G}^{2}\mapsto \mathcal{L}_{G}^{2}$, namely $R_{sc}T_{q}R_{sc}$ and $%
R_{wc}T_{q}R_{wc}$ in the strong and weak correlation case, respectively,
where 
\begin{eqnarray*}
R_{sc}^{2}(f)(s) &=&\int \sigma _{B}(c(s-r))f(r)dG(r) \\
R_{wc}^{2}(f)(s) &=&(\kappa +(1-\kappa )g(s))f(s) \\
T_{q}(f)(s) &=&\int \left( 1-\func{cv}^{2}\overline{k}_{q}(s,r)\right)
f(r)dG(r).
\end{eqnarray*}%
This suggests that the limiting rejection probability for the original
non-truncated $\bar{k}$ might be characterized by the (potentially infinite)
number of eigenvalues of the operators $RTR:\mathcal{L}_{G}^{2}\mapsto 
\mathcal{L}_{G}^{2}$ with $R\in \{R_{wc},R_{sc}\}$, where 
\begin{equation*}
T(f)(s)=\int \left( 1-\func{cv}^{2}\overline{k}(s,r)\right) f(r)dG(r).
\end{equation*}%
The following theorem shows this to be the case, and it also includes the
generalization to sample demeaned kernels (\ref{eq: kbar_n}) instead of (\ref%
{eq:kbar}).

\begin{theorem}
\label{thm:kernel_conv}Let $\omega_{0}$ denote the largest eigenvalue, and $%
\omega_{i},i\geq1$ the remaining eigenvalues of $RTR$ for $%
R\in\{R_{wc},R_{sc}\}$. Then under the assumptions of Lemma \ref%
{lm:NormalLimit_q}, $\omega_{0}>0$ and $\omega_{i}\leq0$ for $i\geq1$, and $%
\mathbb{P}(\tau_{n}^{2}(\mathbf{\hat{K}}_{n})>\func{cv}^{2}|\mathbf{s}_{n})%
\overset{p}{\rightarrow}\QTR{up}{\mathbb{P}(Z_{0}^{2}>\sum_{i=1}^{\infty}(-%
\omega_{i}/\omega_{0})Z_{i}^{2})}.$
\end{theorem}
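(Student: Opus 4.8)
The plan is to bootstrap the finite-rank result of Theorem \ref{thm:tau_finite_q} up to the full kernel by passing two limits — the truncation index $q\to\infty$ and the replacement of the sample-demeaned kernel $\hat{k}_n$ by the population-demeaned $\bar{k}$ — and gluing them together with a standard converging-together double-limit argument. Concretely, write the reciprocal statistic as $\tau_n^{-2}(\mathbf{\hat{K}}_n)=\mathbf{u}_n'\mathbf{\hat{K}}_n\mathbf{u}_n/(\mathbf{l}_n'\mathbf{u}_n)^2$, which is scale invariant, so the $a_n$-normalizations of the weak- and strong-correlation regimes cancel in the ratio. For the truncated population kernel $\bar{k}_q$, Theorem \ref{thm:tau_finite_q} applied with weights $w_j=\lambda_j^{1/2}\varphi_j$ already delivers $\mathbb{P}(\tau_n^2(\mathbf{\bar{K}}_{n,q})>\func{cv}^2\mid\mathbf{s}_n)\overset{p}{\rightarrow}\mathbb{P}(Z_0^2>\sum_{i=1}^q(-\omega_i^{(q)}/\omega_0^{(q)})Z_i^2)$, where the $\omega_i^{(q)}$ are the eigenvalues of $RT_qR$. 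The task is therefore to show (a) $\omega_i^{(q)}\to\omega_i$ as $q\to\infty$ with the sign pattern preserved, (b) the corresponding limit laws converge, and (c) the gap between $\tau_n^2(\mathbf{\hat{K}}_n)$ and $\tau_n^2(\mathbf{\bar{K}}_{n,q})$ is negligible for large $q$, uniformly in large $n$.

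For the sign structure I would argue directly on the operator $T$. Writing $T=\Pi-\func{cv}^2\mathcal{K}$, where $\Pi f=\langle f,\mathbf{1}\rangle\mathbf{1}$ is the rank-one (hence positive semidefinite) operator arising from the constant ``$1$'' in the integrand and $\mathcal{K}f(s)=\int\bar{k}(s,r)f(r)dG(r)$ is positive semidefinite because $\bar{k}$ is a positive semidefinite kernel, one sees that $T$ is a rank-one positive perturbation of a negative semidefinite operator, so by the min-max characterization $T$ has at most one positive eigenvalue; evaluating $\langle T\mathbf{1},\mathbf{1}\rangle=1>0$ (using $\int\bar{k}(s,r)dG(r)=0$) shows it has exactly one. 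Since each $R\in\{R_{sc},R_{wc}\}$ is a positive, boundedly invertible self-adjoint operator — $R_{wc}^2$ is multiplication by the bounded-below function $\kappa+(1-\kappa)g$, and $R_{sc}^2$ is bounded below by the nonsingularity assumption on $B$ — the congruence $RTR$ has the same inertia as $T$ by Sylvester's law, giving $\omega_0>0$ and $\omega_i\le0$ for $i\ge1$. Eigenvalue convergence then follows because $\bar{k}$ is trace class by Mercer's Theorem, so $T-T_q=-\func{cv}^2(\mathcal{K}-\mathcal{K}_q)$ has operator norm $\func{cv}^2\lambda_{q+1}\to0$, hence $\|RT_qR-RTR\|\to0$; Weyl's inequality then gives $\omega_i^{(q)}\to\omega_i$ for each $i$, with $\omega_0^{(q)}\to\omega_0$ bounded away from zero.

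For the limit laws, note that $\mathcal{K}$ and therefore the negative part of $RTR$ is trace class, so $\sum_{i\ge1}(-\omega_i)<\infty$ and the series $\sum_{i\ge1}(-\omega_i/\omega_0)Z_i^2$ converges almost surely; combined with $\omega_i^{(q)}\to\omega_i$ and the fact that $Z_0^2-\sum_{i\ge1}(-\omega_i/\omega_0)Z_i^2$ has a continuous distribution (no atom at $0$), this yields the convergence of the truncated probabilities to $\mathbb{P}(Z_0^2>\sum_{i=1}^\infty(-\omega_i/\omega_0)Z_i^2)$. It remains to bound the statistic gap in (c). I would split $\mathbf{\hat{K}}_n-\mathbf{\bar{K}}_{n,q}=(\mathbf{\hat{K}}_n-\mathbf{\bar{K}}_{n})+(\mathbf{\bar{K}}_{n}-\mathbf{\bar{K}}_{n,q})$, where $\mathbf{\bar{K}}_n$ has $(l,\ell)$ entry $\bar{k}(s_l,s_\ell)$. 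The first term is handled by the uniform convergence $\sup_{r,s}|\hat{k}_n(r,s)-\bar{k}(r,s)|\overset{p}{\rightarrow}0$, which holds by a uniform law of large numbers since $k$ is continuous on the compact $\mathcal{S}$. The truncation term $\mathbf{\bar{K}}_n-\mathbf{\bar{K}}_{n,q}$ has the positive semidefinite kernel $\sum_{i>q}\lambda_i\varphi_i(s_l)\varphi_i(s_\ell)$, and the contribution of its quadratic form to the ratio is controlled in conditional mean and variance by the vanishing tail $\sum_{i>q}\lambda_i\to0$ together with the conditional covariance structure of $\mathbf{u}_n$.

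The main obstacle is precisely this uniform-in-$n$ tail control: Lemma \ref{lm:NormalLimit_q} is a finite-$q$ central limit theorem and does not by itself bound the infinite-dimensional tail quadratic form $\mathbf{u}_n'(\mathbf{\bar{K}}_n-\mathbf{\bar{K}}_{n,q})\mathbf{u}_n$. One must show that its normalized conditional expectation $\limfunc{tr}[(\mathbf{\bar{K}}_n-\mathbf{\bar{K}}_{n,q})\func{Var}(\mathbf{u}_n\mid\mathbf{s}_n)]$ and its conditional variance are small for large $q$, uniformly as $n\to\infty$, in both the strong- and weak-correlation regimes, and that the denominator $(\mathbf{l}_n'\mathbf{u}_n)^2$ does not degenerate — the latter again tied to $\omega_0$ being bounded below through the nonsingularity of $B$. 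Granting this, a diagonal argument (choose $q$ large to bring (b) and the (c)-bound within $\varepsilon$, then let $n\to\infty$ for fixed $q$) completes the proof.
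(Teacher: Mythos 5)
Your high-level architecture --- truncate, apply Theorem \ref{thm:tau_finite_q} at fixed $q$, pass $q\rightarrow \infty $ in the limit law by operator perturbation, and glue the two limits with a diagonal argument --- is the same as the paper's, but your decomposition is genuinely different, and the difference is where the trouble lies. You truncate the \emph{population}-demeaned kernel $\bar{k}$, so you must control both the demeaning gap $\mathbf{\hat{K}}_{n}-\mathbf{\bar{K}}_{n}$ and the population tail $\mathbf{\bar{K}}_{n}-\mathbf{\bar{K}}_{n,q}$. The paper instead truncates the \emph{sample} kernel along its own eigendecomposition, $\hat{k}_{n,q}(r,s)=\sum_{i=1}^{q}\hat{\lambda}_{i}\hat{\varphi}_{i}(r)\hat{\varphi}_{i}(s)$ from Lemma \ref{thm:eigenfunc}, so the \emph{only} error term is a positive semidefinite tail; the price is that the fixed-$q$ step then involves estimated eigenfunctions, which is what Lemma \ref{thm:eigenfunc}(a) and Lemma \ref{lem:block_what} are for. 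As a side remark, your inertia argument for the sign pattern is an attractive alternative to the paper's variational one, but $R_{sc}$ is an integral operator with continuous kernel, hence compact and \emph{not} boundedly invertible, so Sylvester's law is not available in the strong-correlation case; the conclusion survives because $R\Pi R$ is rank-one positive semidefinite and $R\mathcal{K}R$ is positive semidefinite for \emph{any} self-adjoint $R$, and $\omega _{0}>0$ follows by choosing $f$ with $R_{sc}f$ close to the constant function (injectivity of $R_{sc}^{2}$ gives dense range).

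The genuine gaps are both in your step (c). First, the step you yourself call the \textquotedblleft main obstacle\textquotedblright\ --- uniform-in-$n$ control of the tail quadratic form --- is the crux of the theorem, and you grant it rather than prove it. The missing idea is that no conditional \emph{variance} bound is needed at all: the tail kernel is positive semidefinite, so the difference of quadratic forms is almost surely nonnegative, and Markov's inequality reduces the claim to a conditional \emph{mean} computation, which is exactly the second-moment information the framework supplies. The paper bounds $\mathbb{E}[\hat{\xi}_{n}^{q}-\hat{\xi}_{n}|\mathbf{s}_{n}]\leq \hat{\lambda}_{q+1}\cdot c_{n}^{d}n^{-2}\sum_{l,\ell }\sigma _{B}(c_{n}(s_{l}-s_{\ell }))=\hat{\lambda}_{q+1}O_{p}(1)$ with $\hat{\lambda}_{q+1}\overset{p}{\rightarrow }\lambda _{q+1}\rightarrow 0$; in your population-truncation version the same one-sided argument works with $\sup_{s}\sum_{i>q}\lambda _{i}\varphi _{i}(s)^{2}\rightarrow 0$ by Mercer's theorem. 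Insisting on a variance bound, as you do, would require fourth-moment and mixing conditions for quadratic forms that Lemma \ref{lm:NormalLimit_q} does not provide in the non-Gaussian weak-correlation case. Second, your treatment of the demeaning gap fails as stated: the uniform convergence $\sup_{r,s}|\hat{k}_{n}(r,s)-\bar{k}(r,s)|\overset{p}{\rightarrow }0$ is true but does not control the difference of the \emph{normalized} quadratic forms, because $n^{-2}\sum_{l,\ell }|u_{l}u_{\ell }|$ stays bounded away from zero, so after multiplying by $c_{n}^{d}$ a sup-norm bound of order $O_{p}(n^{-1/2})$ competes against a mass of order $c_{n}^{d}$ --- not $o_{p}(1)$ when, e.g., $c_{n}^{d}\propto n$; taking absolute values destroys exactly the correlation decay the normalization relies on. The term is in fact negligible, but for a structural reason: $\hat{k}_{n}(r,s)-\bar{k}(r,s)=a_{n}(r)+a_{n}(s)+b_{n}$ with $a_{n}(s)=\int k(u,s)d(G-G_{n})(u)$ and $b_{n}$ a scalar, so the quadratic form factors into products of $c_{n}^{d/2}n^{-1}\sum_{l}u_{l}=O_{p}(1)$ (Lemma \ref{lm:NormalLimit_q}) and $c_{n}^{d/2}n^{-1}\sum_{l}u_{l}a_{n}(s_{l})$, which is conditionally mean zero with conditional variance bounded by $(\sup_{s}|a_{n}(s)|)^{2}\cdot O_{p}(1)=O_{p}(n^{-1})$ --- an argument in the spirit of Theorem \ref{thm:what}, not a kernel sup-norm bound. (A smaller, fixable issue: Weyl's inequality gives only entrywise eigenvalue convergence $\omega _{i}^{(q)}\rightarrow \omega _{i}$; to conclude convergence of the limit laws you need summable control of the eigenvalue differences, which the paper obtains from Kato's inequality applied to $R(T-T_{q})R$ in Hilbert--Schmidt norm, followed by mean-square convergence of the Gaussian quadratic forms.)
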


\begin{remark}
Under weak correlation the limit distribution of kernel-based spatial
t-statistics depends on the spatial density $g$, since the eigenvalues of $%
R_{wc}TR_{wc}$ are a function of $g$. This is analogous to the results for
projection estimators discussed above. Thus, in both cases, using a critical
value that is appropriate for i.i.d.~data (that is, setting $\kappa=1$) does
not, in general, lead to valid inference under weak correlation.
\end{remark}

\begin{remark}
The theorem is also applicable to projection estimators using basis
functions $\phi_{i}$ that are orthogonalized using the sample locations
(such as those suggested in \cite{Sun_Kim_2012}) by setting $%
k(r,s)=q^{-1}\sum_{i=1}^{q}\phi_{i}(r)\phi_{i}(s)$.
\end{remark}

\begin{remark}
\label{rmk:asy_bias}The framework of Theorem \ref{thm:kernel_conv} also
sheds light on the asymptotic bias of kernel-based and orthogonal projection
estimators under weak correlation. The estimand $\sigma ^{2}$ is the
limiting variance of $a_{n}^{1/2}n^{-1/2}\sum_{l=1}^{n}u_{l}$, which under
the normalization (\ref{eq:Om_wd_kappa}) is equal to the (single) eigenvalue
of the operator $R_{wc}T_{\sigma ^{2}}R_{wc}$ with $T_{\sigma
^{2}}(f)(s)=\int f(r)dG(r)$, that is $\int (\kappa +(1-\kappa )g(s))dG(s).$
The expectation of $a_{n}\hat{\sigma}_{n}^{2}(\mathbf{\hat{K}}_{n})$
converges to the trace of the operator $R_{wc}T_{\bar{k}}R_{wc}$ with $T_{%
\bar{k}}(f)(s)=\int \overline{k}(s,r)f(r)dG(r)$, that is $\int (\kappa
+(1-\kappa )g(s))\bar{k}(s,s)dG(s)$. Thus, the estimator is asymptotically
unbiased for all $g$ if and only if $\bar{k}(s,s)=1$. For standard choices
of $k$, $k(s,s)=1$, so the only source of asymptotic bias is the demeaning
(and if the estimator $\hat{\sigma}_{n}^{2}$ uses the null value\textbf{%
\thinspace }$\mathbf{y}_{n}-\mu _{0}\mathbf{l}_{n}$ instead of the residuals 
$\mathbf{\hat{u}}_{n}$, the asymptotic bias is zero under the null
hypothesis). Moreover, if $k(r,s)$ concentrates around the `diagonal' where $%
r\approx s$, corresponding to a fixed-$b$ kernel estimator with small $b$,
the demeaning effect is small, as is the asymptotic variability of $a_{n}%
\hat{\sigma}_{n}^{2}(\mathbf{\hat{K}}_{n})$. Thus, fixed-$b$ kernel
estimators with standard kernel choices and small $b$ yield nearly valid and
efficient inference under weak correlation.

In contrast, orthogonal projection estimators where $\bar{k}%
(r,s)=q^{-1}\sum_{i=1}^{q}\phi_{i}(r)\phi_{i}(s)$ do not share this
approximate unbiasedness property, even for $q$ large, since $%
\int\phi_{i}(s)^{2}dG(s)=1$ does not, in general, imply that $\bar{k}%
(s,s)=q^{-1}\sum_{i=1}^{q}\phi_{i}(s)^{2}\approx1$.
\end{remark}

The proof of Theorem \ref{thm:kernel_conv} involves showing that in large
samples, the difference between the eigenfunctions of the sample demeaned
kernel (\ref{eq: kbar_n}) and the population demeaned kernel (\ref{eq:kbar})
becomes small. The following lemma extends and adapts previous results by 
\cite{Rosasco2010} to the case of sample demeaned kernels.

\begin{lemma}
\label{thm:eigenfunc}Let $(\mathbf{\hat{v}}_{i},\hat{\lambda}_{i})$ with $%
\mathbf{\hat{v}}_{i}=(\hat{v}_{i,1},\ldots ,\hat{v}_{i,n})^{\prime }$ be the
eigenvector-eigenvalue pairs of $n^{-1}\mathbf{\hat{K}}_{n}$ with $\hat{%
\lambda}_{1}\geq \hat{\lambda}_{2}\geq \ldots \geq \hat{\lambda}_{n}$ and $%
n^{-1}\mathbf{\hat{v}}_{i}^{\prime }\mathbf{\hat{v}}_{i}=1$. For all $i$
with $\hat{\lambda}_{i}>0,$ define the $\mathcal{S}\mapsto 
\mathbb{R}
$ functions 
\begin{equation}
\hat{\varphi}_{i}(\cdot )=n^{-1}\hat{\lambda}_{i}^{-1}\sum_{l=1}^{n}\hat{v}%
_{i,l}\hat{k}_{n}(\cdot ,s_{l}).  \label{eq:phi_hat}
\end{equation}%
Let $\lambda _{(j)}$, $j=1,\ldots $ be the unique positive values of $%
\lambda _{i}$, ordered descendingly, and suppose $\lambda _{(j)}$ has
multiplicity $m_{j}\geq 1$. Then for any $p$ such that $\lambda _{(p)}>0$,

(a) there exist rotation matrices $\mathbf{\hat{O}}_{(j)}$ of dimension $%
m_{j}\times m_{j}$, $j=1,\ldots,p$ such that with $q=\sum_{j=1}^{p}m_{j}$, $%
\mathbf{\varphi}=(\varphi_{1},\ldots,\varphi_{q})^{\prime}$ and $\mathbf{%
\hat{\varphi}}=(\hat{\varphi}_{1},\ldots,\hat{\varphi}_{q})^{\prime}$, 
\begin{equation*}
\sup_{s\in\mathcal{S}}||\mathbf{\varphi}(s)-\limfunc{diag}(\mathbf{\hat{O}}%
_{(1)},\ldots,\mathbf{\hat{O}}_{(p)})\mathbf{\hat{\varphi}}%
(s)||=O_{p}(n^{-1/2})\text{;}
\end{equation*}

(b) $\sum_{i=1}^{q}(\hat{\lambda}_{i}-\lambda_{i})^{2}=O_{p}(n^{-1})$.
\end{lemma}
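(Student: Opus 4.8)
The plan is to recognize the population- and sample-demeaned kernels as \emph{centered} kernels, which recasts the lemma as a statement about centered kernel principal components and lets me adapt the spectral concentration arguments of \cite{Rosasco2010}. Let $H$ be the RKHS associated with the continuous kernel $k$, with feature map $s\mapsto k_{s}:=k(\cdot,s)$ and reproducing property $\langle k_{s},k_{r}\rangle_{H}=k(s,r)$, and define the centered feature maps $\tilde\Phi(s)=k_{s}-\int k_{u}\,dG(u)$ and $\tilde\Phi_{n}(s)=k_{s}-n^{-1}\sum_{j}k_{s_{j}}$. A direct computation using the reproducing property shows $\langle\tilde\Phi(s),\tilde\Phi(r)\rangle_{H}=\bar{k}(s,r)$ and $\langle\tilde\Phi_{n}(s),\tilde\Phi_{n}(r)\rangle_{H}=\hat{k}_{n}(s,r)$, so the demeaned kernels of (\ref{eq:kbar}) and (\ref{eq: kbar_n}) are exactly the Gram kernels of the centered maps. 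Consequently the population pairs $(\lambda_{i},\varphi_{i})$ are the spectral data of the centered covariance operator $C=\int\tilde\Phi(s)\otimes\tilde\Phi(s)\,dG(s)$ on $H$, while the matrix $n^{-1}\mathbf{\hat{K}}_{n}$ shares its nonzero eigenvalues with the empirical operator $\hat{C}=n^{-1}\sum_{l}\tilde\Phi_{n}(s_{l})\otimes\tilde\Phi_{n}(s_{l})$, by the standard kernel-PCA duality between the Gram matrix and the covariance operator. Under this duality a short calculation identifies the Nyström extension (\ref{eq:phi_hat}) with $\hat\varphi_{i}(\cdot)=\hat\lambda_{i}^{-1/2}\langle\tilde\Phi_{n}(\cdot),\hat{u}_{i}\rangle_{H}$, where $\hat{u}_{i}$ is the unit-norm eigenvector of $\hat{C}$ for $\hat\lambda_{i}$, and likewise $\varphi_{i}(\cdot)=\lambda_{i}^{-1/2}\langle\tilde\Phi(\cdot),u_{i}\rangle_{H}$.

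The heart of the argument is a single concentration bound, $\Vert\hat{C}-C\Vert_{HS}=O_{p}(n^{-1/2})$. Writing $C=\Lambda-\mu\otimes\mu$ and $\hat{C}=\hat\Lambda-\hat\mu\otimes\hat\mu$ with $\mu=\int k_{s}\,dG$, $\hat\mu=n^{-1}\sum_{l}k_{s_{l}}$, $\Lambda=\int k_{s}\otimes k_{s}\,dG$ and $\hat\Lambda=n^{-1}\sum_{l}k_{s_{l}}\otimes k_{s_{l}}$, I would treat the covariance term and the centering term separately. Because $\mathcal{S}$ is compact and $k$ continuous, $\Vert k_{s}\Vert_{H}^{2}=k(s,s)$ and $\Vert k_{s}\otimes k_{s}\Vert_{HS}=k(s,s)$ are uniformly bounded, so $\hat\mu$ and $\hat\Lambda$ are i.i.d.~averages of bounded Hilbert-space-valued summands; a second-moment (equivalently, Hilbert-space Bernstein) argument gives $\Vert\hat\mu-\mu\Vert_{H}=O_{p}(n^{-1/2})$ and $\Vert\hat\Lambda-\Lambda\Vert_{HS}=O_{p}(n^{-1/2})$, and telescoping $\hat\mu\otimes\hat\mu-\mu\otimes\mu$ with the boundedness of $\mu,\hat\mu$ then yields the claim. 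This is the step that \emph{adapts} \cite{Rosasco2010} to sample demeaning: their bound controls $\hat\Lambda-\Lambda$, and the only additional work is the lower-order mean/centering terms, dispatched by the same law of large numbers.

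Given the concentration bound, part (b) is immediate from the Hoffman--Wielandt inequality in its Hilbert--Schmidt form, $\sum_{i}(\hat\lambda_{i}-\lambda_{i})^{2}\leq\Vert\hat{C}-C\Vert_{HS}^{2}=O_{p}(n^{-1})$, together with the fact that the nonzero eigenvalues of $n^{-1}\mathbf{\hat{K}}_{n}$ and $\hat{C}$ coincide; restricting the sum to the first $q$ terms gives the stated bound. Part (a) follows from perturbation theory for spectral projections. Since $\lambda_{(p)}>0$ and the distinct eigenvalues $\lambda_{(1)}>\cdots>\lambda_{(p)}$ are separated by a fixed (non-vanishing) gap, part (b) guarantees that for large $n$ the corresponding top block of $\hat{C}$ is gapped as well, and a Davis--Kahan / Zwald--Blanchard $\sin\Theta$ bound gives $\Vert\hat{P}_{(j)}-P_{(j)}\Vert=O_{p}(n^{-1/2})$ for the eigenprojection onto each $m_{j}$-dimensional eigenspace. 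The within-block indeterminacy of eigenvectors is exactly what the orthogonal rotations $\mathbf{\hat{O}}_{(j)}$ absorb, so within each block one may choose (rotated) $\hat{u}_{i}$ with $\Vert\hat{u}_{i}-u_{i}\Vert_{H}=O_{p}(n^{-1/2})$.

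To pass from these Hilbert-space bounds to the \emph{uniform}-in-$s$ statement, I would decompose $\hat\varphi_{i}(s)-\varphi_{i}(s)$ into three inner products reflecting (i) $\tilde\Phi_{n}(s)-\tilde\Phi(s)=-(\hat\mu-\mu)$, which does not depend on $s$, (ii) $\hat{u}_{i}-u_{i}$, and (iii) $\hat\lambda_{i}^{-1/2}-\lambda_{i}^{-1/2}$. Since $\Vert\tilde\Phi(s)\Vert_{H}$ is uniformly bounded on $\mathcal{S}$, Cauchy--Schwarz turns each Hilbert-space bound into a sup-norm bound, and $\lambda_{i}\geq\lambda_{(p)}>0$ keeps the reciprocal-square-root term well behaved, yielding the $O_{p}(n^{-1/2})$ rate uniformly over $s$. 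The main obstacle is the bookkeeping in part (a): simultaneously managing the rotation freedom across repeated eigenvalues and converting operator-level perturbation bounds into a sup-norm over $s$. The centered feature-map representation is precisely what makes this routine, because $\Vert\tilde\Phi(s)\Vert_{H}$ is bounded and the awkward centering discrepancy $\hat\mu-\mu$ is constant in $s$, so it contributes uniformly without any empirical-process argument.
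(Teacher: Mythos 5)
Your proposal is correct, and it reaches the lemma by a genuinely different technical route than the paper. The paper handles the demeaning at the level of \emph{operators}: it works in the RKHS $\mathcal{H}$ of the augmented kernel $k_{0}=\bar{k}+1$ (the augmentation is needed precisely so that constants lie in $\mathcal{H}$), introduces centering operators $M$ and $M_{n}$, writes $\bar{L}=MLM$ and $\hat{L}_{n}=M_{n}L_{n}M_{n}$, and proves $\Vert\hat{L}_{n}-\bar{L}\Vert_{HS}=O_{p}(n^{-1/2})$ by splitting into $\Vert\hat{L}_{n}-\bar{L}_{n}\Vert_{HS}$ (controlled via a Hoeffding bound on $\Vert M_{n}-M\Vert$) and $\Vert\bar{L}_{n}-\bar{L}\Vert_{HS}$ (controlled by the original argument of \cite{Rosasco2010}). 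You instead handle the demeaning at the level of the \emph{feature map}: you stay in the RKHS of $k$ itself, identify $\bar{k}$ and $\hat{k}_{n}$ as Gram kernels of centered feature maps, and prove $\Vert\hat{C}-C\Vert_{HS}=O_{p}(n^{-1/2})$ by telescoping $\hat{C}-C=(\hat\Lambda-\Lambda)-(\hat\mu\otimes\hat\mu-\mu\otimes\mu)$ and applying a Hilbert-space law of large numbers to the two i.i.d.~averages. The two decompositions isolate the same two effects (empirical versus population integral operator, and empirical versus population mean), but yours avoids the augmented kernel and the verification that $\Vert 1\Vert_{\mathcal{H}}<\infty$, which is the fussiest part of the paper's setup; your identification of (\ref{eq:phi_hat}) with $\hat\lambda_{i}^{-1/2}\langle\tilde\Phi_{n}(\cdot),\hat{u}_{i}\rangle$ is also exactly right under the paper's normalization $n^{-1}\mathbf{\hat{v}}_{i}^{\prime}\mathbf{\hat{v}}_{i}=1$. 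Downstream, both arguments use the same three tools --- Kato/Hoffman--Wielandt-type eigenvalue bounds for part (b), spectral-projection perturbation for part (a) (the paper quotes Proposition 6 of \cite{Rosasco2010}; you invoke Davis--Kahan/Zwald--Blanchard), and the reproducing-property bound $\sup_{s}|f(s)|\leq\sqrt{K}\Vert f\Vert$ to upgrade Hilbert-norm bounds to sup-norm bounds --- but you apply the projection bound \emph{per eigenvalue block}, which delivers the block-diagonal rotation directly, whereas the paper first obtains a single $q\times q$ rotation via the polar factor of the cross inner-product matrix $\mathbf{\tilde{O}}_{n}$ and then runs an induction over blocks to make it block-diagonal. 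Your route buys cleaner algebra and a more modular part (a); the paper's route buys the ability to reuse its operator machinery (and RBV's propositions nearly verbatim) in the proof of Theorem \ref{thm:kernel_conv}. The one gloss in your sketch is the within-block mixing of the factors $\hat\lambda_{j}^{-1/2}$ when a rotation is applied to $\mathbf{\hat{\varphi}}$ (the paper absorbs this with the correction factors $\hat{r}_{i}=\sqrt{\lambda_{i}/\hat\lambda_{i}}$), but since population eigenvalues are equal within a block, $\lambda_{(p)}>0$, and part (b) gives $|\hat\lambda_{j}-\lambda_{j}|=O_{p}(n^{-1/2})$, this is routine bookkeeping rather than a gap.
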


Part (a) shows convergence of the eigenspace corresponding to unique
eigenvalues, and part (b) shows convergence of the eigenvalues.

\subsubsection{\label{sec:SCPCconv}SCPC t-statistic}

Beyond its use in the proof of Theorem \ref{thm:kernel_conv}, Lemma \ref%
{thm:eigenfunc} can be used to establish the large sample distribution of
the SCPC t-statistic for nonrandom $q$ and critical value $\func{cv}$. Note
that in this application of Lemma \ref{thm:eigenfunc}, we are interested in
the eigenfunctions of the covariance kernel $k^{0}(r,s)=%
\sigma_{u}^{0}(r-s|c_{0})$ of the benchmark model, rather than the
eigenfunctions of a kernel that defines a kernel-based variance estimator.

Recall from Section \ref{sec:SCPC} that $\mathbf{r}_{i}$ is the eigenvector
of $\mathbf{M}_{n}\mathbf{\Sigma }_{n}(c_{0})\mathbf{M}_{n}$ corresponding
to the $i$th largest eigenvalue, normalized to satisfy $n^{-1}\mathbf{r}%
_{i}^{\prime }\mathbf{r}_{i}=1$. Let $\varphi _{i}^{0}$ be the eigenfunction
of the kernel $\overline{k}^{0}(r,s)$ corresponding to the $i$th largest
eigenvalue $\lambda _{i}^{0}$, where $k^{0}(r,s)=\sigma _{u}^{0}(r-s|c_{0})$
and $\bar{k}^{0}$ is the demeaned version of $k^{0}$ in analogy to (\ref%
{eq:kbar}). Lemma \ref{thm:eigenfunc} and a slightly extended version of
Theorem \ref{thm:what} (see Lemma \ref{lem:block_what} in the appendix) then
yields the following corollary.

\begin{corollary}
\label{cor:SCPC_conv}Suppose $\lambda _{q}^{0}>\lambda _{q+1}^{0}$. Then
Theorem \ref{thm:tau_finite_q} holds for $\tau _{\text{SCPC}}^{2}(q)=\tau
_{n}^{2}(q^{-1}\sum_{i=1}^{q}\mathbf{r}_{i}\mathbf{r}_{i}^{\prime })$ with $%
\mathbf{w}(s)=(\varphi _{1}^{0}(s),\ldots ,\varphi _{q}^{0}(s))^{\prime }/%
\sqrt{q}$.
\end{corollary}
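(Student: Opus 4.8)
The plan is to cast SCPC as an orthogonal-projection estimator whose estimated weights are the sample eigenfunctions of the demeaned benchmark covariance kernel, and then to deduce its limiting distribution from the estimated-weights result Theorem \ref{thm:what} together with the eigenfunction-convergence result Lemma \ref{thm:eigenfunc}. First I would observe that $k^{0}(r,s)=\sigma _{u}^{0}(r-s|c_{0})=\exp (-c_{0}\Vert r-s\Vert )$ is a continuous positive semidefinite kernel, and that with $\mathbf{K}_{n}^{0}=\mathbf{\Sigma }_{n}(c_{0})$ its sample-demeaned matrix \eqref{eq: kbar_n} is exactly $\hat{\mathbf{K}}_{n}^{0}=\mathbf{M}_{n}\mathbf{\Sigma }_{n}(c_{0})\mathbf{M}_{n}$. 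Hence the normalized eigenvectors $\mathbf{r}_{i}$ are precisely the eigenvectors $\mathbf{\hat{v}}_{i}$ of $n^{-1}\hat{\mathbf{K}}_{n}^{0}$ appearing in Lemma \ref{thm:eigenfunc}, and SCPC is the projection estimator with $\mathbf{\hat{W}}_{n}=(\mathbf{r}_{1},\ldots ,\mathbf{r}_{q})/\sqrt{q}$, so that $\tau _{\text{SCPC}}^{2}(q)=\tau _{n}^{2}(\mathbf{\hat{W}}_{n}\mathbf{\hat{W}}_{n}^{\prime })$.

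The bridge to Theorem \ref{thm:what} is to exhibit these weights as a single map evaluated at the sample locations. Define $\mathbf{\hat{w}}(s)=(\hat{\varphi}_{1}^{0}(s),\ldots ,\hat{\varphi}_{q}^{0}(s))^{\prime }/\sqrt{q}$, where $\hat{\varphi}_{i}^{0}$ is the sample eigenfunction \eqref{eq:phi_hat} built from $\hat{\mathbf{K}}_{n}^{0}$. Because $\mathbf{\hat{v}}_{i}$ solves $n^{-1}\hat{\mathbf{K}}_{n}^{0}\mathbf{\hat{v}}_{i}=\hat{\lambda}_{i}\mathbf{\hat{v}}_{i}$, evaluating \eqref{eq:phi_hat} at a sample point $s_{m}$ collapses to $\hat{\varphi}_{i}^{0}(s_{m})=\hat{\lambda}_{i}^{-1}n^{-1}[\hat{\mathbf{K}}_{n}^{0}\mathbf{\hat{v}}_{i}]_{m}=\hat{v}_{i,m}=r_{i,m}$. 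Thus the $l$th row of $\mathbf{\hat{W}}_{n}$ equals $\mathbf{\hat{w}}(s_{l})^{\prime }$, so SCPC is exactly a projection estimator with estimated weights of the form to which Theorem \ref{thm:what} applies.

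It then remains to verify the uniform-convergence input \eqref{eq:what_conv}. The gap hypothesis $\lambda _{q}^{0}>\lambda _{q+1}^{0}$ ensures that $q=\sum_{j=1}^{p}m_{j}$ for some $p$ with $\lambda _{(p)}^{0}>0$, i.e.\ the leading $q$ eigenfunctions fill out complete eigenspaces of $\bar{k}^{0}$. Lemma \ref{thm:eigenfunc}(a) then supplies block-diagonal rotations $\mathbf{\hat{O}}_{(1)},\ldots ,\mathbf{\hat{O}}_{(p)}$ with $\sup_{s\in \mathcal{S}}\Vert (\varphi _{1}^{0}(s),\ldots ,\varphi _{q}^{0}(s))^{\prime }-\limfunc{diag}(\mathbf{\hat{O}}_{(1)},\ldots ,\mathbf{\hat{O}}_{(p)})(\hat{\varphi}_{1}^{0}(s),\ldots ,\hat{\varphi}_{q}^{0}(s))^{\prime }\Vert =O_{p}(n^{-1/2})$. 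Since the constant first component of $\mathbf{\hat{w}}^{0}$ coincides with that of $\mathbf{w}^{0}$, this is exactly uniform convergence of $\mathbf{\hat{w}}^{0}$ to $\mathbf{w}^{0}$ up to a block rotation, with $\mathbf{w}(s)=(\varphi _{1}^{0}(s),\ldots ,\varphi _{q}^{0}(s))^{\prime }/\sqrt{q}$. The block-rotation-invariant version of Theorem \ref{thm:what} (Lemma \ref{lem:block_what}) then identifies the limiting conditional law of $\tau _{\text{SCPC}}^{2}(q)$ with that of $\tau _{n}^{2}(\mathbf{W}_{n}\mathbf{W}_{n}^{\prime })$ for these deterministic weights, which is the content of Theorem \ref{thm:tau_finite_q}.

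The main obstacle is the rotation indeterminacy inside eigenspaces of repeated eigenvalues: Lemma \ref{thm:eigenfunc} pins down $\mathbf{\hat{w}}$ only up to the $\mathbf{\hat{O}}_{(j)}$, so $\mathbf{\hat{w}}$ need not converge to $\mathbf{w}$ on the nose and the plain Theorem \ref{thm:what} does not apply verbatim. The resolution is that $\tau _{n}^{2}$ depends on the weights only through the projection $\mathbf{\hat{W}}_{n}\mathbf{\hat{W}}_{n}^{\prime }$, equivalently through rotation-invariant quadratic forms in the weight block entering $\mathbf{\Omega }$, so a block-diagonal orthogonal rotation leaves the limit unchanged; formalizing this invariance is precisely the role of Lemma \ref{lem:block_what}. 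The spectral-gap condition is what guarantees that the relevant block structure closes off exactly at coordinate $q$, so that the leading $q$-dimensional projection, and hence its limit, is well defined.
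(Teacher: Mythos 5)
Your proposal is correct and follows essentially the same route as the paper: the paper's own justification is precisely to identify the $\mathbf{r}_{i}$ with the sample eigenfunctions $\hat{\varphi}_{i}^{0}$ evaluated at the locations, invoke Lemma \ref{thm:eigenfunc} under the spectral-gap condition, and use Lemma \ref{lem:block_what} as the rotation-invariant extension of Theorem \ref{thm:what} to conclude via Theorem \ref{thm:tau_finite_q}. Your write-up merely makes explicit the steps the paper leaves implicit (the identity $\hat{\varphi}_{i}^{0}(s_{m})=r_{i,m}$ and the role of the gap in closing off complete eigenspaces at coordinate $q$), which is exactly the intended argument.
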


\section{Size control of spatial t-statistics}

This section presents two results on size control of spatial t-statistics,
the first asymptotic and the second a finite-sample result, and applies
these to SCPC.

\subsection{Asymptotic size control under weak correlation}

As discussed above (see equation (\ref{eq:Om_wd_kappa})), under weak
correlation, the asymptotic rejection probability of $\tau _{n}$ for finite $%
q$ can be studied via $\mathbf{\Omega }_{wc}(\kappa )=\kappa \mathbf{V}%
_{1}+(1-\kappa )\mathbf{V}_{2}$, where the covariance function of $u$ and
the sequence $c_{n}$ affects the large-sample distribution of $\tau _{n}$
only through the scalar $\kappa \in \lbrack 0,1)$. Thus, if $\overline{\func{%
cv}}$ is such that $\sup_{0\leq \kappa <1}\mathbb{P}\left(
\sum_{i=0}^{q}\omega _{i}(\kappa ,\overline{\func{cv}})Z_{i}^{2}>0\right)
=\alpha $, where $\{\omega _{i}(\kappa ,\overline{\func{cv}})\}_{i=0}^{q}$
are the eigenvalues of $\mathbf{A}(\kappa ,\overline{\func{cv}})=\mathbf{D}(%
\overline{\func{cv}})\mathbf{\Omega }_{wc}(\kappa )$, then setting $\func{cv}%
_{n}\geq \overline{\func{cv}}$ for all $n$ yields inference that is
asymptotically robust under all forms of weak correlation covered by Theorem %
\ref{lm:NormalLimit_q} (ii). In the case of a kernel-based variance
estimator, the same holds as long as $\overline{\func{cv}}$ satisfies $%
\sup_{0\leq \kappa <1}\mathbb{P}\left( \sum_{i=0}^{\infty }\omega
_{i}(\kappa ,\overline{\func{cv}})Z_{i}^{2}>0\right) =\alpha $ where $%
\{\omega _{i}(\kappa ,\overline{\func{cv}})\}_{i=0}^{\infty }$ are the
eigenvalues of the linear operator $L(f)(s)=\int \sqrt{\kappa +(1-\kappa
)g(s)}\left( 1-\overline{\func{cv}}^{2}\overline{k}(s,r)\right) \sqrt{\kappa
+(1-\kappa )g(r)}f(r)dG(r)$.

The value $\overline{\func{cv}}$ depends on the spatial density $g$, which
can be seen directly by inspecting the form of $\mathbf{\Omega}_{wc}$ and
the operator $L$. In principle, one could use these expressions to estimate $%
\overline{\func{cv}}$ directly. But this would involve estimates of the
spatial density $g$, which leads to difficult bandwidth an other choices. We
now discuss a simpler approach.

Consider a benchmark model $B^{0}$ that satisfies the assumptions of Theorem %
\ref{lm:NormalLimit_q} (ii), such as the Gaussian exponential model
introduced in Section \ref{sec:SCPC}. Let $\sigma_{B}^{0}$ denote the
covariance kernel of $B^{0}$, and suppose $c_{n,0}$, is chosen so that $%
a_{n,0}=c_{n,0}^{d}/n\rightarrow a_{0}=0.$ For instance, $c_{n,0}=c_{0}>0$
satisfies this condition, as does $c_{n,0}=n^{1/d}/\log(n)$. Note that for
this model $\kappa=0$. Suppose $\func{cv}_{n}=\func{cv}_{n}(\mathbf{s}_{n})$
satisfies 
\begin{equation}
\sup_{c\geq c_{n,0}}\mathbb{P}_{\mathbf{\Sigma}(c)}^{0}(\tau_{n}^{2}\geq%
\func{cv}_{n}^{2}|\mathbf{s}_{n})\leq\alpha  \label{eq:robust_cv_n}
\end{equation}
where $\mathbb{P}_{\mathbf{\Sigma}(c)}^{0}$ is computed under the benchmark
model, that is under $\mathbf{u}_{n}|\mathbf{s}_{n}\sim\mathcal{N}(0,\mathbf{%
\Sigma}(c))$ with $\mathbf{\Sigma}(c)$ the covariance matrix of $%
(B^{0}(cs_{1}),...,B^{0}(cs_{n}))^{\prime}$.

\begin{theorem}
\label{thm:cv_n}Let $\func{cv}_{n}^{2}$ satisfy (\ref{eq:robust_cv_n}).
Under weak correlation in the sense of Lemma \ref{lm:NormalLimit_q} (ii),
for t-statistics covered by Theorems \ref{thm:tau_finite_q}, \ref{thm:what}, %
\ref{thm:kernel_conv} and Corollary \ref{cor:SCPC_conv}, $\max (\overline{%
\func{cv}}^{2}-\func{cv}_{n}^{2},0)\overset{p}{\rightarrow }0$.
Consequently, for any $\epsilon >0$, $\lim \sup_{n}\mathbb{P(P}(\tau
_{n}^{2}>\func{cv}_{n}^{2}|\mathbf{s}_{n})>\alpha +\epsilon )\rightarrow 0$,
so that $\lim \sup_{n}\mathbb{P}(\tau _{n}^{2}\geq \func{cv}_{n}^{2})\leq
\alpha $.
\end{theorem}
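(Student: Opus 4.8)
The plan is to exploit the fact that the benchmark constraint (\ref{eq:robust_cv_n}), which demands size control for \emph{every} $c\ge c_{n,0}$, implicitly forces $\func{cv}_n$ to dominate the oracle value $\overline{\func{cv}}$: along suitable deterministic sequences $c=c_n\ge c_{n,0}$, the benchmark Gaussian model reproduces \emph{all} weak-correlation limits $\kappa\in[0,1)$ that define $\overline{\func{cv}}$. Write $\pi(\func{cv},\kappa)=\mathbb{P}(\sum_{i\ge0}\omega_i(\kappa,\func{cv})Z_i^2>0)$ for the limiting rejection probability of Theorems \ref{thm:tau_finite_q} and \ref{thm:kernel_conv}, so that $\overline{\func{cv}}$ solves $\sup_{0\le\kappa<1}\pi(\overline{\func{cv}},\kappa)=\alpha$, and set $\pi_n(\func{cv},c)=\mathbb{P}_{\mathbf{\Sigma}(c)}^{0}(\tau_n^2\ge\func{cv}^2|\mathbf{s}_n)$.

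First I would record the key correspondence. For the benchmark field $B^0$ and a sequence $c_n\to\infty$ with $c_n^d/n\to a\in[0,\infty)$, Lemma \ref{lm:NormalLimit_q}(ii) applies with $B=B^0$ and, after the scale normalization (\ref{eq:Om_wd_kappa}), gives $\mathbf{\Omega}_{wc}=\kappa\mathbf{V}_1+(1-\kappa)\mathbf{V}_2$ with $\kappa=a\sigma_{B^0}(0)/(a\sigma_{B^0}(0)+\int\sigma_{B^0})$. Since $\mathbf{V}_1$ and $\mathbf{V}_2$ depend only on the weights and the density $g$, they coincide with those entering $\pi$; and as $a$ ranges over $[0,\infty)$, $\kappa$ sweeps out $[0,1)$. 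Thus for any target $\kappa^\ast\in[0,1)$ I can choose $c_n^\ast=c_n(\kappa^\ast)\ge c_{n,0}$ with the matching $a$ (feasible because $c_{n,0}^d/n\to0$, so any slowly growing $c_n^\ast\ge c_{n,0}$ is admissible), and Theorems \ref{thm:tau_finite_q}, \ref{thm:what}, \ref{thm:kernel_conv} and Corollary \ref{cor:SCPC_conv} yield, for each fixed $\func{cv}$, $\pi_n(\func{cv},c_n^\ast)\overset{p}{\rightarrow}\pi(\func{cv},\kappa^\ast)$. The main obstacle is this step: one must check that convergence holds along a \emph{growing} sequence $c_n^\ast$ (not a fixed $c$), that the limiting $\mathbf{V}_1,\mathbf{V}_2$ are exactly the oracle's, and that $\pi$ is jointly continuous with $\pi(\cdot,\kappa)$ strictly decreasing --- the latter guaranteeing both attainment of the supremum and the strict inequality used below.

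Next I would prove $\max(\overline{\func{cv}}^2-\func{cv}_n^2,0)\overset{p}{\rightarrow}0$. Fix $\epsilon>0$ and pick $\epsilon'>0$ with $\overline{\func{cv}}^2-(\overline{\func{cv}}-\epsilon')^2<\epsilon$. Let $\kappa^\ast\in[0,1)$ attain the supremum defining $\overline{\func{cv}}$, so $\pi(\overline{\func{cv}},\kappa^\ast)=\alpha$; strict monotonicity in the critical value gives $\pi(\overline{\func{cv}}-\epsilon',\kappa^\ast)>\alpha$. Along $c_n^\ast=c_n(\kappa^\ast)$ the previous step yields $\pi_n(\overline{\func{cv}}-\epsilon',c_n^\ast)\overset{p}{\rightarrow}\pi(\overline{\func{cv}}-\epsilon',\kappa^\ast)>\alpha$, hence $\mathbb{P}(\pi_n(\overline{\func{cv}}-\epsilon',c_n^\ast)>\alpha)\to1$. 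But feasibility (\ref{eq:robust_cv_n}) forces $\pi_n(\func{cv}_n,c_n^\ast)\le\alpha$ for every $n$, while $\pi_n(\cdot,c)$ is nonincreasing in its critical-value argument. Therefore, on the event $\{\func{cv}_n<\overline{\func{cv}}-\epsilon'\}$ one would have $\pi_n(\func{cv}_n,c_n^\ast)\ge\pi_n(\overline{\func{cv}}-\epsilon',c_n^\ast)$, which exceeds $\alpha$ with probability tending to one --- contradicting $\pi_n(\func{cv}_n,c_n^\ast)\le\alpha$. Consequently $\mathbb{P}(\func{cv}_n^2<\overline{\func{cv}}^2-\epsilon)\to0$, as required. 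The randomness of $\func{cv}_n$ is handled entirely by this monotonicity bound, which reduces everything to convergence at the \emph{fixed} value $\overline{\func{cv}}-\epsilon'$.

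Finally, the coverage consequence follows by applying the same limit theorems to the \emph{actual} weakly correlated data, which have some true $\kappa_0\in[0,1)$. For fixed $\func{cv}$, $\mathbb{P}(\tau_n^2>\func{cv}^2|\mathbf{s}_n)\overset{p}{\rightarrow}\pi(\func{cv},\kappa_0)$, and $\pi(\overline{\func{cv}},\kappa_0)\le\alpha$ by definition of $\overline{\func{cv}}$. On the high-probability event $\{\func{cv}_n>\overline{\func{cv}}-\epsilon'\}$ supplied by the first part, monotonicity gives $\mathbb{P}(\tau_n^2>\func{cv}_n^2|\mathbf{s}_n)\le\mathbb{P}(\tau_n^2>(\overline{\func{cv}}-\epsilon')^2|\mathbf{s}_n)\overset{p}{\rightarrow}\pi(\overline{\func{cv}}-\epsilon',\kappa_0)$, which by continuity is below $\alpha+\epsilon$ once $\epsilon'$ is small; this establishes $\limsup_n\mathbb{P}(\mathbb{P}(\tau_n^2>\func{cv}_n^2|\mathbf{s}_n)>\alpha+\epsilon)\to0$. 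Taking expectations of the bounded conditional rejection probability (which is at most $\alpha+\epsilon$ off an event of vanishing probability) gives $\limsup_n\mathbb{P}(\tau_n^2\ge\func{cv}_n^2)\le\alpha+\epsilon$, and letting $\epsilon\downarrow0$ completes the argument.
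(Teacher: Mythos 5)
Your proposal takes essentially the same route as the paper's proof: reproduce every weak-correlation limit $\kappa \in [0,1)$ inside the benchmark model via sequences $c_{n}^{\ast }\geq c_{n,0}$ with $c_{n}^{\ast d}/n\rightarrow a$, combine the feasibility constraint (\ref{eq:robust_cv_n}) with monotonicity of rejection probabilities in the critical value to force $\func{cv}_{n}^{2}\geq \overline{\func{cv}}^{2}-o_{p}(1)$, and then deduce the coverage statement by applying the limit theorems to the true data-generating process. The paper's proof is a contradiction argument with the same three ingredients; where you invoke monotonicity of $\pi _{n}(\cdot ,c_{n}^{\ast })$, it instead introduces the exact benchmark critical value $\func{cv}_{n,1}$ solving $\mathbb{P}_{\mathbf{\Sigma }(c_{n,1})}^{0}(\tau _{n}^{2}\geq \func{cv}_{n,1}^{2}|\mathbf{s}_{n})=\alpha $ and notes $\func{cv}_{n,1}^{2}\leq \func{cv}_{n}^{2}$; this is a cosmetic difference.

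The one step that does not hold as written is "let $\kappa ^{\ast }\in \lbrack 0,1)$ attain the supremum defining $\overline{\func{cv}}$," justified by the claim that strict monotonicity of $\pi $ in the critical value guarantees attainment. It does not: the supremum is over the non-compact set $[0,1)$, monotonicity in the \emph{other} argument says nothing about where the supremum in $\kappa $ sits, and for non-uniform $g$ nothing rules out that $\sup_{0\leq \kappa <1}\pi (\overline{\func{cv}},\kappa )$ is approached only as $\kappa \rightarrow 1$, in which case no maximizer exists and your chain of inequalities has no starting point. The paper circumvents exactly this issue: rather than assuming attainment, it uses continuity of $\varkappa (\kappa ,\func{cv}^{2})$ to produce a pair $(\kappa _{0},\overline{\func{cv}}_{0}^{2})$ with $\varkappa (\kappa _{0},\overline{\func{cv}}_{0}^{2})=\alpha $ exactly and $\overline{\func{cv}}^{2}-\delta /2\leq \overline{\func{cv}}_{0}^{2}\leq \overline{\func{cv}}^{2}$, which is all the contradiction argument needs. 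Your proof can be repaired the same way (or by extending $\pi $ continuously to $\kappa =1$ and working with near-maximizers, for which the needed strict inequality $\pi (\overline{\func{cv}}-\epsilon ^{\prime },\kappa )>\alpha $ still obtains for $\kappa $ close enough to the supremum), with no change to its overall structure; so this is a fixable technical slip rather than a flaw in the approach.
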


The intuition for Theorem \ref{thm:cv_n} is as follows. The critical value $%
\func{cv}_{n}$ in (\ref{eq:robust_cv_n}) is valid in the benchmark model for
all $c\geq c_{n,0}$ and $n$. Thus, it is also valid along arbitrary
sequences $c_{n}\geq c_{n,0}$. Since the $c_{n,0}$ model has $\kappa=0$,
there exists sequences $c_{n}\geq c_{n,0}$ that induce any $%
\kappa\in\lbrack0,1)$ in the benchmark model; thus different sequences $%
c_{n} $ in the benchmark model recreate any possible limit distribution
under generic weak correlation, so that size control in the benchmark model
for all $c\geq c_{n,0}$ translates into size control under generic weak
correlation.

\subsubsection{Implications for SCPC}

For SCPC, the benchmark covariance kernel for $B^{0}$ is exponential $\sigma
_{B}^{0}(r,s)=\exp (-||r-s||)$ and (from equation (\ref{eq:SCPC cvq})) the
critical value is chosen to satisfy (\ref{eq:robust_cv_n}) with equality.
Thus, with a fixed value of $c_{0}$, the SCPC t-test $\tau _{\text{SCPC}}(q)$
controls size in large samples under generic weak correlation.\footnote{%
Technically, the SCPC choice of $q$ in (\ref{eq:SCPC Length}) is also a
function of the locations of $\mathbf{s}_{n}$, so $q_{\text{SCPC}}$ is
random. However, the argument that establishes Theorem \ref{thm:cv_n} can be
extended under this complication as long as $q_{\text{SCPC}}\leq q_{\max }$
almost surely for some finite and fixed $q_{\max }$. See Theorem \ref%
{thm:asysize_qhat} in the appendix for a formal statement.}

In addition and by construction, the SCPC critical value is chosen to
satisfy the size constraint for all values of $c\geq c_{0}$ in the benchmark
model. Thus, size is controlled by construction also in strong-correlation
models with exponential covariance kernels for all $c\geq c_{0}$.

\subsection{\label{subsec:Finite-sample-size-control}Finite sample size
control in the Gaussian model}

The asymptotic results of the last subsection are comforting, but in finite
samples, the robustness of a spatial t-statistic with critical value chosen
according to (\ref{eq:robust_cv_n}) still depends on the choice of $c_{n,0}$
and the benchmark model. This motivates investigating size control in finite
samples, which potentially includes `strong' correlation cases.

We restrict attention to Gaussian models where $\mathbf{y}\sim \mathcal{N}(%
\mathbf{l}\mu ,\mathbf{\Sigma })$ for some $\mathbf{\Sigma }$ and implicitly
condition on $\mathbf{s}$, and we also omit the dependence on $n$ to ease
notation. In this finite sample conditional framework, the distinction
between $\mathbf{W}$ and $\mathbf{\hat{W}}$ is immaterial, so for
simplicity, we write $\tau ^{2}(\mathbf{W}\mathbf{W}^{\prime })$ for the
t-statistic.\footnote{%
This also covers kernel variance estimators by setting $q=T-1$ and using the
Choleksy decomposition $\mathbf{MKM}=\mathbf{WW}^{\prime }$.}

Let $\mathcal{V}$ denote a set of covariance matrices. A test using the
t-statistic $\tau ^{2}(\mathbf{W}\mathbf{W}^{\prime })$ with critical value $%
\func{cv}$ is robust for\emph{\ }$\mathcal{V}$ if $\sup_{\mathbf{\Sigma }\in 
\mathcal{V}}\mathbb{P}_{\mathbf{\Sigma }}(\tau ^{2}(\mathbf{W}\mathbf{W}%
^{\prime })>\func{cv}^{2})\leq \alpha $. For a finite or parametric set of $%
\mathcal{V}$, $\sup_{\mathbf{\Sigma }\in \mathcal{V}}\mathbb{P}_{\mathbf{%
\Sigma }}(\tau ^{2}(\mathbf{W}\mathbf{W}^{\prime })>\func{cv}^{2})$ can be
established numerically. We therefore focus on an analytical robustness
result for a non-parametric class $\mathcal{V}$.

Specifically, we establish a set of readily verifiable sufficient conditions
to check robustness for sets $\mathcal{V}$ that are composed of mixtures of
parametric covariance matrices $\mathbf{\Sigma }^{p}(\theta )$ for $\theta
\in \Theta $. We then apply this result to a set of Mat\'{e}rn covariance
matrices with parameter $\theta $ and investigate the robustness of SCPC
over arbitrary mixtures of these Mat\'{e}rn models. In addition, we use the
result to study the robustness of a popular projection based t-test in a
regularly spaced time series setting.

Consider a benchmark model with $\mathbf{\Sigma }=\mathbf{\Sigma }_{0}$, and
suppose that $\func{cv}$ has been chosen so that $\mathbb{P}_{\mathbf{\Sigma 
}_{0}}(\tau ^{2}(\mathbf{W}\mathbf{W}^{\prime })>\func{cv}^{2})=\alpha .$ We
are interested in conditions under which 
\begin{equation}
\mathbb{P}_{\mathbf{\Sigma }_{1}}(\tau ^{2}(\mathbf{W}\mathbf{W}^{\prime })>%
\func{cv}^{2})\leq \alpha \text{ for $\mathbf{\Sigma }_{1}=\int_{\Theta }%
\mathbf{\Sigma }^{p}(\theta )dF(\theta )$}  \label{eq:Sig1_sizecontrol}
\end{equation}%
for a probability distribution $F$.

Let $\lambda_{j}(\cdot)$ denote the $j$th largest eigenvalue of some matrix.

\begin{theorem}
\label{thm:robust}Let $\mathbf{\Omega }_{0}=\mathbf{W}^{0\prime }\mathbf{%
\mathbf{\Sigma }}_{0}\mathbf{W}^{0}$, $\mathbf{\Omega }(\theta )=\mathbf{W}%
^{0\prime }\mathbf{\Sigma }^{p}(\theta )\mathbf{W}^{0}$, and assume $\mathbf{%
\Omega }_{0}$ and $\mathbf{\Omega }(\theta ),$ $\theta \in \Theta $ are full
rank. Suppose $\mathbf{A}_{0}=\mathbf{D}(\func{cv})\mathbf{\Omega }_{0}$ is
diagonalizable, and let $\mathbf{P}$ be its eigenvectors. Let $\mathbf{A}%
(\theta )=\mathbf{P}^{-1}\mathbf{D}(\func{cv})\mathbf{\Omega }(\theta )%
\mathbf{P}$ and $\mathbf{\bar{A}}(\theta )=\tfrac{1}{2}(\mathbf{A}(\theta )+%
\mathbf{A}(\theta )^{\prime })$. Suppose $\mathbf{A}_{0}$ and $\mathbf{A}%
(\theta ),$ $\theta \in \Theta $ are scale normalized such that $\lambda
_{1}(\mathbf{A}_{0})=\lambda _{1}(\mathbf{A}(\theta ))=1$. Let 
\begin{eqnarray*}
\nu _{1}(\theta ) &=&\lambda _{q}(-\mathbf{\bar{A}}(\theta ))-\lambda _{1}(%
\mathbf{\bar{A}}(\theta ))\lambda _{q}(-\mathbf{A}_{0})-(\lambda _{1}(%
\mathbf{\bar{A}}(\theta ))-1) \\
\nu _{i}(\theta ) &=&\lambda _{q+1-i}(-\mathbf{\bar{A}}(\theta ))-\lambda
_{1}(\mathbf{\bar{A}}(\theta ))\lambda _{q+1-i}(-\mathbf{A}_{0})\text{ for }%
i=2,\ldots ,q.
\end{eqnarray*}%
If for some probability distribution $F$ on $\Theta $, $\sum_{i=1}^{j}\int
\nu _{i}(\theta )dF(\theta )\geq 0$ for all $1\leq j\leq q$, then (\ref%
{eq:Sig1_sizecontrol}) holds.
\end{theorem}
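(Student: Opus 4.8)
The plan is to reduce the rejection probability under any full-rank $\mathbf{\Sigma}$ to a tail probability of a weighted sum of independent $\chi^2_1$ variables, and then compare the weights (eigenvalue ratios) across $\mathbf{\Sigma}_0$ and $\mathbf{\Sigma}_1$. First I would use the quadratic-form representation (\ref{eq:tau_2(Q)}): writing $\mathbf{X}=\mathbf{W}^{0\prime}\mathbf{u}\sim\mathcal{N}(\mathbf{0},\mathbf{\Omega})$ with $\mathbf{\Omega}=\mathbf{W}^{0\prime}\mathbf{\Sigma}\mathbf{W}^{0}$, the event $\tau^{2}(\mathbf{W}\mathbf{W}^{\prime})>\func{cv}^{2}$ is exactly $\mathbf{X}^{\prime}\mathbf{D}(\func{cv})\mathbf{X}>0$, so that (diagonalizing $\mathbf{\Omega}^{1/2}\mathbf{D}(\func{cv})\mathbf{\Omega}^{1/2}$, which is similar to $\mathbf{A}=\mathbf{D}(\func{cv})\mathbf{\Omega}$) the rejection probability equals $\mathbb{P}(Z_{0}^{2}>\sum_{i=1}^{q}r_{i}Z_{i}^{2})$ with $r_{i}=-\omega_{i}/\omega_{0}\ge0$ and $(Z_{i})\sim\mathcal{N}(\mathbf{0},\mathbf{I}_{q+1})$. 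Sylvester's law of inertia applied to $\mathbf{D}(\func{cv})$ gives the sign pattern $\omega_{0}>0\ge\omega_{i}$ of Theorem~\ref{thm:tau_finite_q}(i), so this is a genuine finite-sample identity. I would then record two structural facts about the map $(r_{1},\dots,r_{q})\mapsto\mathrm{RP}$: it is strictly decreasing in each $r_{i}$, and, via the Schur--Ostrowski criterion (using that $s\mapsto\mathbb{P}(Z_{0}^{2}>s)$ is convex and decreasing), it is Schur-convex. A Schur-convex function that is coordinatewise decreasing is monotone with respect to weak supermajorization, so $\mathrm{RP}(\mathbf{\Sigma}_{1})\le\mathrm{RP}(\mathbf{\Sigma}_{0})=\alpha$ will follow once I show the ascending partial-sum dominance $\sum_{i=1}^{j}r_{i}^{1}\ge\sum_{i=1}^{j}r_{i}^{0}$ for every $j$.

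Next I would exploit linearity of the mixture. Since $\mathbf{\Omega}_{1}=\int\mathbf{\Omega}(\theta)\,dF(\theta)$ and $\mathbf{P}$ diagonalizes $\mathbf{A}_{0}$, the eigenvalues of $\mathbf{A}_{1}=\mathbf{D}(\func{cv})\mathbf{\Omega}_{1}$ coincide with those of $\mathbf{P}^{-1}\mathbf{A}_{1}\mathbf{P}=\int\mathbf{A}(\theta)\,dF(\theta)$. The difficulty is that this averaged matrix is not symmetric, so Weyl-type inequalities do not apply directly; this is exactly why the statement passes through the symmetric parts $\mathbf{\bar{A}}(\theta)$. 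I would invoke the Ky Fan / Fan--Hoffman theorem — the (real) eigenvalues of a real matrix are majorized by the eigenvalues of its symmetric part — so that bottom-$k$ eigenvalue sums of $\int\mathbf{A}(\theta)\,dF$ are bounded below by those of $\int\mathbf{\bar{A}}(\theta)\,dF$. Since the bottom-$k$ eigenvalue sum of a symmetric matrix is concave (a minimum over rank-$k$ isometries), Jensen's inequality then bounds these below by $\int$ of the bottom-$k$ sums of $\mathbf{\bar{A}}(\theta)$. Unwinding the bookkeeping — the most negative $q$ eigenvalues of $\mathbf{A}_{1}$ against $-\omega_{0}^{1}$, and the decomposition of each bottom sum of $-\mathbf{\bar{A}}(\theta)$ into $\sum_{i\le j}\lambda_{q+1-i}(-\mathbf{\bar{A}}(\theta))$ minus $\lambda_{1}(\mathbf{\bar{A}}(\theta))$ — produces precisely the quantities $\nu_{i}(\theta)$, and the hypothesis $\sum_{i=1}^{j}\int\nu_{i}\,dF\ge0$ becomes $\sum_{i=1}^{j}(-\omega_{i}^{1})\ge(\sum_{i=1}^{j}r_{i}^{0})\bar{\lambda}-1+\omega_{0}^{1}$, where $\bar{\lambda}=\int\lambda_{1}(\mathbf{\bar{A}}(\theta))\,dF$, using $\lambda_{q+1-i}(-\mathbf{A}_{0})=r_{i}^{0}$ and the normalization $\lambda_{1}(\mathbf{A}_{0})=1$.

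Converting this into the ratio inequality requires dividing by $\omega_{0}^{1}$, so the crux is to control the single positive eigenvalue $\omega_{0}^{1}=\lambda_{1}(\mathbf{A}_{1})$ from both sides. An upper bound $\omega_{0}^{1}\le\bar{\lambda}$ follows from Ky Fan with $k=1$ and convexity of $\lambda_{1}$; the correction $\lambda_{1}(\mathbf{\bar{A}}(\theta))-1$ in $\nu_{1}$ is exactly the slack $\lambda_{1}(\mathbf{\bar{A}}(\theta))-\lambda_{1}(\mathbf{A}(\theta))$ this bound introduces. The decisive step is a lower bound $\omega_{0}^{1}\ge1$, which I would obtain from a variational characterization of the positive generalized eigenvalue: after a Cholesky reduction $\mathbf{\Omega}=\mathbf{L}\mathbf{L}^{\prime}$ one gets $\omega_{0}(\mathbf{\Omega})=\min\{\mathbf{v}^{\prime}\mathbf{\Omega}\mathbf{v}/\mathbf{v}^{\prime}\mathbf{D}(\func{cv})^{-1}\mathbf{v}:\mathbf{v}^{\prime}\mathbf{D}(\func{cv})^{-1}\mathbf{v}>0\}$, because on the positive cone the Rayleigh quotient of the indefinite pencil is the reciprocal of the ordinary Rayleigh quotient of a symmetric matrix with a single positive eigenvalue. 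For each admissible $\mathbf{v}$ the integrand $\mathbf{v}^{\prime}\mathbf{\Omega}(\theta)\mathbf{v}/\mathbf{v}^{\prime}\mathbf{D}(\func{cv})^{-1}\mathbf{v}\ge\omega_{0}(\mathbf{\Omega}(\theta))=1$, so averaging over $F$ and minimizing over $\mathbf{v}$ gives $\omega_{0}^{1}\ge1$. With $\omega_{0}^{1}\in[1,\bar{\lambda}]$, the lower bound on $\sum_{i\le j}(-\omega_{i}^{1})$ reduces the target to $(\sum_{i\le j}r_{i}^{0})(\bar{\lambda}-\omega_{0}^{1})\ge1-\omega_{0}^{1}$, where the left side is nonnegative and the right side nonpositive; dividing by $\omega_{0}^{1}>0$ yields $\sum_{i\le j}r_{i}^{1}\ge\sum_{i\le j}r_{i}^{0}$, completing the chain.

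Finally I would tie off that the normalization $\lambda_{1}(\mathbf{A}(\theta))=\lambda_{1}(\mathbf{A}_{0})=1$ is without loss, since rescaling each $\mathbf{\Sigma}^{p}(\theta)$ only reweights $F$ while $\tau^{2}$ is scale invariant in $\mathbf{\Sigma}_{1}$, and that diagonalizability of $\mathbf{A}_{0}$ guarantees $\mathbf{P}$ exists. I expect the main obstacle to be the handling of the non-symmetric averaged matrix: coupling the Ky Fan majorization with the Jensen (concavity) step so that the per-$\theta$ symmetric-part eigenvalues align index-by-index with the $\nu_{i}$, and, more delicately, establishing the sharp lower bound $\omega_{0}^{1}\ge1$ through the indefinite-pencil Rayleigh quotient — a naive Weyl bound controls $\omega_{0}^{1}$ only from above and would leave the division by $\omega_{0}^{1}$ unjustified.
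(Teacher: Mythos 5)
Your proposal is correct, and its skeleton coincides with the paper's proof: both reduce the rejection probability to $\mathbb{P}\left(Z_{0}^{2}>\sum_{i=1}^{q}\eta_{i}Z_{i}^{2}\right)$ via the eigenvalues of $\mathbf{D}(\func{cv})\mathbf{\Omega}$, both exploit Schur convexity of this probability in the weights together with coordinatewise monotonicity and weak-majorization monotonicity (the paper's Lemma \ref{lm:schurcncve} plus Theorem 3.A.8 of \cite{Marshall2011}), and both obtain the required partial-sum dominance from exactly the two majorization steps you describe: real-spectrum eigenvalues majorized by those of the symmetric part, and the Fan/Jensen step pulling the mixture integral outside the bottom-$k$ eigenvalue sums (the paper's Theorems 9.F.1 and 9.G.1 citations, leading to its conditions (\ref{lam1ineq}) and (\ref{finineq}), which your reduction to $(\sum_{i\leq j}r_{i}^{0})(\bar{\lambda}-\omega_{0}^{1})\geq 1-\omega_{0}^{1}$ reproduces). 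The one genuinely different ingredient is your proof of the crucial lower bound $\lambda_{1}(\mathbf{A}_{1})\geq 1$. The paper (Lemma \ref{lm:lam1Abar}) gets it from the characteristic polynomial: $1/\lambda_{1}(\mathbf{A}_{1})$ is the root of a decreasing function $\bar{h}$, and $\bar{h}(1)\leq 0$ is shown via a separate convexity result for quadratic forms under mixtures (Lemma \ref{lm:quadconvex}). You instead characterize the unique positive eigenvalue variationally as the minimum of the indefinite-pencil Rayleigh quotient $\mathbf{v}^{\prime}\mathbf{\Omega}\mathbf{v}/\mathbf{v}^{\prime}\mathbf{D}(\func{cv})^{-1}\mathbf{v}$ over the cone $\mathbf{v}^{\prime}\mathbf{D}(\func{cv})^{-1}\mathbf{v}>0$; this characterization is correct (substituting $\mathbf{w}=\mathbf{\Omega}^{1/2}\mathbf{v}$ makes it the reciprocal of the largest eigenvalue of $\mathbf{\Omega}^{-1/2}\mathbf{D}(\func{cv})^{-1}\mathbf{\Omega}^{-1/2}$, whose sole positive eigenvalue is $1/\omega_{0}$), and it turns the mixture bound into a one-line pointwise integration of $\mathbf{v}^{\prime}\mathbf{\Omega}(\theta)\mathbf{v}\geq\mathbf{v}^{\prime}\mathbf{D}(\func{cv})^{-1}\mathbf{v}$. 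Your route is shorter and more transparent at this step; what the paper's route buys is that the same characteristic-polynomial computation simultaneously delivers the realness and sign pattern of the spectrum of $\mathbf{A}_{1}$, which you must (and do) obtain separately via similarity to $\mathbf{\Omega}^{1/2}\mathbf{D}(\func{cv})\mathbf{\Omega}^{1/2}$ and Sylvester's law of inertia. The remaining cosmetic difference --- justifying Schur convexity through convexity of $s\mapsto\mathbb{P}(Z_{0}^{2}>s)$ rather than through the integral representation (\ref{eq:BS}) --- is immaterial, as both arguments are valid.
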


\begin{remark}
If $\sum_{i=1}^{j}\nu_{i}(\theta)\geq0$ for all $\theta\in\Theta$ and $1\leq
j\leq q$, then the theorem implies that $\mathbb{P}_{\mathbf{\Sigma}%
_{1}}(\tau^{2}(\mathbf{W}\mathbf{W}^{\prime})>\func{cv}^{2})\leq\alpha$ for $%
\mathbf{\Sigma}_{1}$ an arbitrary mixture of $\mathbf{\Sigma}^{p}(\theta)$.
\end{remark}

\begin{remark}
Note that for $\mathbf{\Sigma }^{p}(\theta _{0})=\mathbf{\Sigma }_{0}$, $\nu
_{i}(\theta _{0})=0$ for $1\leq j\leq q$, so the inequalities of the theorem
have no `minimal slack' and potentially apply also to parametric models with
a covariance matrix $\mathbf{\Sigma }^{p}(\theta )$ that takes on values
arbitrarily close to $\mathbf{\Sigma }_{0}$.\textbf{\ }
\end{remark}

\begin{remark}
As shown in Theorem \ref{thm:tau_finite_q}, the eigenvalues of $\mathbf{A}%
_{0}$ and $\mathbf{A}(\theta)$ (or, equivalently, of $\mathbf{D}(\func{cv})%
\mathbf{\Omega}(\theta)$) govern the rejection probability of $\tau^{2}(%
\mathbf{W}\mathbf{W}^{\prime})$ under $\mathbf{\Sigma}_{0}$ and $\mathbf{%
\Sigma}^{p}(\theta)$. Given the scale normalization $\lambda_{1}(\mathbf{A}%
_{0})=\lambda_{1}(\mathbf{A}(\theta))=1$, if $-\lambda_{j}(\mathbf{A}%
(\theta))\geq-\lambda_{j}(\mathbf{A}_{0})$ for all $j\geq2$, then the result
there implies that $\mathbb{P}_{\mathbf{\Sigma}^{p}(\theta)}(\tau^{2}(%
\mathbf{W}\mathbf{W}^{\prime})>\func{cv}^{2})\leq\mathbb{P}_{\mathbf{\Sigma}%
_{0}}(\tau^{2}(\mathbf{W}\mathbf{W}^{\prime})>\func{cv}^{2})$. It follows
from an integral representation (cf.~equation (\ref{eq:BS}) below) that the
null rejection probability of the t-statistic is Schur convex in these
negative eigenvalues, so that the inequality holds whenever the negative
eigenvalues of $\mathbf{A}(\theta)$ weakly majorize those of $\mathbf{A}_{0}$%
. Majorization inequalities about eigenvalues of sums of matrices from \cite%
{Marshall2011} and additional calculations then extend this further to the
result in Theorem \ref{thm:robust}.
\end{remark}

\begin{remark}
The conditions of Theorem \ref{thm:robust} implicitly depend on the
locations $\mathbf{s}$, so the implications are specific to the application.
In the spatial case, the practical importance of the theorem is that the
conditions are straightforward to check numerically for a given parametric
family $\mathbf{\Sigma}^{p}(\theta)$. This can establish a range of
robustness of a spatial t-test in a given application and is illustrated in
the next subsection with the SCPC t-test and the Mat\'{e}rn class of spatial
correlations. The theorem also provides insights for inference in the
regularly-spaced time series case, where the spatial design is fixed across
applications. This is illustrated in the subsequent subsection for a
projection-based t-statistic for mixtures of AR(1) processes and processes
that are `less persistent' than a benchmark AR(1) model.
\end{remark}

\subsubsection{Implications for SCPC}

The critical value for the SCPC t-test is chosen to control size in
exponential models with $c\geq c_{0}$, where $c_{0}$ is calibrated to a
value $\overline{\rho}_{0}$. Because $\overline{\rho}$ is monotone in $c$,
the resulting SCPC t-test controls size for all $\overline{\rho}\leq%
\overline{\rho}_{0}$ in the exponential model by construction.

Let $\mathbf{\Sigma}^{p}(\theta)$ denote the covariance matrix associated
with a parameter $\theta$, with average pairwise correlation $\overline{\rho}%
(\theta)$. Let $\Theta_{\overline{\rho}_{L},\overline{\rho}_{U}}=\{$$\theta|%
\overline{\rho}_{L}\leq\overline{\rho}(\theta)\leq\overline{\rho}_{U}\}$
denote the set of values of $\theta$ that induce correlations between $%
\overline{\rho}_{L}$ and $\overline{\rho}_{U}$. If the inequalities in
Theorem \ref{thm:robust} are satisfied for all values of $\theta\in\Theta_{%
\overline{\rho}_{L},\overline{\rho}_{U}}$, then the SCPC t-test controls
size for all mixtures of $\mathbf{\Sigma}^{p}(\theta)$ in this set.

In this section we consider $\mathbf{\Sigma }^{p}(\theta )$ computed from Mat%
\'{e}rn processes with parameter $\theta =(\nu ,c)$, where $\nu $ and $c$
are positive constants. If $u$ follows a Mat\'{e}rn process, its covariance
function $\sigma _{u}(r-s)$ depends on the locations only through $d=||r-s||$%
. For $\nu \in \{1/2,3/2,5/2,\infty \}$, the Mat\'{e}rn covariance functions
are

\begin{itemize}
\item $\nu=1/2$: $\sigma_{u}(d)\propto\exp[-cd]$

\item $\nu=3/2$: $\sigma_{u}(d)\propto\left(1+\sqrt{3}dc\right)\exp[-\sqrt{3}%
cd]$

\item $\nu=5/2$: $\sigma_{u}(d)\propto\left(1+\sqrt{5}dc+(5/2)d^{2}c^{2}%
\right)\exp[-\sqrt{5}cd]$

\item $\nu=\infty$: $\sigma_{u}(d)\propto\exp[-c^{2}d^{2}/2].$
\end{itemize}

For any $\mathbf{\Sigma }(c_{0})$ it is straightforward to compute the
bounds $\overline{\rho }_{L}$ and $\overline{\rho }_{U}$ such that the
inequalities in Theorem \ref{thm:robust} are satisfied for all values of $%
\theta \in \Theta _{\overline{\rho }_{L},\overline{\rho }_{U}}$ with $\nu
\in \{1/2,3/2,5/2,\infty \}$ and $c>0$. We carried out this exercise for the
U.S.~states spatial correlation designs of Section \ref{sec:SCPC} (the
calculations for one set of locations take less than a second). We find $%
\overline{\rho }_{L}\leq 0.001$ and $\overline{\rho }_{U}=\overline{\rho }%
_{0}\in \{0.02,0.10\}$, with very few minor exceptions.

We conclude that SCPC controls size in finite Gaussian samples for a wide
range of Mat\'{e}rn process mixtures that imply $\overline{\rho}\leq%
\overline{\rho}_{0}$, at least for this set of spatial designs.

\subsubsection{Implications for regularly-spaced time series}

The spatial design is fixed for regularly-spaced time series, so the theorem
can provide general robustness results. Consider, for instance, the equal
weighted cosine (EWC) projection estimator of M\"{u}ller (2004, 2007), \cite%
{Lazarus_etal_JBES_2018} and Dou (2019) where $\mathbf{w}(s)=\sqrt{2/q}(\cos
\pi s,\cos (2\pi s),\ldots ,\cos (q\pi s))$. Suppose the critical value $%
\func{cv}_{n}$ is chosen so that size is controlled in a Gaussian AR(1) with
coefficient $\exp (-c_{0}/n)$, and $q$ is chosen to minimize expected length
in the i.i.d.~model. For $c_{0}=10$, $c_{0}=25$ and $c_{0}=50$, we obtain $%
q=5,7$ and $10$, respectively, for all $n\in \{50,100,500\}$. Call this test
the EWC$(c_{0})$ t-test.

Calculations based on Theorem \ref{thm:robust} for these values of $c_{0}$
and $n$ show that the EWC$(c_{0})$ t-test controls size for arbitrary
mixtures of AR(1) processes with coefficients $\exp (-c/n)$, $c\geq c_{0}$.
By taking the limit in $n$ and using standard local-to-unity weak
convergence results (as in \cite{Muller14}), one can further apply Theorem 1
to the limiting covariance matrices $\mathbf{\Omega }_{0}$ and $\mathbf{%
\Omega }(\theta )$ to study asymptotic robustness of the EWC$(c_{0})$ t-test
with an asymptotically justified critical value (which are equal to $\func{cv%
}=3.53$, $2.71$, $2.40$ for $c_{0}=10$, $25$, $50$, respectively). Another
numerical calculation based on Theorem \ref{thm:robust} then shows that
these EWC$(c_{0})$ t-tests control asymptotic size for underlying processes
that are arbitrary mixtures of local-to-unity models with parameters $c\geq
c_{0}$.

Moreover, let $f_{n,0}:[-\pi ,\pi ]\mapsto \lbrack 0,\infty )$ be the
spectral density of an AR(1) process with coefficient $\exp (-c_{0}/n)$, so $%
f_{n,0}(\omega )\propto (1-2e^{-c_{0}/n}\cos \omega +e^{-2c_{0}/n})^{-1}$. A
spectral density $f_{n,1}$ would naturally be considered less persistent
than $f_{n,0}$ if $f_{n,1}(\omega )/f_{n,0}(\omega )$ is (weakly)
monotonically increasing in $|\omega |$. Denote all such functions by $%
\mathcal{F}_{n}$. Define 
\begin{equation*}
M=\frac{f_{n,1}(\pi )/f_{n,0}(\pi )}{f_{n,1}(0)/f_{n,0}(0)},
\end{equation*}%
so $M$ measures by how much $f_{n,1}(\omega )/f_{n,0}(\omega )$ increases
over $[0,\pi ]$, and denote by $\mathcal{F}_{n}^{\bar{M}}$ all functions in $%
\mathcal{F}_{n}$ with $M\leq \bar{M}$ for some $\bar{M}>1$. Then for any $%
f_{n,1}\in \mathcal{F}_{n}^{\bar{M}}$, there exists a CDF $H$ on $[0,\pi ]$
such that 
\begin{eqnarray*}
f_{1,n}(\omega ) &\propto &f_{n,0}(\omega )+(M-1)H(|\omega |)f_{n,0}(\omega )
\\
&=&\frac{\bar{M}-M}{\bar{M}-1}f_{n,0}(\omega )+\frac{M-1}{\bar{M}-1}\int [{{{%
{f_{n,0}(\omega )+(\bar{M}-1)\mathbf{1}[|\omega |\geq \theta ]f_{n,0}(\omega
)}}}}]dH(\theta )
\end{eqnarray*}%
so $f_{n,1}$ has a representation as a scale mixture of $f_{n,0}(\omega )+(%
\bar{M}-1)\mathbf{1}[|\omega |\geq \theta ]f_{n,0}(\omega )$, $0\leq \theta
\leq \pi $. After translating this back into a corresponding mixture of
covariance matrices $\mathbf{\Sigma }^{p}(\theta )$, an application of
Theorem \ref{thm:robust} shows that the EWC$(c_{0})$ t-test also controls
size in this class, for $(c_{0},\bar{M})\in \{(10,10),(25,10),(50,5)\}$ and
all $n\in \{50,100,500\}$. These results refine corresponding results in 
\cite{Dou_2019} that are based on a Whittle-type diagonal approximation to $%
\mathbf{\Sigma }$.

Taking limits as $n\rightarrow \infty $ yields a corresponding asymptotic
robustness statement: The function $f_{0}:%
\mathbb{R}
\mapsto \lbrack 0,\infty )$ with $f_{0}(\omega )=(\omega
^{2}+c_{0}^{2})^{-1} $ is proportional to the `local-to-zero' spectral
density (cf. M\"{u}ller and Watson (2016, 2017))\nocite{Muller14b}\nocite%
{Muller15d} of a local-to-unity process with parameter $c_{0}.$\ Consider
any process whose local-to-zero spectral density $f_{1}$ is such that $%
f_{1}(\omega )/f_{0}(\omega )$ is monotonically increasing in $|\omega |$
with $\lim_{\omega \rightarrow \infty }f_{1}(\omega )/f_{0}(\omega )\leq 
\bar{M}f_{1}(0)/f_{0}(0)$ and that satisfies the CLT in M\"{u}ller and
Watson (2016, 2017). A numerical calculation based on Theorem \ref%
{thm:robust} then shows that the EWC$(c_{0})$ t-tests for $(c_{0},\bar{M}%
)\in \{(10,10),(25,10),(50,5)\}$ controls size in large samples under all
such processes.

\subsection{\label{sec:hetlocsize}Size properties of SCPC under
heteroskedasticity and mismeasured locations}

The SCPC t-test is not robust to heteroskedasticity or measurement error in
locations by construction. For example, suppose that $u(s)=h(s)\tilde{u}(s)$%
, where $\tilde{u}$ is homoskedastic and satisfies the assumptions outlined
above for $u$, and $h:\mathcal{S}\mapsto%
\mathbb{R}%
$ is a non-random function that induces heteroskedasticity in the $u$
process. The linear combinations of $u$ studied in Lemma \ref%
{thm:tau_finite_q} are now $\sum_{l=1}^{n}\mathbf{w}^{0}(s_{l})u(s_{l})=$$%
\sum_{l=1}^{n}\mathbf{w}_{h}^{0}(s_{l})\tilde{u}(s_{l})$ where $\mathbf{w}%
_{h}^{0}(s)=\mathbf{w}^{0}(s)h(s)$. The results of the lemma and subsequent
theorems then follow with $\mathbf{w}_{h}^{0}$ replacing $\mathbf{w}^{0}$.
But, the test statistic and critical value is computed using $\mathbf{w}^{0}$%
, not $\mathbf{w}_{h}^{0}$, so that size control is not guaranteed, even in
large samples. An analogous problem arises when the locations $s_{i}$ are
measured with error.

In both cases, the particulars of the size distortion depend on the
distribution of spatial locations, $g$, the weights $\mathbf{w}^{0}$ (which
in turn depend on the value of $\overline{\rho}_{0}$ used to calibrate $%
c_{0} $), the function $h$ in the heteroskedastic model and the distribution
of the measurement error for the locations.

We summarize two experiments that illustrate and quantify the size
distortions in the U.S.~states spatial correlation designs. The first
experiment is a heteroskedastic model with $\log h$ increasing or decreasing
linearly from $\log h(s)=0$ to $\log h(s)=\log 3$ moving from the most
westward to the most eastward location, the experiment is repeated with $h$
increasing or decreasing moving north to south, and we record the largest of
the four rejection frequencies. Panel (a) of Figure \ref{fig:hlocerror}
plots the CDF of rejection frequencies for nominal 5\% SCPC tests for each $(%
\overline{\rho }_{0},g)$ pair. For these designs, the resulting size
distortions are not large, except for a few states with $\overline{\rho }%
_{0}=0.02$ and the light spatial density $g$, where rejection frequencies
approach 10\%.

The second experiment investigates location measurement error of a form
studied in \cite{Conley_Molinari_2007}. Specifically for each location, $%
s_{i}^{\ast }=s_{i}+e_{i}$ where $s_{i}^{\ast }$ is the measured location, $%
s_{i}$ is the true location and $e_{i}$ is the measurement error. The error
term is $e_{i}=(e_{1,i},e_{2,i})$ with $e_{1,i}$ the north-south and $%
e_{2,i} $ the east-west coordinate and $e_{j,i}$ i.i.d.$\mathcal{U}(-\delta
,\delta ) $\emph{\ }over $j$ and $i$, and $\text{$\delta =0.0375H$}$ with $H$
the length of the smallest square that encompasses all locations,
corresponding to \textquotedblleft level 4\textquotedblright\ errors in
Conley and Molinari's (2007) classification. The CDFs for the rejection
frequencies are shown in panel (b) of Figure \ref{fig:hlocerror}. Evidently,
measurement error of this sort has little effect on the size of SCPC under
uniformly distributed locations, but can have a substantial effect for
highly concentrated spatial distributions, especially when $\overline{\rho }%
_{0}=0.02$.

\begin{figure}[tbp]
\centering{}\includegraphics[scale=0.8]{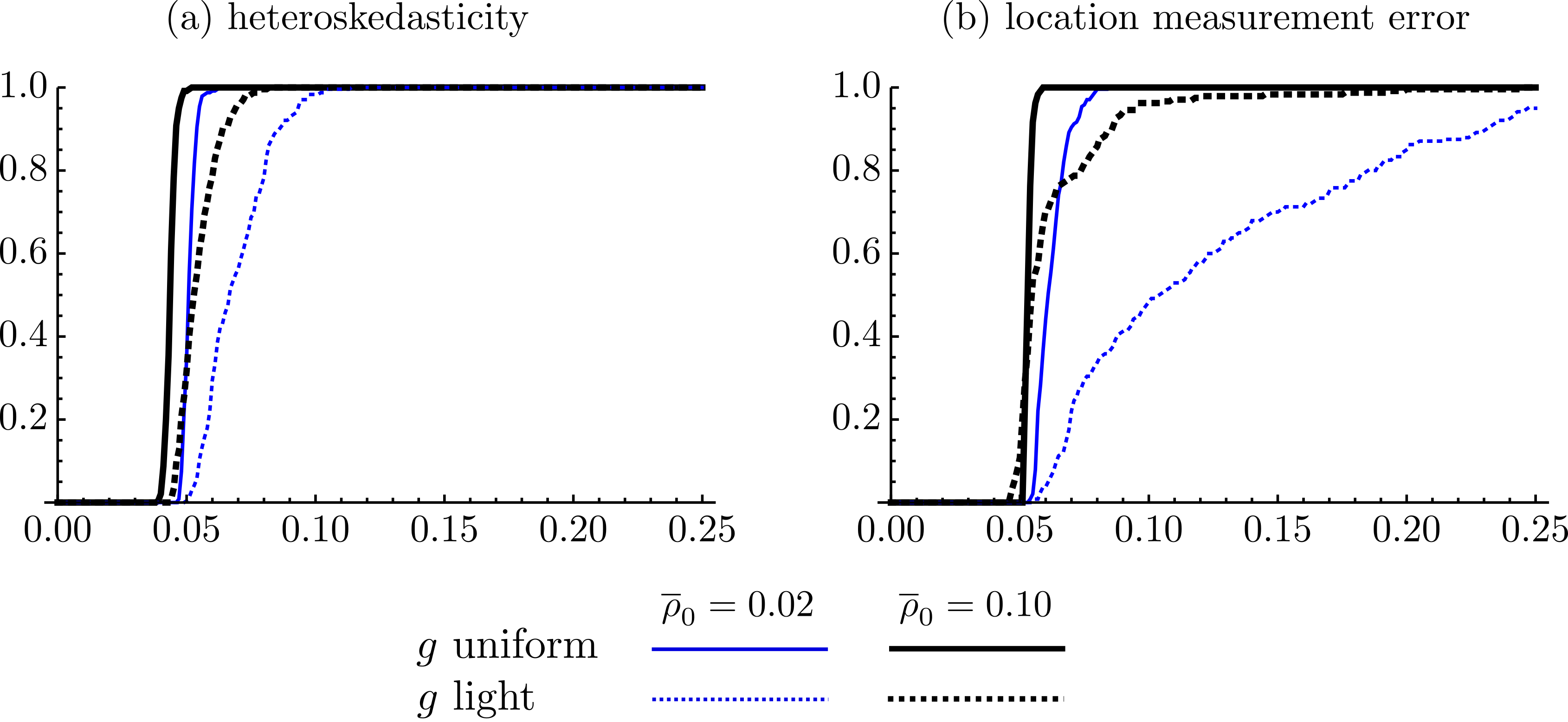}
\caption{CDFs of Size under Heteroskedasticity and Location Measurement
Error }
\label{fig:hlocerror}
\end{figure}

\section{\label{sec:Efficiency-of-SCPC}Efficiency of SCPC}

Figure \ref{fig:Expected-length-relative to oracle} showed the expected
length of the SCPC confidence interval relative to the length of an oracle
confidence interval that uses the true value of $\func{Var}(\sqrt{n}(%
\overline{y}-\mu))$ conditional on the observed locations $\mathbf{s}$. (As
before, in this subsection we keep the conditioning on $\mathbf{s}$ and the
dependence on $n$ implicit.) For studying efficiency, a more relevant
comparison involves the expected length of the SCPC confidence interval
relative to a confidence interval that, like SCPC, does not depend on the
true (unknown) value of $\func{Var}(\sqrt{n}(\overline{y}-\mu))$. Ideally,
such a comparison would involve SCPC and the most efficient method for
constructing a confidence interval. We undertake such a comparison here.

To be specific, let $\func{CS}(\mathbf{y})\subset%
\mathbb{R}%
$ denote a confidence set for $\mu$ constructed from $\mathbf{y}$. We
restrict attention to location and scale equivariant confidence sets, that
is $\func{CS}$ satisfies $\func{CS}(a_{\mu}+a_{\sigma}\mathbf{y}%
)=\{\mu_{0}:(\mu_{0}-a_{\mu})/a_{\sigma}\in\func{CS}(\mathbf{y})\}$ for all $%
\mathbf{y}$, $a_{\mu}\in%
\mathbb{R}%
$ and $a_{\sigma}>0$. As in Section \ref{subsec:Finite-sample-size-control},
we focus on the Gaussian model $\mathbf{y}\sim\mathcal{N}(\mathbf{l}\mu,%
\mathbf{\Sigma})$. We want to compare the SCPC interval with a confidence
interval that, like SCPC, has good coverage $\mathbb{P}_{\mathbf{\Sigma}%
}(\mu\in\func{CS}(\mathbf{y}))$ over a range of potential spatial
correlation patterns $\mathbf{\Sigma\in}\mathcal{V}$. The metric for
measuring efficiency is the expected length $\mathbb{E}^{1}[\int\mathbf{1}%
[x\in\func{CS}(\mathbf{y})]dx]$ in the i.i.d.~model $\mathbf{y}\sim\mathcal{N%
}(\mathbf{l}\mu,\mathbf{I})$.

Our choice of $\mathcal{V}$ is motivated by the structure of the SCPC
benchmark covariance matrix $\mathbf{\Sigma}(c_{0})$. The idea is to include
in $\mathcal{V}$ covariance matrices that are weakly less persistent than $%
\mathbf{\Sigma}(c_{0})$, and that cannot be easily distinguished from the
i.i.d.~model. To characterize these covariance matrices, note that $\mathbf{%
\Sigma}(c_{0})$ is generated from $u$, an isotropic random field with
covariance function $\sigma_{u}(s,r)=\exp(-c_{0}||s-r||)$. Isotropy implies
that the spectrum of this random field $F_{0}:%
\mathbb{R}%
^{d}\mapsto\lbrack0,\infty)$ at frequency $\mathbf{\omega}\in%
\mathbb{R}%
^{d}$ can be written as function of the scalar $\omega=||\mathbf{\omega}||$,
that is $F_{0}(\mathbf{\omega})=f_{0}(\omega)$ for some $f_{0}:%
\mathbb{R}%
\mapsto\lbrack0,\infty)$. As is well known, the exponential covariance model
for $d=2$ corresponds to a spectral density function $f_{0}$ proportional to 
$(c_{0}+\omega^{2})^{-3/2}$. By scale invariance of both $\func{CS}$ and the
SCPC interval, it is without loss of generality to set $f_{0}$ equal to 
\begin{equation*}
f_{0}(\omega)=\frac{1}{(c_{0}+\omega^{2})^{3/2}}.
\end{equation*}
For some $\bar{\omega}>0$, define $f_{\Delta}(\omega)=\mathbf{1}[|\omega|\leq%
\bar{\omega}](f_{0}(\omega)-f_{\Delta}(\bar{\omega}))$, and let $%
f_{R}(\omega)=f_{0}(\omega)-f_{\Delta}(\omega)$, so that 
\begin{equation*}
f_{0}(\omega)=f_{\Delta}(\omega)+f_{R}(\omega).
\end{equation*}
For $0\leq|\omega|\leq\bar{\omega}$, the density $f_{\Delta}$ is equal to $%
f_{0}(\omega)-f_{0}(\bar{\omega})$, so that the remainder $f_{R}(\omega)$ is
a continuous density that is flat for $|\omega|\leq\bar{\omega}$, and that
follows the same decline as $f_{0}$ for $|\omega|>\bar{\omega}$. Since both $%
f_{\Delta}(\omega)$ and $f_{R}(\omega)$ are non-negative, we have the
corresponding identity in covariance matrices 
\begin{equation}
\mathbf{\Sigma}(c_{0})=\mathbf{\Sigma}_{\Delta}(\bar{\omega})+\mathbf{\Sigma}%
_{R}(\bar{\omega})  \label{eq:Sig0_decomp}
\end{equation}
where $\mathbf{\Sigma}_{\Delta}(\bar{\omega})$ and $\mathbf{\Sigma}_{R}(\bar{%
\omega})$ are induced by the isotropic random fields with spectral densities 
$F_{\Delta}(\mathbf{\omega})=f_{\Delta}(||\mathbf{\omega||)}$ and $F_{R}(%
\mathbf{\omega})=f_{R}(||\mathbf{\omega||)}$, respectively.

Now consider the covariance matrix 
\begin{equation*}
\mathbf{\bar{\Sigma}}(\bar{\omega})=\mathbf{\Sigma }_{\Delta }(\bar{\omega}%
)+\lambda _{1}(\mathbf{\Sigma }_{R}(\bar{\omega}))\mathbf{I}_{n}
\end{equation*}%
where $\lambda _{1}(\mathbf{\Sigma }_{R}(\bar{\omega}))$ is the largest
eigenvalue of $\mathbf{\Sigma }_{R}(\bar{\omega})$. Since $f_{R}(\omega )$
is monotonically decreasing in $|\omega |$, also $\mathbf{\Sigma }_{R}(\bar{%
\omega})$ contributes to the persistence of $\mathbf{\Sigma }(c_{0})$ in (%
\ref{eq:Sig0_decomp}), so replacing it with white noise of weakly larger
variance should make inference about $\mu $ under $\mathbf{\bar{\Sigma}}(%
\bar{\omega})$ no harder than under $\mathbf{\Sigma }(c_{0})$.\footnote{%
In the regularly-spaced time series setting, white noise amounts to a flat
spectrum, so $\mathbf{\Sigma }_{0}(\bar{\omega})$ corresponds to an
underlying spectral density equal to $f_{\Delta }(\omega )+f_{0}(\bar{\omega}%
)$, which is the \textquotedblleft kinked\textquotedblright\ spectral
density considered by \cite{Dou_2019}. For arbitrary locations, however, the
domain of the spectrum doesn't fold onto the interval $[-\pi ,\pi ]$, so
that white noise cannot mathematically be represented by a flat spectrum.}
Said differently, a method that is robust under correlation patterns weakly
less persistent than $\mathbf{\Sigma }(c_{0})$ should continue to have good
coverage after replacing medium and high frequency variation in $\mathbf{y}$
by white noise, that is, under $\mathbf{\bar{\Sigma}}(\bar{\omega})$. This
motivates the set $\mathcal{V}=\{\mathbf{\bar{\Sigma}}(\bar{\omega})|\bar{%
\omega}>0\}$.

A calculation shows that in the U.S.~states spatial correlation designs, the
SCPC interval has good coverage properties under this $\mathcal{V}$. With $%
\alpha_{\text{SCPC}}(\bar{\omega})=\mathbb{P}_{\mathbf{\bar{\Sigma}}(\bar{%
\omega})}(\tau_{\text{SCPC}}^{2}>\func{cv}_{\text{SCPC}}^{2})$ for the
nominal 5\% level SCPC test, for most designs, $\sup_{\bar{\omega}%
\geq0}\alpha_{\text{SCPC}}(\bar{\omega})$ is equal or very close to 5\%, and
it never exceeds 8\%. To keep things on an equal footing, we allow $\func{CS}
$ the same degree of undercoverage, that is we consider the problem 
\begin{equation}
\inf_{\func{CS}}\mathbb{E}^{1}[\int\mathbf{1}[x\in\func{CS}(\mathbf{y})]dx]%
\text{ s.t. }\mathbb{P}_{\mathbf{\bar{\Sigma}}(\bar{\omega})}(\mu\notin\func{%
CS}(\mathbf{y}))\leq\max(\alpha_{\text{SCPC}}(\bar{\omega}),\alpha)\text{
for all }\bar{\omega}>0.  \label{eq:eff_prob}
\end{equation}
In words, we seek the invariant confidence set with the shortest expected
length in the i.i.d.~location model among all confidence sets that are as
robust as the SCPC interval under $\mathbf{\bar{\Sigma}}(\bar{\omega})$, $%
\bar{\omega}>0$.

Since $\bar{\omega}$ is one-dimensional, one can apply the numerical
techniques of \cite{Elliott15} and \cite{Muller15c} (also see \cite%
{Mueller20}) to obtain an informative lower bound on the objective $\inf_{%
\func{CS}}\mathbb{E}^{1}[\int\mathbf{1}[x\in\func{CS}(\mathbf{y})]dx]$ that
holds for \emph{any} equivariant $\func{CS}(\mathbf{y})$ that satisfies the
constraint in (\ref{eq:eff_prob}).

We compute such lower bounds in the U.S.~states spatial correlation designs.
Panel (a) of Figure \ref{fig:CDF-of-Expected lenght relative to lower bound}
shows the CDFs of the length of SCPC confidence intervals relative to the
lower bounds for the 240 designs in each $(\overline{\rho }_{0},g)$ pair.
The expected lengths of SCPC are within 7\% of the efficiency bound for all
designs when $\overline{\rho }_{0}=0.02$. When $\overline{\rho }_{0}=0.10$,
so that spatial correlation is high, and the spatial locations are highly
concentrated as under the light design, the expected length of the SCPC
confidence interval can be more that 15\% longer than the efficiency bound.
In part, this is because the implied efficient confidence sets are
complicated and rather uninterpretable functions of $\mathbf{y}$ in this
case. We thus repeat the exercise for confidence sets constrained to be
symmetric around $\overline{y}$ by imposing $\func{CS}(a_{\mu }+a_{\sigma }%
\mathbf{y})=\{\mu _{0}:(\mu _{0}-a_{\mu })/a_{\sigma }\in \func{CS}(\mathbf{y%
})\}$ for all $\mathbf{y}$, $a_{\mu }\in 
\mathbb{R}
$ and $a_{\sigma }\neq 0$. The results are summarized in panel (b), and we
can see that SCPC comes closer to the resulting higher bound on confidence
interval length.

\begin{figure}[tbp]
\centering{}\includegraphics[scale=0.8]{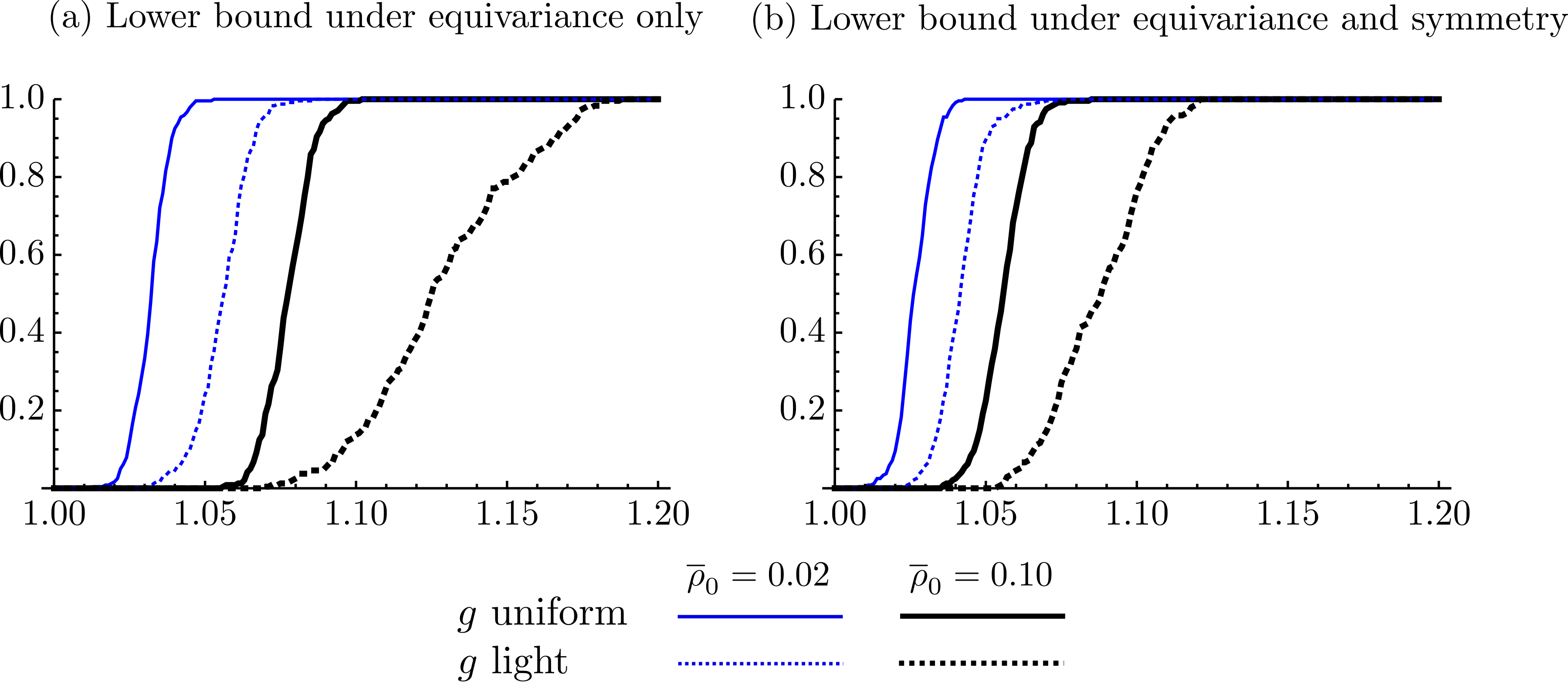}
\caption{CDFs of Expected Length of SCPC Relative to Lower Bound on Expected
Length }
\label{fig:CDF-of-Expected lenght relative to lower bound}
\end{figure}

\begin{remark}
These efficiency results also provide a limit on the possibility of using
data-dependent methods to learn about the value of the worst-case
correlation $c_{0}$: Since the i.i.d.~model corresponds to $%
c\rightarrow\infty$, if it was possible to learn the value of $c$ from the
data, one would be able to conduct much more efficient inference than what
is reported in Figure \ref{fig:CDF-of-Expected lenght relative to lower
bound}. The results here thus provide a rationalization for treating $c_{0}$
as given.\footnote{%
Also see Dou (2019) for a related discussion and associated impossibility
results.}
\end{remark}

\section{\label{sec:Comparison-with-Other}Comparison with other methods}

This section compares SCPC with other methods that have been proposed,
focusing on size and expected length of confidence intervals in the
benchmark Gaussian model with exponential covariance kernel and parameter $%
c_{0}$ (calibrated by $\bar{\rho}_{0}$). We consider two kernel-based
methods, two versions of a cluster method, and one projection method. All
these methods are t-statistic based tests of the form considered in Section
3.

The kernel based methods use a Bartlett kernel, $k(s,r)=k_{\text{Bartlett}%
}(||s-r||/b)$. The methods differ in their choice of bandwidth $b$ and
critical value. The first method uses a standard normal critical value with $%
b$ chosen so the resulting test has size as close as possible to $5\%$. This
is a version of the method proposed by \cite{Conley99}, but with an oracle
choice for the bandwidth. The second method sets $b=\max_{l,\ell}||s_{l}-s_{%
\ell}||$ and chooses the critical value to obtain exact coverage under $%
\mathbf{\Sigma}=\mathbf{I}$. This is the spatial analogue of the method
suggested by \cite{Kiefer00} (KVB) for regularly spaced time series. The
cluster methods follow the approach of \cite{Ibragimov10} (IM) with student-t%
$_{q}$ critical values and is implemented with $q=4$ and $q=9$ equal-sized
clusters.\footnote{%
The assignment of locations to clusters is performed sequentially, where at
each step, we minimize (across yet unassigned locations) the maximal
distance over clusters (among those that have not yet been assigned $n/q$
locations). Cluster distances are computed from the northwest, northeast,
southeast and southwest corners of the location circumscribing rectangle,
and in the $q=9$ case, also from the mid-points of the four sides of this
rectangle, and its center.} The projection method follows \cite{Sun_Kim_2012}%
. It uses a student-t$_{q}$ critical value and $q$ low-frequency Fourier
weights orthogonalized using the sample locations, where $q$ is chosen as a
function of the exponential model parameter $c_{0}$ using the formula in
their equation (8). The first and last method are thus tailored to the true
value $c_{0}$, just like SCPC.

\begin{figure}[tbp]
\centering{}%
\includegraphics[scale=0.8]{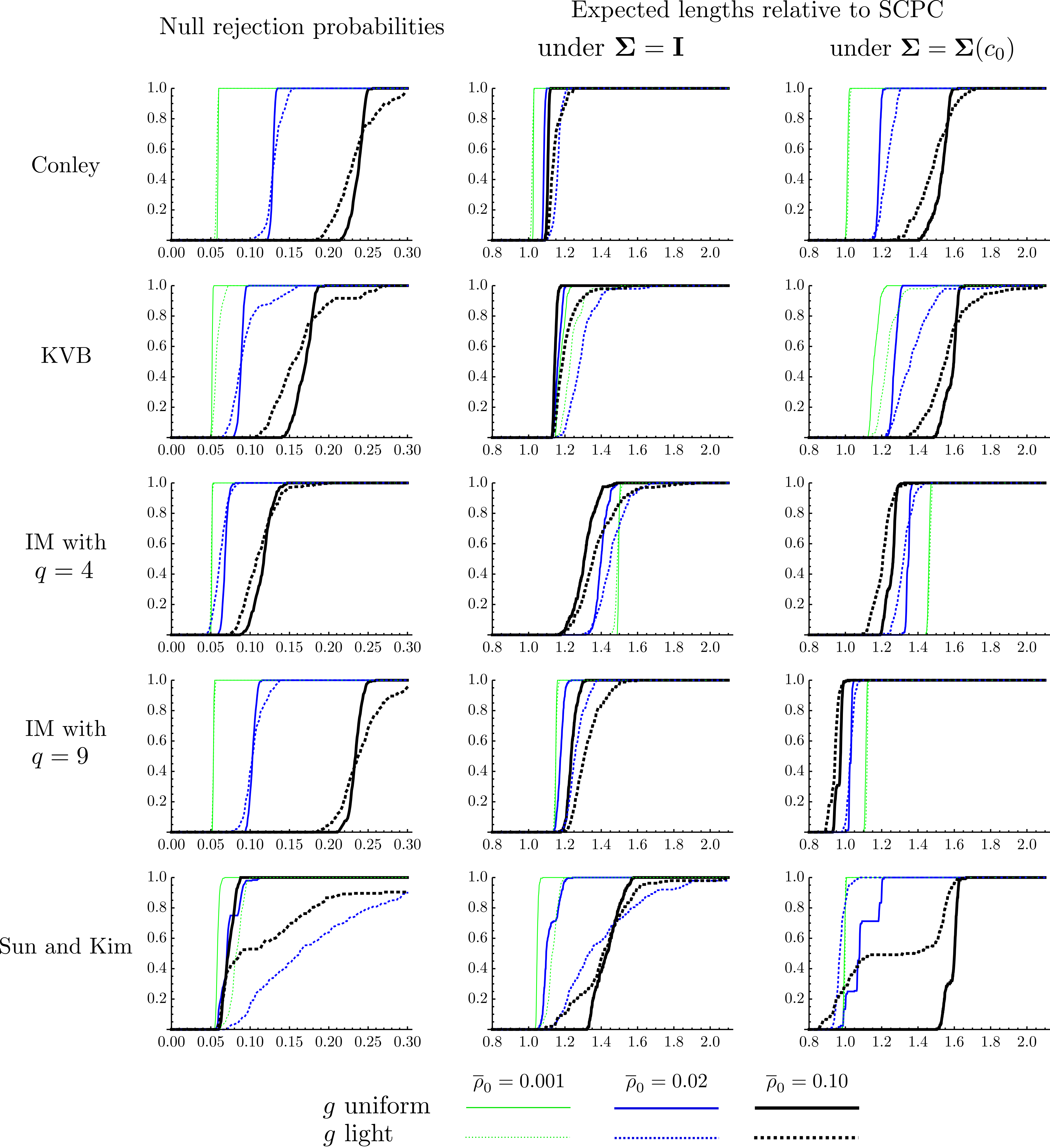}
\caption{CDFs of Null Rejection Probability and Relative Expected Length of
Alternative Methods }
\label{fig:Size-and-Relative Expected Length Alternative Methods}
\end{figure}

We analyze these methods in the U.S.~states spatial correlation designs,
augmented to also include the value $\overline{\rho }_{0}=0.001$ for the
average pairwise correlation to investigate performance under `weak' spatial
correlations. Figure \ref{fig:Size-and-Relative Expected Length Alternative
Methods} summarizes the results for size control and expected lengths by
plotting the CDFs for each $(\overline{\rho }_{0},g)$ pair. The first column
shows the null rejection frequency for each method; by construction, the
rejection frequency for SCPC is at most $5\%$ in all designs. The expected
lengths in the second and third column use size-corrected critical values to
ensure 95\% coverage under $\mathbf{\Sigma }(c_{0})$, and are given in
multiples of the expected length of the (non-adjusted) SCPC method. The
second column reports these relative expected lengths under $\mathbf{\Sigma }%
=\mathbf{I}$, and the third column under $\mathbf{\Sigma }(c_{0})$.

Looking at the first column, the kernel and cluster methods have null
rejection probabilities close to $5\%$ when $\overline{\rho }_{0}=0.001$,
but exhibit significant size distortions for $\overline{\rho }_{0}=0.02$ or $%
0.10$. Evidently, the kernel and cluster methods substantially underestimate
the variance of $\overline{y}$ for the latter two values of $\overline{\rho }%
_{0}$. In contrast, the Fourier projection method has relatively small size
distortions under $g=g_{\text{uniform}}$ but can have substantial size
distortions under $g=g_{\text{light}}$, even when $\bar{\rho}_{0}=0.001$.
This is consistent with the implications of Theorem \ref{thm:tau_finite_q}:
the student-t critical value for the projection method is appropriate when $%
\mathbf{\Omega }\propto \mathbf{I}$, which it is under weak-correlation with 
$g$ uniform, but not otherwise, even for large $q$ (cf. Remark \ref%
{rmk:asy_bias}).

The relative lengths shown in the second column are above unity, sometimes
by a wide margin, indicating that SCPC is closer to the efficiency bound
computed in Section \ref{sec:Efficiency-of-SCPC} than these alternative
methods, at least for the designs considered here. The third column shows
that this continues hold for lengths computed under $\mathbf{\Sigma }(c_{0})$
with a few exceptions. Notably, the expected length of the size-adjusted
9-cluster method is smaller than SCPC when $\overline{\rho }_{0}=0.10$. This
apparent good performance comes at the cost of substantially longer
confidence intervals in the i.i.d. model.

\section{Extensions and computational issues}

This section discusses extensions of the method to regression and GMM
models, some computational issues, and the multivariate extension of SCPC.

\subsection{\label{sec:gmm}Extensions to regression and GMM}

The extension of these results to regression and GMM problems follows from
standard arguments. For example, consider the linear regression problem 
\begin{equation}
w_{l}=x_{l}\beta+\mathbf{z}_{l}^{\prime}\delta+\varepsilon_{l}\text{ for $%
l=1,...,n$}
\end{equation}
where $\beta$ is the (scalar) parameter of interest, $\mathbf{z}_{l}$ are
additional controls in the regression, and $(w_{l},x_{l},\mathbf{z}_{l})$
are associated with location $s_{l}$. Let $\tilde{x}_{l}=x_{l}-\mathbf{S}%
_{xz}\mathbf{S}_{zz}^{-1}\mathbf{z}_{l}$ denote the residual from regressing 
$x_{l}$ on $\mathbf{z}_{l}$, where we use the notation $\mathbf{S}%
_{ab}=n^{-1}\sum_{l=1}^{n}\mathbf{a}_{l}\mathbf{b}_{l}^{\prime}$ for any
vectors $\mathbf{a}_{l}$ and $\mathbf{b}_{l}$. Suppose $\mathbf{S}_{\tilde{x}%
\tilde{x}}\overset{p}{\rightarrow}\sigma_{\tilde{x}\tilde{x}}^{2}>0$ and 
\begin{equation*}
n^{-1/2}\sum_{l=1}^{n}\tilde{x}_{l}\varepsilon_{l}|\mathbf{s}\Rightarrow_{p}%
\mathcal{N}(0,\sigma_{\tilde{x}\varepsilon}^{2}).
\end{equation*}
Then 
\begin{equation*}
\sqrt{n}(\hat{\beta}-\beta)|\mathbf{s}\Rightarrow_{p}\mathcal{N}%
(0,\sigma^{2})
\end{equation*}
where $\sigma^{2}=\sigma_{\tilde{x}\varepsilon}^{2}/\sigma_{\tilde{x}\tilde{x%
}}^{4}$. Spatial correlation affects inference in this model through $%
\sigma_{\tilde{x}\varepsilon}^{2}$ which incorporates potential correlation
between $\tilde{x}_{l}\varepsilon_{l}$ and $\tilde{x}_{\ell}\varepsilon_{%
\ell}$ at spatial locations $s_{l}$ and $s_{\ell}$.

Thus, suppose that $\tilde{x}_{l}\varepsilon _{l}$ satisfies the assumptions
previously made for $u_{l}$. Then a straightforward calculation shows that
setting 
\begin{equation*}
y_{l}=\hat{\beta}+\frac{\tilde{x}_{l}\hat{\varepsilon}_{l}}{%
n^{-1}\sum_{l=1}^{n}\tilde{x}_{l}^{2}}
\end{equation*}%
in the analysis of the previous sections leads to analogous results with $%
\beta $ replacing $\mu $ as the parameter of interest. The extension to GMM
inference is analogous; see, for instance, Section 4.4 of \cite{Mueller2020}.

\subsection{Computational issues}

We highlight two computational issues. The first involves the calculation of
the SCPC critical value, and the second involves the problem of computing
the eigenvectors $\mathbf{r}_{j}\ $of $\mathbf{M\Sigma }(c_{0})\mathbf{M}$
when $n$ is very large.

The critical value $\func{cv}=\func{cv}_{\text{SCPC}}(q)$ solves $%
\sup_{c\geq c_{0}}\mathbb{P}_{\mathbf{\Sigma }(c)}(\tau
^{2}(q^{-1}\sum_{j=1}^{q}\mathbf{r}_{j}\mathbf{r}_{j}^{\prime })>\func{cv}%
^{2})=\alpha $ or equivalently (from Theorem \ref{thm:tau_finite_q}) $%
\sup_{c\geq c_{0}}\mathbb{P}\left( Z_{0}^{2}>\sum_{i=1}^{q}\eta
_{i}Z_{i}^{2}\right) =\alpha $ where $\eta _{i}=-\omega _{i}/\omega _{0}$, $%
\omega _{i}$ are the eigenvalues of $\mathbf{\hat{W}}^{0\prime }\mathbf{%
\Sigma }(c)\mathbf{\hat{W}}^{0}\mathbf{D}(\func{cv})$ with $\mathbf{\hat{W}}%
^{0}=[\mathbf{l},\mathbf{r}_{1}/\sqrt{q},\ldots ,\mathbf{r}_{q}/\sqrt{q}]$
and $Z_{j}\sim $i.i.d.~$\mathcal{N}(0,1)$. \cite{Bakirov05} show that 
\begin{equation}
\mathbb{P}\left( Z_{0}^{2}\geq \sum_{i=1}^{q}\eta _{i}Z_{i}^{2}\right) =%
\frac{1}{\pi }\int_{0}^{1}\frac{x^{\frac{q-1}{2}}}{\sqrt{(1-x)%
\prod_{i=1}^{q}(x+\eta _{i})}}dx,  \label{eq:BS}
\end{equation}%
which is readily evaluated by numerical quadrature. Thus $\func{cv}_{\text{%
SCPC}}(q)$ can be obtained by combining a root-finder with a grid search
over $c\geq c_{0}$.

The second problem involves computing the eigenvectors $\mathbf{r}%
_{j}=(r_{j,1},\ldots ,r_{j,n})^{\prime }$ of the $n\times n$ matrix $\mathbf{%
M\Sigma }(c_{0})\mathbf{M}$ when $n$ is very large (say, larger than $n=2000$%
). Here we can leverage the eigenfunction convergence result in Lemma \ref%
{thm:eigenfunc} as discussed in Section \ref{sec:SCPCconv}: In the notation
defined there, we seek to approximate $\mathbf{r}_{j}=(\hat{\varphi}%
_{j}^{0}(s_{1}),\ldots ,\hat{\varphi}_{j}^{0}(s_{n}))^{\prime }$. Consider a
random subset of size $\tilde{n}<n$ of the observed locations $\{\tilde{s}%
_{l}\}_{l=1}^{\tilde{n}}\subset \{s_{l}\}_{l=1}^{n}$, and let $\mathbf{%
\tilde{\Sigma}}(c_{0})$ be the implied $\tilde{n}\times \tilde{n}$
covariance matrix of $(u(\tilde{s}_{1}),\ldots ,u(\tilde{s}_{n}))^{\prime }$
using the benchmark covariance function $\sigma _{u}^{0}(r-s|c_{0})=\exp
[-c_{0}||r-s||]$. Let the eigenvector corresponding to the $j$th largest
eigenvalue $\tilde{\lambda}_{j}$ of $\mathbf{\tilde{\Sigma}}(c_{0})$ be $%
\mathbf{\tilde{r}}_{j}=(\tilde{r}_{1,j},\ldots ,\tilde{r}_{\tilde{n}%
,j})^{\prime }$ with $\tilde{n}^{-1}\mathbf{\tilde{r}}_{j}^{\prime }\mathbf{%
\tilde{r}}_{j}=1$. As long as $\tilde{n}\rightarrow \infty $ and $\lambda
_{q+1}>\lambda _{q}$, Lemma \ref{thm:eigenfunc} implies that the span of the 
$\mathcal{S}\mapsto 
\mathbb{R}
$ functions 
\begin{equation*}
\tilde{\varphi}_{j}^{0}(s)=\tilde{n}^{-1}\tilde{\lambda}_{j}^{-1}\sum_{l=1}^{%
\tilde{n}}\tilde{r}_{j,l}\left( \exp [-c_{0}||s-\tilde{s}_{l}||]-\tilde{n}%
^{-1}\sum_{\ell =1}^{\tilde{n}}\exp [-c_{0}||\tilde{s}_{l}-\tilde{s}_{\ell
}||]\right) \text{, }j=1,\ldots ,q
\end{equation*}%
converges to the eigenspace spanned by $\varphi _{j}^{0}$, $j=1,\ldots ,q$,
just like the full sample estimators $\hat{\varphi}_{j}^{0}$. Thus, it is
formally justified to approximate the value of $\hat{\varphi}_{j}^{0}$ at
locations $\{s_{l}\}_{l=1}^{n}\ni s_{\ell }\notin \{\tilde{s}_{l}\}_{l=1}^{%
\tilde{n}}$ via $r_{j,\ell }=\hat{\varphi}_{j}^{0}(s_{\ell })\approx \tilde{%
\varphi}_{j}^{0}(s_{\ell })$---this is a version of the so-called Nystr\"{o}%
m method (see, for instance, \cite{Rasmussen05} for discussion and
references).

In practice, such approximations can be carried out for several random
subsets of $\tilde{n}$ locations, followed by a (sample) principle component
analysis to extract the best approximation to the space spanned by the first 
$q$ eigenvectors. The resulting algorithm has $O(n)$ running time (in
contrast to the $O(n^{2})$ running time of a basic implementation of \cite%
{Conley99}-type kernel estimators). We provide corresponding STATA and
Matlab code in the replication files.

\subsection{Extension to F-tests}

Consider the case where $\mathbf{y}_{l}=\mathbf{\mu }+\mathbf{u}_{l}$ with $%
\mathbf{y}_{l}$, $\mathbf{\mu }$ and $\mathbf{u}_{l}$ $m\times 1$ vectors,
and we seek to test the hypothesis $H_{0}:\mathbf{\mu }=\mathbf{\mu }_{0}$.
Suppose the observations conditional on $\mathbf{s}$ are generated by the
model 
\begin{equation*}
\mathbf{u}(s_{l})=\mathbf{B}(c_{n}s_{l})\text{, }l=1,\ldots ,n
\end{equation*}%
where $\mathbf{B}(s)$ is an $%
\mathbb{R}
^{m}$-valued mean-zero stationary random field on $%
\mathbb{R}
^{d}$ with covariance function $\mathbb{E}[\mathbf{B}(s)\mathbf{B}%
(r)^{\prime }]=\mathbf{\sigma }_{B}(r-s)$. Let $\mathbf{Y}$ and $\mathbf{U}$
be the $n\times m$ matrices of observations and innovations, respectively,
and\textbf{\ }$\mathbf{\bar{y}}=n^{-1}\sum_{l=1}^{n}\mathbf{y}_{l}$ the
sample mean. The natural analogue to the t-statistic $\tau ^{2}(\mathbf{\hat{%
W}\hat{W}}^{\prime })$ is Hotelling's-$T^{2}$ statistic 
\begin{equation}
T^{2}(\mathbf{\hat{W}\hat{W}}^{\prime })=n(\mathbf{\bar{y}-\mu }_{0}\mathbf{)%
}^{\prime }\left( \mathbf{Y}^{\prime }\mathbf{\hat{W}\hat{W}^{\prime }}%
\mathbf{Y}\right) ^{-1}(\mathbf{\bar{y}-\mu }_{0}\mathbf{).}
\label{eq:HotellingT2}
\end{equation}

One would expect that under mixing and moment conditions similar to those of
Lemma \ref{lm:NormalLimit_q} (ii) 
\begin{equation}
\limfunc{vec}(\mathbf{W}^{0\prime }\mathbf{U})\mathbf{|s}\Rightarrow _{p}%
\mathcal{N}\left( 0,a\mathbf{\sigma }_{B}(r-s)\otimes \mathbf{V}_{1}+\left[
\int \mathbf{\sigma }_{B}(s)ds\right] \otimes \mathbf{V}_{2}\right) .
\label{eq:multi_conv_q}
\end{equation}%
Note that $T^{2}(\mathbf{\hat{W}\hat{W}}^{\prime })$ is invariant to the
transformation $\mathbf{Y\rightarrow YH}$ for nonsingular $\mathbf{H}$. For
the purposes of studying the limit distribution of $T^{2}(q)$ under weak
correlation, it is thus without loss of generality to normalize $\mathbf{%
\sigma }_{B}(\cdot )$ such that the limit covariance matrix in (\ref%
{eq:multi_conv_q})\ becomes 
\begin{equation}
\limfunc{diag}(\mathbf{\kappa })\otimes \mathbf{V}_{1}+(\mathbf{I}_{m}-%
\limfunc{diag}(\mathbf{\kappa }))\otimes \mathbf{V}_{2}
\label{eq:multi_asycov_norm}
\end{equation}%
where $\mathbf{\kappa }$ is a $m\times 1$ vector with elements in $[0,1)$.

For the extension of the SCPC method, consider a benchmark model indexed by $%
\mathbf{c}=(c_{1},\ldots,c_{m})$ where $\limfunc{vec}(\mathbf{Y)|s}\sim%
\mathcal{N}(\mathbf{\mu}\otimes\mathbf{l}_{n},\mathbf{\Sigma(c))}$ with $%
\mathbf{\Sigma(c})=\limfunc{diag}(\mathbf{\Sigma}(c_{1}),\ldots,\mathbf{%
\Sigma}(c_{m}))$, and $\mathbf{\Sigma}(c)$ is as in Section \ref{sec:SCPC}.
Let $\mathbf{c}_{0}=c_{0}\mathbf{l}_{m}$, a $m\times1$ vector of identical
elements $c_{0}$. The SCPC test statistic $T_{\text{SCPC}}^{2}(q)$ is a
special case of (\ref{eq:HotellingT2})\ with the columns of $\hat{\mathbf{W}}
$ equal to the first $q$ eigenvectors of $\mathbf{\Sigma}(c_{0})$, scaled to
have length $1/\sqrt{q}$, and with critical value $\func{cv}_{\text{SCPC}%
}^{T}$ chosen to satisfy 
\begin{equation*}
\sup_{\mathbf{c}\geq\mathbf{c}_{0}}\mathbb{P}_{\mathbf{\Sigma}(\mathbf{c}%
)}^{0}(T_{\text{SCPC}}^{2}(q)>\func{cv}_{\text{SCPC}}^{T}(q)|\mathbf{s}%
)=\alpha,
\end{equation*}
under the null hypothesis, where $\mathbf{c}\geq\mathbf{c}_{0}$ is
understood as an elementwise inequality. The value of $q$ that minimizes the
expected volume of the confidence ellipsoid under $\limfunc{vec}(\mathbf{Y)|s%
}\sim\mathcal{N}(\mathbf{\mu}\otimes\mathbf{l},\mathbf{I}_{m}\mathbf{\otimes
I}_{n}\mathbf{)}$ is 
\begin{equation*}
\min_{q\geq m}\mathbb{E}[\func{vol}\{\mathbf{m}:\mathbf{m}^{\prime}(q^{-1}%
\mathbf{S}_{q})^{-1}\mathbf{m}\leq n^{-1}\func{cv}_{\text{SCPC}%
}^{T}(q)\}]=\min_{q\geq m}\frac{(2\pi\func{cv}_{\text{SCPC}%
}^{T}(q)/n)^{m/2}\Gamma((q+1)/2)}{\sqrt{q}\Gamma((q-m+1)/2)\Gamma(m/2+1)}
\end{equation*}
where $\mathbf{S}_{q}$ is distributed Wishart with $q$ degrees of freedom,
and the equality follows from Bartlett's decomposition of a Wishart random
matrix, and the formulas for the expectation of a $\chi$ random variable and
the volume of an $m$ dimensional ellipsoid.

Since appropriate choices of $c_{j,n}\rightarrow \infty $, $j=1,\ldots ,m$
in the benchmark model can replicate the normalized limit distributions (\ref%
{eq:multi_asycov_norm}) for all $\mathbf{\kappa }$, by the same arguments
that lead to Theorem \ref{thm:cv_n}, $T_{\text{SCPC}}^{2}(q)$ controls size
under all weak correlation patterns that induce (\ref{eq:multi_conv_q}). And
as in Section \ref{sec:gmm}, it is straightforward to adapt $T_{\text{SCPC}%
}^{2}(q)$ to test $m$ restrictions in linear regression and GMM problems. We
omit details for brevity. Generalizing the results about the small sample
robustness of $\tau _{\text{SCPC}}$ under potentially strong correlations in
Theorem \ref{thm:robust} to $T_{\text{SCPC}}^{2}$ is interesting but
challenging, and beyond the scope of this paper.\newpage

\appendix

\section{\textbf{Appendix}}

\begin{lemma}
\label{lem:weak_in_p}If $\mathbf{X}_{n}|\mathbf{s}_{n}\Rightarrow _{p}%
\mathbf{X}$ and $\mathbf{Y}_{n}\overset{p}{\rightarrow }0$, then $(\mathbf{X}%
_{n}+\mathbf{Y}_{n})|\mathbf{s}_{n}\Rightarrow _{p}\mathbf{X}$.
\end{lemma}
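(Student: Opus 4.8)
The plan is to verify the defining condition of $\Rightarrow_{p}$ directly. Fix an arbitrary bounded continuous $h:\mathbb{R}^{p}\mapsto\mathbb{R}$ with $|h|\le C$; I want $\mathbb{E}[h(\mathbf{X}_{n}+\mathbf{Y}_{n})|\mathbf{s}_{n}]\overset{p}{\rightarrow}\mathbb{E}[h(\mathbf{X})]$. Since the hypothesis $\mathbf{X}_{n}|\mathbf{s}_{n}\Rightarrow_{p}\mathbf{X}$ already supplies $\mathbb{E}[h(\mathbf{X}_{n})|\mathbf{s}_{n}]\overset{p}{\rightarrow}\mathbb{E}[h(\mathbf{X})]$, it suffices to show that the nonnegative random variable $W_{n}:=\mathbb{E}[\,|h(\mathbf{X}_{n}+\mathbf{Y}_{n})-h(\mathbf{X}_{n})|\ \big|\ \mathbf{s}_{n}]$ converges to $0$ in probability, because conditional Jensen gives $|\mathbb{E}[h(\mathbf{X}_{n}+\mathbf{Y}_{n})|\mathbf{s}_{n}]-\mathbb{E}[h(\mathbf{X}_{n})|\mathbf{s}_{n}]|\le W_{n}$, and then the result follows by combining the two limits.

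The main obstacle is that $h$ is merely continuous on $\mathbb{R}^{p}$, not uniformly continuous, so the usual Slutsky splitting on $\{\|\mathbf{Y}_{n}\|\le\delta\}$ does not by itself make $|h(\mathbf{X}_{n}+\mathbf{Y}_{n})-h(\mathbf{X}_{n})|$ small. To get around this I first confine $\mathbf{X}_{n}$ to a compact ball using a tightness-in-probability bound extracted from the hypothesis. Fix $\epsilon>0$ and choose a bounded continuous ramp $g$ with $\mathbf{1}[\|x\|>K]\le g(x)\le\mathbf{1}[\|x\|>K-1]$; then monotonicity of conditional expectation and $\Rightarrow_{p}$ give $\mathbb{P}(\|\mathbf{X}_{n}\|>K\,|\,\mathbf{s}_{n})\le\mathbb{E}[g(\mathbf{X}_{n})|\mathbf{s}_{n}]\overset{p}{\rightarrow}\mathbb{E}[g(\mathbf{X})]\le\mathbb{P}(\|\mathbf{X}\|>K-1)$, and $K$ can be taken large enough that the last term is below $\epsilon$. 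On the compact ball $\{\|x\|\le K+1\}$ the function $h$ is uniformly continuous, so there is $\delta\in(0,1)$ with $|h(x)-h(y)|<\epsilon$ whenever $\|x-y\|\le\delta$ and both points lie in that ball.

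I then split the conditional expectation defining $W_{n}$ over the three events $\{\|\mathbf{X}_{n}\|\le K,\ \|\mathbf{Y}_{n}\|\le\delta\}$, $\{\|\mathbf{Y}_{n}\|>\delta\}$, and $\{\|\mathbf{X}_{n}\|>K,\ \|\mathbf{Y}_{n}\|\le\delta\}$. On the first event both $\mathbf{X}_{n}$ and $\mathbf{X}_{n}+\mathbf{Y}_{n}$ lie in $\{\|x\|\le K+1\}$ and are within $\delta$, so uniform continuity bounds the integrand by $\epsilon$; on the remaining two events the integrand is bounded crudely by $2C$. This yields
\[
W_{n}\le\epsilon+2C\,\mathbb{P}(\|\mathbf{Y}_{n}\|>\delta\,|\,\mathbf{s}_{n})+2C\,\mathbb{P}(\|\mathbf{X}_{n}\|>K\,|\,\mathbf{s}_{n}).
\]
For the middle term I use the second essential (but elementary) point: since $\mathbf{Y}_{n}\overset{p}{\rightarrow}0$ is \emph{unconditional}, $\mathbb{E}[\mathbb{P}(\|\mathbf{Y}_{n}\|>\delta\,|\,\mathbf{s}_{n})]=\mathbb{P}(\|\mathbf{Y}_{n}\|>\delta)\to0$, so this nonnegative conditional probability tends to $0$ in $L^{1}$ and hence in probability. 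The last term is dominated by $\mathbb{E}[g(\mathbf{X}_{n})|\mathbf{s}_{n}]$, whose probability limit is at most $\epsilon$ by the tightness step. Combining, $W_{n}\le(1+2C)\epsilon+o_{p}(1)$; since $\epsilon>0$ is arbitrary and $C$ is fixed, $W_{n}\overset{p}{\rightarrow}0$, which completes the argument and establishes $(\mathbf{X}_{n}+\mathbf{Y}_{n})|\mathbf{s}_{n}\Rightarrow_{p}\mathbf{X}$.
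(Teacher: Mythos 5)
Your proof is correct, but it follows a genuinely different route from the paper's. The paper does not argue with a fixed test function: it invokes the equivalence (due to Berti et al.\ 2006) between $\mathbf{X}_{n}|\mathbf{s}_{n}\Rightarrow_{p}\mathbf{X}$ and $\sup_{h\in\mathrm{BL}}|\mathbb{E}[h(\mathbf{X}_{n})-h(\mathbf{X})|\mathbf{s}_{n}]|\overset{p}{\rightarrow}0$, where $\mathrm{BL}$ is the class of functions bounded by one with unit Lipschitz constant; it then truncates $\mathbf{Y}_{n}^{\ast}=\mathbf{Y}_{n}\mathbf{1}[\|\mathbf{Y}_{n}\|\leq1]$ and uses the Lipschitz property to bound the perturbation uniformly over $h\in\mathrm{BL}$ by $\mathbb{E}[\|\mathbf{Y}_{n}^{\ast}\|\,|\,\mathbf{s}_{n}]$, which is shown to vanish in probability by a boundedness-plus-contradiction argument. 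You instead verify the paper's definition of $\Rightarrow_{p}$ directly, one bounded continuous $h$ at a time, which is legitimate since the definition quantifies over such $h$ individually. The price you pay for forgoing the Lipschitz class is that mere continuity of $h$ gives no uniform modulus, and you correctly plug this hole with a tightness/compactification step: the ramp function $g$ applied to the hypothesis yields $\mathbb{P}(\|\mathbf{X}_{n}\|>K\,|\,\mathbf{s}_{n})\leq\mathbb{E}[g(\mathbf{X}_{n})|\mathbf{s}_{n}]\overset{p}{\rightarrow}\mathbb{E}[g(\mathbf{X})]<\epsilon$, after which uniform continuity on the ball $\{\|x\|\leq K+1\}$ does the work. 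Both proofs face the same small technical point of converting the unconditional $\mathbf{Y}_{n}\overset{p}{\rightarrow}0$ into an in-probability statement about a conditional quantity; your observation that $\mathbb{E}[\mathbb{P}(\|\mathbf{Y}_{n}\|>\delta\,|\,\mathbf{s}_{n})]=\mathbb{P}(\|\mathbf{Y}_{n}\|>\delta)\rightarrow0$, so the conditional probability vanishes in $L^{1}$ and hence in probability, is cleaner and more direct than the paper's subsequence contradiction. What each approach buys: yours is self-contained (no appeal to the metric characterization of weak convergence in probability) at the cost of the extra tightness step; the paper's is shorter modulo the cited equivalence and delivers control uniformly over the bounded-Lipschitz class, which is the form in which the lemma is actually reused elsewhere in their argument, though the two conclusions are equivalent.
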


\begin{proof}
Let $\func{BL}$ be the space of Lipschitz continuous functions $%
\mathbb{R}
^{p}\mapsto 
\mathbb{R}
$ bounded by one with unit Lipschitz constant. By \cite{Berti2006}, page 93, 
$\mathbf{X}_{n}|\mathbf{s}_{n}\Rightarrow _{p}\mathbf{X}$ is equivalent to $%
\sup_{h\in \func{BL}}|\mathbb{E}[h(\mathbf{X}_{n})-h(\mathbf{X})|\mathbf{s}%
_{n}]|\overset{p}{\rightarrow }0$, so it suffices to show that $\sup_{h\in 
\func{BL}}|\mathbb{E}[h(\mathbf{X}_{n}+\mathbf{Y}_{n})-h(\mathbf{X})|\mathbf{%
s}_{n}]|\overset{p}{\rightarrow }0$. Let $\mathbf{Y}_{n}^{\ast }=\mathbf{Y}%
_{n}\mathbf{1}[||\mathbf{Y}_{n}||\leq 1]$, so that 
\begin{equation*}
\sup_{h\in \func{BL}}|\mathbb{E}[h(\mathbf{X}_{n}+\mathbf{Y}_{n})-h(\mathbf{X%
})|\mathbf{s}_{n}]|\leq \sup_{h\in \func{BL}}|\mathbb{E}[h(\mathbf{X}_{n}+%
\mathbf{Y}_{n}^{\ast })-h(\mathbf{X})|\mathbf{s}_{n}]|+2\mathbb{P}(||\mathbf{%
Y}_{n}^{\ast }||>1|\mathbf{s}_{n}).
\end{equation*}%
Note that with $\Delta _{n}(h)=h(\mathbf{X}_{n}+\mathbf{Y}_{n}^{\ast })-h(%
\mathbf{X}_{n})$, $|\Delta _{n}(h)|\leq ||\mathbf{Y}_{n}^{\ast }||$ a.s. for
all $h\in \func{BL}$, so that 
\begin{eqnarray*}
\sup_{h\in \func{BL}}|\mathbb{E}[h(\mathbf{X}_{n}+\mathbf{Y}_{n}^{\ast })-h(%
\mathbf{X})|\mathbf{s}_{n}]| &=&\sup_{h\in \func{BL}}|\mathbb{E}[\Delta
_{n}(h)+h(\mathbf{X}_{n})-h(\mathbf{X})|\mathbf{s}_{n}]| \\
&\leq &\sup_{h\in \func{BL}}\left( |\mathbb{E}[\Delta _{n}(h)|\mathbf{s}%
_{n}]|+|\mathbb{E}[h(\mathbf{X}_{n})-h(\mathbf{X})|\mathbf{s}_{n}]|\right) 
\\
&\leq &\mathbb{E}[||\mathbf{Y}_{n}^{\ast }|||\mathbf{s}_{n}]+\sup_{h\in 
\func{BL}}|\mathbb{E}[h(\mathbf{X}_{n})-h(\mathbf{X})|\mathbf{s}_{n}]|.
\end{eqnarray*}%
We are left to show that $\mathbf{Y}_{n}\overset{p}{\rightarrow }0$ implies $%
\mathbb{P}(||\mathbf{Y}_{n}^{\ast }||>1|\mathbf{s}_{n})\overset{p}{%
\rightarrow }0$ and $\mathbb{E}[||\mathbf{Y}_{n}^{\ast }|||\mathbf{s}_{n}]%
\overset{p}{\rightarrow }0.$

Consider the latter claim. Suppose otherwise. Then for some $\varepsilon >0$%
, and some subsequence $n^{\prime }$ of $n$, $\lim_{n^{\prime }\rightarrow
\infty }\mathbb{P}(\mathbb{E}[||\mathbf{Y}_{n^{\prime }}^{\ast }|||\mathbf{s}%
_{n^{\prime }}]>\varepsilon )>\varepsilon $, so that $\lim \inf_{n^{\prime
}\rightarrow \infty }\mathbb{E}[||\mathbf{Y}_{n^{\prime }}^{\ast
}||]>\varepsilon ^{2}$. But since $\mathbf{Y}_{n}^{\ast }$ is bounded, $%
\mathbf{Y}_{n}\overset{p}{\rightarrow }0$ implies $\lim_{n\rightarrow \infty
}\mathbb{E}[||\mathbf{Y}_{n}^{\ast }||]=0$, a contradiction. A similar
argument yields $\mathbb{E}[||\mathbf{Y}_{n}^{\ast }|||\mathbf{s}_{n}]%
\overset{p}{\rightarrow }0$, concluding the proof. 
\end{proof}

\bigskip

\textbf{Proof of Lemma \ref{lm:NormalLimit_q}: }(i) Since $B$ is Gaussian, $%
n^{-1}\mathbf{W}_{n}^{0\prime }\mathbf{u}_{n}|\mathbf{s}_{n}\sim \mathcal{N}%
(0,\mathbf{\Omega }_{n})$ with $\mathbf{\Omega }_{n}=n^{-2}\sum_{l,\ell }%
\mathbf{w}^{0}(s_{l})\mathbf{w}^{0}(s_{\ell })^{\prime }\sigma _{B}(c\left(
s_{l}-s_{\ell }\right) )$. It thus suffices to show that $\mathbf{\Omega }%
_{n}\overset{p}{\rightarrow }\mathbf{\Omega }_{sc}$.

We have $\mathbf{\Omega }_{n}=\sigma _{B}(0)n^{-2}\sum_{l}\mathbf{w}%
^{0}(s_{l})\mathbf{w}^{0}(s_{l})^{\prime }+n^{-2}\sum_{l\neq \ell }\mathbf{w}%
^{0}(s_{l})\mathbf{w}^{0}(s_{\ell })^{\prime }\sigma _{B}(c\left(
s_{l}-s_{\ell }\right) )$, and $||n^{-2}\sum_{l}\mathbf{w}^{0}(s_{l})\mathbf{%
w}^{0}(s_{l})^{\prime }||\leq n^{-1}\sup_{s\in \mathcal{S}}||\mathbf{w}%
^{0}(s)||^{2}\rightarrow 0.$ Furthermore, 
\begin{equation*}
\mathbb{E}\left[ \frac{1}{n(n-1)}\sum_{l\neq \ell }\mathbf{w}^{0}(s_{l})%
\mathbf{w}^{0}(s_{\ell })^{\prime }\sigma _{B}(c\left( s_{l}-s_{\ell
}\right) )\right] =\mathbb{E}[\mathbf{w}^{0}(s_{1})\mathbf{w}%
^{0}(s_{2})^{\prime }\sigma _{B}(c\left( s_{1}-s_{2}\right) )]=\mathbf{%
\Omega }_{sc}
\end{equation*}%
and with $w_{i}^{0}(s)$ the $i$th element of $\mathbf{w}^{0}(s)$, 
\begin{multline*}
\mathbb{E}\left[ \left( \frac{1}{n(n-1)}\sum_{l\neq \ell
}w_{i}^{0}(s_{l})w_{j}^{0}(s_{\ell })^{\prime }\sigma _{B}(c\left(
s_{l}-s_{\ell }\right) )\right) ^{2}\right] \\
=\frac{(n-2)(n-3)}{n(n-1)}\mathbb{E}[w_{i}^{0}(s_{1})w_{j}^{0}(s_{2})^{%
\prime }\sigma _{B}(c\left( s_{1}-s_{2}\right) )]\mathbb{E}%
[w_{i}^{0}(s_{3})w_{j}^{0}(s_{4})^{\prime }\sigma _{B}(c\left(
s_{3}-s_{4}\right) )] \\
+\frac{4(n-2)}{n(n-1)}\mathbb{E}[w_{i}^{0}(s_{1})w_{j}^{0}(s_{2})^{\prime
}\sigma _{B}(c\left( s_{1}-s_{2}\right)
)w_{i}^{0}(s_{1})w_{j}^{0}(s_{3})^{\prime }\sigma _{B}(c\left(
s_{1}-s_{3}\right) )] \\
+\frac{2}{n(n-1)}\mathbb{E}[w_{i}^{0}(s_{1})w_{j}^{0}(s_{2})^{\prime }\sigma
_{B}(c\left( s_{1}-s_{2}\right) )w_{i}^{0}(s_{1})w_{j}^{0}(s_{2})^{\prime
}\sigma _{B}(c\left( s_{1}-s_{2}\right) )]
\end{multline*}%
so that $\limfunc{Var}[\frac{1}{n(n-1)}\sum_{l\neq \ell
}w_{i}^{0}(s_{l})w_{j}^{0}(s_{\ell })^{\prime }\sigma _{B}(c\left(
s_{l}-s_{\ell }\right) )]=O(n^{-1})$, and therefore $\mathbf{\Omega }_{n}%
\overset{p}{\rightarrow }\mathbf{\Omega }_{sc}$.

(ii) Follows from Theorem 3.2 in \cite{Lahiri_2003} and the Cram\'{e}r-Wold
device. $\blacksquare $

\bigskip{}

\textbf{Proof of Theorem \ref{thm:tau_finite_q}: }In the notation of Lemma %
\ref{lm:NormalLimit_q}, with $\mathbf{X}=(X_{0},\mathbf{X}_{1:q}^{\prime
})^{\prime }$ and $\mathbf{Z}=(Z_{0},\ldots ,Z_{q})^{\prime }$ we have 
\begin{eqnarray*}
\mathbb{P}\left( \tau _{n}^{2}(\mathbf{W}_{n}\mathbf{W}_{n}^{\prime })>\func{%
cv}^{2}|\mathbf{s}_{n}\right) &\overset{p}{\rightarrow }&\mathbb{P}\left( 
\frac{X_{0}^{2}}{\mathbf{X}_{1:q}^{\prime }\mathbf{X}_{1:q}}>\func{cv}%
^{2}\right) \\
&=&\mathbb{P}\left( X_{0}^{2}-\func{cv}^{2}\mathbf{X}_{1:q}^{\prime }\mathbf{%
X}_{1:q}>0\right) \\
&=&\mathbb{P}\left( \mathbf{X}^{\prime }\mathbf{D}(\func{cv})\mathbf{X}%
>0\right) \\
&=&\mathbb{P}(\mathbf{Z}^{\prime }\mathbf{\Omega }^{1/2}\mathbf{D}(\func{cv})%
\mathbf{\Omega }^{1/2}\mathbf{Z}>0) \\
&=&\mathbb{P}\left( \sum_{i=0}^{q}\omega _{i}Z_{i}^{2}>0\right)
\end{eqnarray*}%
where the convergence follows from Lemma \ref{lm:NormalLimit_q} and the
continuous mapping theorem, and the last equality follows by similarity of
the matrices $\mathbf{\Omega }^{1/2}\mathbf{D}(\func{cv})\mathbf{\Omega }%
^{1/2}$ and $\mathbf{D}(\func{cv})\mathbf{\Omega }$. The claim about the
sign of the eigenvalues follows from Lemma \ref{lm:lam1Abar} below. $%
\blacksquare $

\bigskip{}

\textbf{Proof of Theorem \ref{thm:what}: }We show that Lemma \ref%
{lm:NormalLimit_q} (i) and (ii) continue to hold with $\mathbf{w}^{0}$
replaced by $\mathbf{\hat{w}}^{0}$. We have 
\begin{equation*}
\mathbb{E}\left[ \left( \sum_{l=1}^{n}(\hat{w}%
_{i}^{0}(s_{l})-w_{i}^{0}(s_{l}))u(s_{l})\right) ^{2}\left\vert \mathbf{s}%
_{n}\right. \right] \leq \sup_{s\in \mathcal{S}}|\hat{w}%
_{i}^{0}(s)-w_{i}^{0}(s)|^{2}\sum_{l,\ell }|\sigma _{B}(c_{n}\left(
s_{l}-s_{\ell }\right) )|
\end{equation*}%
almost surely. Proceeding as in the proof of Lemma \ref{lm:NormalLimit_q}
(i) now shows that $\mathbb{E}[n^{-2}\sum_{l,\ell }|\sigma _{B}(c\left(
s_{l}-s_{\ell }\right) )|]=\int \int |\sigma _{B}(c(r-s))|g(r)g(s)drds$, so $%
n^{-2}\sum_{l,\ell }|\sigma _{B}(c\left( s_{l}-s_{\ell }\right) )|=O_{p}(1)$%
. Similarly, under the assumptions of part (ii) of Lemma \ref%
{lm:NormalLimit_q}, proceeding as in the proof of Lemma 5.2 of Lahiri (2003)
yields $\mathbb{E}[a_{n}n^{-1}\sum_{l,\ell }|\sigma _{B}(c_{n}\left(
s_{l}-s_{\ell }\right) )|]\rightarrow a\sigma _{u}^{2}+\int_{\mathbb{R}%
^{d}}|\sigma _{B}(s)|ds\int g(s)^{2}ds$. The result thus follows from (\ref%
{eq:what_conv}) and Lemma \ref{lem:weak_in_p}.

\bigskip

The proof of Theorem \ref{thm:kernel_conv} requires a slightly more general
version of Theorem \ref{thm:what}.

\begin{lemma}
\label{lem:block_what}In the notation of Lemma \ref{thm:eigenfunc}, suppose $%
\mathbf{\hat{W}}=\mathbf{\hat{L}\hat{\Phi}}$, where the $i$th column of the $%
n\times q$ matrix $\mathbf{\hat{\Phi}}$ is $\mathbf{\hat{v}}_{i}=(\hat{%
\varphi}_{i}(s_{1}),\ldots,\hat{\varphi}_{i}(s_{n}))^{\prime}$ and $\mathbf{%
\hat{L}}=\limfunc{diag}(\hat{\lambda}_{1},\ldots,\hat{\lambda}_{q})$. Under
the assumptions of Lemma \ref{lm:NormalLimit_q}, $c_{n}^{d}n^{-2}(\mathbf{u}%
^{\prime}\mathbf{\hat{W}\hat{W}}^{\prime}\mathbf{u-u}^{\prime}\mathbf{WW}%
^{\prime}\mathbf{u})|\mathbf{s}_{n}\overset{p}{\rightarrow}0$, where $%
\mathbf{W}=\mathbf{L\Phi}$, $\mathbf{L}=\limfunc{diag}(\lambda_{1}\mathbf{l}%
_{m_{1}},\ldots,\lambda_{p}\mathbf{l}_{m_{p}})$ and the $i$th column of $%
\mathbf{\Phi}$ is equal to $(\varphi_{i}(s_{1}),\ldots,\varphi_{i}(s_{n}))^{%
\prime}$.
\end{lemma}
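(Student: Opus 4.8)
The plan is to work directly with the $n\times n$ middle matrices, exploiting that the quadratic forms are invariant to rotations of the eigenvectors within each eigenspace. I would write $\mathbf{\hat{W}\hat{W}}^{\prime }=\mathbf{\hat{\Phi}\hat{L}}^{2}\mathbf{\hat{\Phi}}^{\prime }$ and $\mathbf{WW}^{\prime }=\mathbf{\Phi L}^{2}\mathbf{\Phi }^{\prime }$, and let $\mathbf{O}=\operatorname{diag}(\mathbf{\hat{O}}_{(1)},\ldots ,\mathbf{\hat{O}}_{(p)})$ be the block-diagonal rotation from Lemma \ref{thm:eigenfunc}(a), so that $\mathbf{\Phi }=\mathbf{\hat{\Phi}O}^{\prime }+\mathbf{R}$ with $\sup_{l}||r_{l}||=O_{p}(n^{-1/2})$, where $r_{l}$ is the $l$th row of $\mathbf{R}$. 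Since $\mathbf{L}^{2}$ is block-scalar (equal to $\lambda _{(j)}^{2}\mathbf{I}_{m_{j}}$ on the $j$th block) it commutes with $\mathbf{O}$, giving $\mathbf{O}^{\prime }\mathbf{L}^{2}\mathbf{O}=\mathbf{L}^{2}$. Substituting and collecting terms yields
\begin{equation*}
\mathbf{\hat{W}\hat{W}}^{\prime }-\mathbf{WW}^{\prime }=\mathbf{\hat{\Phi}}(\mathbf{\hat{L}}^{2}-\mathbf{L}^{2})\mathbf{\hat{\Phi}}^{\prime }-\mathbf{\hat{\Phi}O}^{\prime }\mathbf{L}^{2}\mathbf{R}^{\prime }-\mathbf{RL}^{2}\mathbf{O\hat{\Phi}}^{\prime }-\mathbf{RL}^{2}\mathbf{R}^{\prime }.
\end{equation*}
Sandwiching with $\mathbf{u}$, multiplying by $c_{n}^{d}n^{-2}=a_{n}n^{-1}$, and setting $\mathbf{a}=c_{n}^{d/2}n^{-1}\mathbf{\hat{\Phi}}^{\prime }\mathbf{u}$ and $\mathbf{b}=c_{n}^{d/2}n^{-1}\mathbf{R}^{\prime }\mathbf{u}$, the target becomes
\begin{equation*}
\mathbf{a}^{\prime }(\mathbf{\hat{L}}^{2}-\mathbf{L}^{2})\mathbf{a}-\mathbf{a}^{\prime }\mathbf{O}^{\prime }\mathbf{L}^{2}\mathbf{b}-\mathbf{b}^{\prime }\mathbf{L}^{2}\mathbf{Oa}-\mathbf{b}^{\prime }\mathbf{L}^{2}\mathbf{b}.
\end{equation*}
Because $||\mathbf{L}^{2}||$ is a fixed bound and $||\mathbf{O}||=1$, it suffices to show $||\mathbf{a}||=O_{p}(1)$, $||\mathbf{b}||=o_{p}(1)$, and $||\mathbf{\hat{L}}^{2}-\mathbf{L}^{2}||=o_{p}(1)$.

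Two of these are immediate. Since $c_{n}^{d/2}n^{-1}=a_{n}^{1/2}n^{-1/2}$ and the columns of $\mathbf{\hat{\Phi}}$ are the estimated eigenfunctions $\hat{\varphi}_{i}$, which by Lemma \ref{thm:eigenfunc}(a) converge uniformly to the continuous $\varphi _{i}$ up to the rotation $\mathbf{O}$, Theorem \ref{thm:what} applied to $\mathbf{a}=a_{n}^{1/2}n^{-1/2}\mathbf{\hat{\Phi}}^{\prime }\mathbf{u}$ gives convergence in distribution conditional on $\mathbf{s}_{n}$, hence $||\mathbf{a}||=O_{p}(1)$. For the eigenvalues, the kernel is bounded on the compact $\mathcal{S}$, so $\lambda _{i},\hat{\lambda}_{i}$ are bounded and $|\hat{\lambda}_{i}^{2}-\lambda _{i}^{2}|\leq C|\hat{\lambda}_{i}-\lambda _{i}|$; part (b) then gives $||\mathbf{\hat{L}}^{2}-\mathbf{L}^{2}||\leq (\sum_{i}(\hat{\lambda}_{i}^{2}-\lambda _{i}^{2})^{2})^{1/2}=O_{p}(n^{-1/2})$, so the first term is $O_{p}(1)\cdot o_{p}(1)$.

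The crux is showing $||\mathbf{b}||=o_{p}(1)$; the naive bound $||\mathbf{b}||\leq c_{n}^{d/2}n^{-1}\sup_{l}||r_{l}||\sum_{l}|u_{l}|$ only yields $O_{p}(a_{n}^{1/2})=O_{p}(1)$, which is not enough. Instead I would bound the conditional second moment, exploiting the decay of $\sigma _{B}$. Conditional on $\mathbf{s}_{n}$ the values $u_{l}=B(c_{n}s_{l})$ are mean zero with $\operatorname{Cov}(u_{l},u_{\ell }|\mathbf{s}_{n})=\sigma _{B}(c_{n}(s_{l}-s_{\ell }))$ while $\mathbf{R}$ is nonrandom, so
\begin{equation*}
\mathbb{E}[||\mathbf{b}||^{2}|\mathbf{s}_{n}]=c_{n}^{d}n^{-2}\sum_{l,\ell }(r_{l}^{\prime }r_{\ell })\sigma _{B}(c_{n}(s_{l}-s_{\ell }))\leq c_{n}^{d}n^{-2}\sup_{l}||r_{l}||^{2}\sum_{l,\ell }|\sigma _{B}(c_{n}(s_{l}-s_{\ell }))|.
\end{equation*}
The summability bound $\sum_{l,\ell }|\sigma _{B}(c_{n}(s_{l}-s_{\ell }))|=O_{p}(n^{2}/c_{n}^{d})$, already established in the proofs of Lemma \ref{lm:NormalLimit_q}(i) and Theorem \ref{thm:what} (via Lahiri's Lemma 5.2) in both correlation regimes, combines with $\sup_{l}||r_{l}||^{2}=O_{p}(n^{-1})$ to give $\mathbb{E}[||\mathbf{b}||^{2}|\mathbf{s}_{n}]=O_{p}(n^{-1})\overset{p}{\rightarrow }0$. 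A conditional Markov inequality with a subsequence argument as in the proof of Lemma \ref{lem:weak_in_p} then yields $||\mathbf{b}||^{2}=o_{p}(1)$. Bounding each of the four terms in the display by products of these norms shows the whole expression is $o_{p}(1)$; this is precisely the ingredient needed to invoke Lemma \ref{lem:weak_in_p} and transfer the limit of Theorem \ref{thm:tau_finite_q} from $\mathbf{W}$ to $\mathbf{\hat{W}}$. The main obstacle is the $\mathbf{b}$ term: the uniform eigenfunction rate from Lemma \ref{thm:eigenfunc}(a) must be paired with the decay/summability of $\sigma _{B}$ through a second-moment computation, since the crude triangle-inequality bound fails to deliver $o_{p}(1)$.
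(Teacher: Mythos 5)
Your proposal is correct and follows essentially the same route as the paper's proof: the paper likewise inserts the block rotation $\mathbf{\hat{O}}=\limfunc{diag}(\mathbf{\hat{O}}_{(1)},\ldots,\mathbf{\hat{O}}_{(p)})$, exploits $\mathbf{\hat{O}}$ commuting with the block-scalar $\mathbf{L}^{2}$, replaces $\mathbf{\hat{O}\hat{\Phi}}^{\prime}$ by $\mathbf{\Phi}^{\prime}$ and $\mathbf{\hat{L}}^{2}$ by $\mathbf{L}^{2}$ using Lemma \ref{thm:eigenfunc}(a),(b) together with the $O_{p}(1)$ bound from Lemma \ref{lm:NormalLimit_q}, and concludes via Lemma \ref{lem:weak_in_p}. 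Your explicit four-term expansion and the conditional second-moment bound on $\mathbf{b}$ (pairing the $O_{p}(n^{-1/2})$ sup-norm rate with the summability $\sum_{l,\ell}|\sigma_{B}(c_{n}(s_{l}-s_{\ell}))|=O_{p}(n^{2}/c_{n}^{d})$) is exactly what the paper compresses into its citation of ``the reasoning in the proof of Theorem \ref{thm:what}.''
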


\begin{proof}
With $\mathbf{\hat{O}=}\limfunc{diag}(\mathbf{\hat{O}}_{(1)},\ldots ,\mathbf{%
\hat{O}}_{(p)})$, 
\begin{eqnarray*}
c_{n}^{d}n^{-2}\mathbf{u}^{\prime }\mathbf{\hat{\Phi}\hat{L}}^{2}\mathbf{%
\hat{\Phi}}^{\prime }\mathbf{u} &=&c_{n}^{d}n^{-2}\mathbf{u}^{\prime }%
\mathbf{\hat{\Phi}\mathbf{\hat{O}\hat{O}}}^{\prime }\mathbf{\hat{L}}^{2}%
\mathbf{\hat{O}}^{\prime }\mathbf{\hat{O}\hat{\Phi}}^{\prime }\mathbf{u} \\
&=&c_{n}^{d}n^{-2}\mathbf{u}^{\prime }\mathbf{\Phi \mathbf{\hat{O}}}^{\prime
}\mathbf{\hat{L}}^{2}\mathbf{\hat{O}}^{\prime }\mathbf{\Phi }^{\prime }%
\mathbf{u}+o_{p}(1) \\
&=&c_{n}^{d}n^{-2}\mathbf{u}^{\prime }\mathbf{\Phi \mathbf{\hat{O}}}^{\prime
}\mathbf{L}^{2}\mathbf{\hat{O}}^{\prime }\mathbf{\Phi }^{\prime }\mathbf{u}%
+o_{p}(1) \\
&=&c_{n}^{d}n^{-2}\mathbf{u}^{\prime }\mathbf{\Phi L}^{2}\mathbf{\Phi }%
^{\prime }\mathbf{u}+o_{p}(1)
\end{eqnarray*}%
where the first line follows from $\mathbf{\hat{O}}^{\prime }\mathbf{\hat{O}%
=I}_{q}$, the second from Lemma \ref{thm:eigenfunc} (a) and (b) and the
reasoning in the proof of Theorem \ref{thm:what}, the third from Lemma \ref%
{thm:eigenfunc} (b) and $||c_{n}^{d/2}n^{-1}\mathbf{\hat{O}}^{\prime }%
\mathbf{\Phi }^{\prime }\mathbf{u}||\leq ||\mathbf{\hat{O}}||\cdot
||c_{n}^{d/2}n^{-1}\mathbf{\Phi }^{\prime }\mathbf{u}||=O_{p}(1)$ using
Lemma \ref{lm:NormalLimit_q}, and the fourth from $\mathbf{\mathbf{\hat{O}}}%
^{\prime }\mathbf{L}^{2}\mathbf{\hat{O}}^{\prime }=\mathbf{L}^{2}$ a.s. The
result now follows from Lemma \ref{lem:weak_in_p}.
\end{proof}

{}

\textbf{Proof of Theorem \ref{thm:kernel_conv}: }For the first claim, by
Theorem 4.4.6 of \cite{Vasudeva2017}, $\omega _{0}=\sup_{||f||=1}\langle
f,RTRf\rangle $, so it suffices to show that for some $f\in \mathcal{L}%
_{G}^{2}$, $\langle f,RTRf\rangle >0$. In the weak correlation case, this
holds for $f(s)=(\kappa +(1-\kappa )g(s))^{-1/2}$, since $\langle
f,R_{wc}TR_{wc}f\rangle =\langle 1,T1\rangle =\int \int (1-\bar{k}%
(r,s))dG(r)dG(s)=1$. In the strong correlation case, the same conclusion
holds by setting $f$ such that $R_{sc}f=1$. Such an $f$ exists, because the
kernel of $R_{sc}^{2}$ is equal to $\{0\}$ by assumption about $\sigma _{B}$%
, so the range of $R_{sc}$ is $\mathcal{L}_{G}^{2}\backslash \{0\}$ by
Theorem 3.5.8 of \cite{Vasudeva2017}.

Under the null hypothesis, $\mathbb{P}(\tau _{n}^{2}(\overline{\mathbf{K}}%
_{n})>\func{cv}^{2}|\mathbf{s}_{n})=\mathbb{P}(\hat{\xi}_{n}>0|\mathbf{s}%
_{n})$, where $\hat{\xi}_{n}=c_{n}^{d}n^{-2}\sum_{l,\ell }u_{l}u_{\ell }(1-%
\func{cv}^{2}\hat{k}_{n}(s_{l},s_{\ell }))$. By construction of $\hat{\lambda%
}_{i}$ and $\hat{\varphi}_{i}(\cdot )$ in Lemma \ref{thm:eigenfunc}, for all 
$1\leq l,\ell \leq n$, 
\begin{equation*}
\hat{k}_{n}(s_{l},s_{\ell })=\sum_{i=1}^{n}\hat{\lambda}_{i}\hat{\varphi}%
_{i}(s_{l})\hat{\varphi}_{i}(s_{\ell }).
\end{equation*}%
For a given $q$ satisfying the assumption of Lemma \ref{thm:eigenfunc}, and
all $n>q$, let 
\begin{equation*}
\hat{k}_{n,q}(r,s)=\sum_{i=1}^{q}\hat{\lambda}_{i}\hat{\varphi}_{i}(r)\hat{%
\varphi}_{i}(s)
\end{equation*}%
and $\hat{\xi}_{n}^{q}=c_{n}^{d}n^{-2}\sum_{l,\ell }u_{l}u_{\ell }(1-\func{cv%
}^{2}\hat{k}_{n,q}(s_{l},s_{\ell })).$ We now show the last claim, that is $%
\mathbb{P}(\hat{\xi}_{n}>0|\mathbf{s}_{n})\overset{p}{\rightarrow }\QTR{up}{%
\mathbb{P}(\sum_{i=0}^{\infty }\omega _{i}Z_{i}^{2}>0)}$, which is implied
by the following three claims 
\begin{eqnarray}
&&\text{(i) for any }\left. \varepsilon >0\text{\ \ }\lim_{q\rightarrow
\infty }\limsup_{n\rightarrow \infty }\mathbb{P}(|\hat{\xi}_{n}-\hat{\xi}%
_{n}^{q}|>\varepsilon )=0\right.  \label{eq:c1} \\
&&\text{(ii) for any fixed }q\text{, }\left. \mathbb{P}(\hat{\xi}_{n}^{q}>0|%
\mathbf{s}_{n})\overset{p}{\rightarrow }\mathbb{\mathbb{P}}\left(
\sum_{i=0}^{q}\omega _{q,i}Z_{i}^{2}>0\right) \right.  \label{eq:c2} \\
&&\text{(iii) }\left. \lim_{q\rightarrow \infty }\mathbb{\mathbb{P}}\left(
\sum_{i=0}^{q}\omega _{q,i}Z_{i}^{2}>0\right) =\QTR{up}{\mathbb{P}\left( 
\QTR{up}{\sum_{i=0}^{\infty }\omega _{i}Z_{i}^{2}>0}\right) }\right.
\label{eq:c3}
\end{eqnarray}%
for some double array of real numbers $\omega _{q,i}$ by invoking Lemma \ref%
{lem:weak_in_p}.

For claim (i), note that for all $n>q$, $\hat{\xi}_{n}\leq \hat{\xi}_{n}^{q}$
a.s., and 
\begin{eqnarray*}
\mathbb{E}[\hat{\xi}_{n}^{q}-\hat{\xi}_{n}|\mathbf{s}_{n}]
&=&c_{n}^{d}n^{-2}\sum_{l,\ell }\sigma _{B}(c_{n}(s_{l}-s_{\ell }))\left(
\sum_{i=q+1}^{n}\hat{\lambda}_{i}\hat{\varphi}_{i}(s_{l})\hat{\varphi}%
_{i}(s_{\ell })\right) \\
&\leq &\hat{\lambda}_{q+1}c_{n}^{d}n^{-2}\sum_{l,\ell }\sigma
_{B}(c_{n}(s_{l}-s_{\ell }))
\end{eqnarray*}%
where the inequality follows from $\limfunc{tr}(\mathbf{AB})\leq \lambda
_{1}(\mathbf{A})\limfunc{tr}\mathbf{B}$ for positive semidefinite matrices $%
\mathbf{A},\mathbf{B}$ and $\lambda _{1}(\mathbf{A})$ the largest eigenvalue
of $\mathbf{A}$. By the same reasoning as employed in Theorem \ref{thm:what}%
, $c_{n}^{d}n^{-2}\sum_{l,\ell }\sigma _{B}(c_{n}(s_{l}-s_{\ell }))=O_{p}(1)$%
. Furthermore, by Lemma \ref{thm:eigenfunc} (b), $|\hat{\lambda}%
_{q+1}-\lambda _{q+1}|=O_{q}(n^{-1/2})$, and $\lim_{q\rightarrow \infty
}\lambda _{q}=0$. Thus (\ref{eq:c1}) follows.

For claim (ii), let $\varphi _{0}(s)=1$ and $\lambda _{0}=1$. By Lemma \ref%
{thm:eigenfunc} (a), Lemma \ref{lem:block_what} and Theorem \ref%
{thm:tau_finite_q}, claim (\ref{eq:c2})\ holds, where $\omega _{q,i}$ are
the eigenvalues of $\mathbf{D}(\func{cv})\mathbf{\Omega }$ for $\mathbf{%
\Omega \in \{\Omega }_{sc}\mathbf{,\Omega }_{wc}\mathbf{\},}$ and the $%
(i+1),(j+1)$ element of $\mathbf{\Omega }$ is equal to $\sqrt{\lambda
_{i}\lambda _{j}}\int \int \varphi _{i}(s)\sigma _{B}(c(r-s))\varphi
_{j}(r)dG(s)dG(r)$ and $\sqrt{\lambda _{i}\lambda _{j}}\int \varphi
_{i}(s)\varphi _{j}(s)(\kappa +(1-\kappa )g(s))ds$ under strong and weak
correlation, respectively.

For claim (iii), we first show that these $\omega _{q,i}$ are also the
eigenvalues of the finite rank self-adjoint linear operators $RT_{q}R$, $%
R\in \{R_{sc},R_{wc}\}$. To this end, let $\varphi _{i}^{\ast }(s)=\sqrt{%
\lambda _{i}}R\varphi _{i}(s)$. With $d_{0}=1$ and $d_{i}=-\func{cv}^{2}$,
we have 
\begin{equation*}
RT_{q}R(f)(s)=\int \left( \sum_{i=0}^{q}d_{i}\varphi _{i}^{\ast }(s)\varphi
_{i}^{\ast }(r)\right) f(r)dG(r)
\end{equation*}%
and the $(i+1),(j+1)$ element of $\mathbf{\Omega }$ stated above is equal to 
$\sqrt{\lambda _{i}\lambda _{j}}\langle \varphi _{i},R^{2}\varphi
_{j}\rangle =\sqrt{\lambda _{i}\lambda _{j}}\langle R\varphi _{i},R\varphi
_{j}\rangle =\int \varphi _{i}^{\ast }(s)\varphi _{j}^{\ast }(s)dG(s)$. Let $%
\mathbf{v}=(v_{0},\ldots ,v_{q})^{\prime }$ be an eigenvector of $\mathbf{D}(%
\func{cv})\mathbf{\Omega }$ corresponding to eigenvalue $\omega $, $\mathbf{D%
}(\func{cv})\mathbf{\Omega v}=\omega \mathbf{v}$. Then $\mathbf{D}(\func{cv})%
\mathbf{\Omega v}=\omega \mathbf{v}$ implies 
\begin{equation*}
\int \left( 
\begin{array}{ccc}
\varphi _{0}^{\ast }(r)\varphi _{0}^{\ast }(r) & \cdots & \varphi _{q}^{\ast
}(r)\varphi _{0}^{\ast }(r) \\ 
-\func{cv}^{2}\varphi _{0}^{\ast }(r)\varphi _{1}^{\ast }(r) & \cdots & -%
\func{cv}^{2}\varphi _{q}^{\ast }(r)\varphi _{1}^{\ast }(r) \\ 
\vdots & \ddots & \vdots \\ 
-\func{cv}^{2}\varphi _{0}^{\ast }(r)\varphi _{q}^{\ast }(r) & \cdots & -%
\func{cv}^{2}\varphi _{q}^{\ast }(r)\varphi _{q}^{\ast }(r)%
\end{array}%
\right) dG(r)\mathbf{v}=\omega \mathbf{v.}
\end{equation*}%
Premultiplying both sides of this equation by $(\varphi _{0}^{\ast
}(s),\ldots ,\varphi _{q}^{\ast }(s))$ yields 
\begin{eqnarray}
\sum_{j=0}^{q}\sum_{i=0}^{q}v_{j}\varphi _{i}^{\ast }(s)\int d_{i}\varphi
_{j}^{\ast }(r)\varphi _{i}^{\ast }(r)dG(r) &=&\omega
\sum_{j=0}^{q}v_{j}\varphi _{j}^{\ast }(s)  \notag \\
\int \left( \sum_{i=0}^{q}d_{i}\varphi _{i}^{\ast }(s)\varphi _{i}^{\ast
}(r)\right) \left( \sum_{j=0}^{q}v_{j}\varphi _{j}^{\ast }(r)\right) dG(r)
&=&\omega \sum_{j=0}^{q}v_{j}\varphi _{j}^{\ast }(s)
\label{eq:op_equivalence}
\end{eqnarray}%
so $\sum_{j=0}^{q}v_{j}\varphi _{j}^{\ast }(r)$ is an eigenvector of $%
RT_{q}R $ with eigenvalue $\omega $, and since the kernel of $RT_{q}R$
contains all functions that are orthogonal to $\{\varphi _{i}^{\ast
}\}_{i=0}^{q}$, these are the only nonzero eigenvalues.

Now let $\omega _{q,i}^{\Delta }$ be the eigenvalues of the self-adjoint
linear operator $R(T-T_{q})R$. By \cite{Kato1987} (also see the development
on page 911 of \cite{Rosasco2010}), there is an enumeration of the
eigenvalues $\omega _{q,i}$ such that 
\begin{equation}
\sum_{i=0}^{\infty }(\omega _{q,i}-\omega _{i})^{2}\leq \sum_{i=0}^{\infty
}(\omega _{q,i}^{\Delta })^{2}=||R(T-T_{q})R||_{HS}  \label{eq:eval_diff}
\end{equation}%
where $||R(T-T_{q})R||_{HS}$ is the Hilbert-Schmidt norm on the operator $%
R(T-T_{q})R:\mathcal{L}_{G}^{2}\mapsto \mathcal{L}_{G}^{2}$ induced by the
norm $\sqrt{\langle f,f\rangle }$. Now $||R(T-T_{q})R||_{HS}\leq
||R||^{2}\cdot ||T-T_{q}||_{HS}$ (cf. (\ref{HS_op_normineq}) below), and
since $T-T_{q}$ is an integral operator, $||T-T_{q}||_{HS}=\int \int \left(
\sum_{i=q+1}^{\infty }\lambda _{i}\varphi _{i}(s)\varphi _{j}(s)\right)
^{2}dG(s)dG(r)$. By Mercer's Theorem, this converges to zero as $%
q\rightarrow \infty $, so that 
\begin{equation}
\lim_{q\rightarrow \infty }\sum_{i=0}^{\infty }(\omega _{q,i}-\omega
_{i})^{2}=0\text{.}  \label{eq:diff_om_conv}
\end{equation}%
Thus using the same order of eigenvalues as in (\ref{eq:eval_diff}), we also
have $\limfunc{Var}[\sum_{i=0}^{q}\omega _{q,i}Z_{i}^{2}-\sum_{i=0}^{\infty
}\omega _{i}Z_{i}^{2}]\leq 2\sum_{i=0}^{\infty }(\omega _{q,i}-\omega
_{i})^{2}$, with the right-hand side converging to zero as $q\rightarrow
\infty $ by (\ref{eq:diff_om_conv}). But mean-square convergence implies
convergence in distribution, and (\ref{eq:c3}) follows.

For the second claim of the theorem, by Lemma \ref{lm:NormalLimit_q}, $%
\omega_{q,i}\leq0$ for $i\geq1$, which in conjunction with (\ref%
{eq:diff_om_conv}) implies $\omega_{i}\leq0$ for $i\geq1$. $\blacksquare$

\bigskip{}

\textbf{Proof of Lemma \ref{thm:eigenfunc}: }We initially show a weaker
claim than part (a), namely that there exists a sequence of $q\times q$
rotation matrices $\mathbf{\hat{O}}_{n}=\mathbf{\hat{O}}_{n}(\mathbf{s}_{n})$
with elements $\hat{O}_{n,ij}$ such that 
\begin{equation}
\max_{i\leq q}\sup_{s\in \mathcal{S}}\left\vert \varphi
_{i}(s)-\sum_{j=1}^{q}\hat{O}_{n,ij}\hat{\varphi}_{i}(s)\right\vert
=O_{p}(n^{-1/2}).  \label{eq:rot_weak}
\end{equation}

The proof follows closely the development in \cite{Rosasco2010}, denoted RBV
in the following. Let $k_{0}(r,s)=\bar{k}(r,s)+1.$ Conditional on $\mathbf{s}%
_{n}$, define the linear operators $\mathcal{L}_{G}^{2}\mapsto\mathcal{L}%
_{G}^{2}$ 
\begin{eqnarray*}
M(f)(s) & = & f(s)-\int f(r)dG(r) \\
M_{n}(f)(s) & = & f(s)-\int f(r)dG_{n}(r) \\
L(f)(s) & = & \int k_{0}(r,s)f(r)dG(r) \\
L_{n}(f)(s) & = & \int k_{0}(r,s)f(r)dG_{n}(r)
\end{eqnarray*}
and the derived operators $\bar{L}=MLM$, $\bar{L}_{n}=ML_{n}M$ and $\hat{L}%
_{n}=M_{n}L_{n}M_{n}$, so that $\bar{L}(f)(s)=\int f(r)\bar{k}(r,s)dG(r)$, $%
\bar{L}_{n}(f)(s)=\int\bar{k}(r,s)f(r)dG_{n}(r)$ and $\hat{L}_{n}(f)(s)=\int%
\hat{k}_{n}(r,s)f(r)dG_{n}(r)$, where $G_{n}$ is the empirical distribution
of $\{s_{l}\}_{l=1}^{n}$.

Let $\mathcal{H}\subset\mathcal{L}_{G}^{2}$ be the Reproducing Kernel
Hilbert Space (RKHS) of functions $f:\mathcal{S}\mapsto%
\mathbb{R}%
$ with kernel $k_{0}$ and inner product $\langle\cdot,\cdot\rangle_{\mathcal{%
H}}$ satisfying 
\begin{equation*}
\langle f,k_{0}(\cdot,r)\rangle_{\mathcal{H}}=f(r)
\end{equation*}
and associated norm $||f||_{\mathcal{H}}$. Let $K=\sup_{s\in\mathcal{S}%
}k_{0}(s,s).$ Define $\mathcal{\bar{H}}$ as the RKHS of functions $f:%
\mathcal{S}\mapsto%
\mathbb{R}%
$ with kernel $\bar{k}$, and $\mathcal{H}_{1}$ as the RKHS of functions $f:%
\mathcal{S}\mapsto%
\mathbb{R}%
$ with kernel equal to $1$, which only consists of the constant function.
Since $k_{0}=\bar{k}+1$, $\mathcal{H}$ contains all functions that can be
written as linear combinations of $\mathcal{\bar{H}}$ and $\mathcal{H}_{1}$
(see, for instance, Theorem 2.16 in \cite{Saitoh2016}). Thus $\mathcal{H}$
contains the constant function, and $||1||_{\mathcal{H}}<\infty$.
Furthermore, since for any $f\in\mathcal{H}$, $|f(r)|=\langle
f(\cdot),k_{0}(\cdot,r)\rangle_{\mathcal{H}}\leq||f||_{\mathcal{H}%
}\cdot||k_{0}(\cdot,r)||_{\mathcal{H}}\leq\sqrt{K}||f||_{\mathcal{H}}$, we
have 
\begin{equation}
\sup_{r\in\mathcal{S}}|f(r)|\leq\sqrt{K}\cdot||f||_{\mathcal{H}}.
\label{sup_ineq}
\end{equation}

As in RBV, view the operators above as operators on $\mathcal{H\mapsto H}$.
The operator norm $||A||$ of the operator $A:\mathcal{H\mapsto H}$ is
defined as $\sup_{||f||_{\mathcal{H}}=1}||Af||_{\mathcal{H}}$, and $A$ is
called bounded if $||A||<\infty$. A bounded operator $A$ is Hilbert-Schmidt
if $\sum_{j=1}^{\infty}||Ae_{j}||<\infty$ for some (any) orthonormal basis $%
e_{j}$. The space of Hilbert-Schmidt operators is a Hilbert space endowed
with the norm $||A||_{HS}=\sqrt{\sum_{j=1}^{\infty}\langle
Ae_{j},Ae_{j}\rangle_{\mathcal{H}}}$, and for any Hilbert-Schmidt operator $%
A $ and bounded operator $B$, 
\begin{equation}
||AB||_{HS}\leq||A||_{HS}||B||\text{, }||BA||_{HS}\leq||B||\cdot||A||_{HS}%
\text{.}  \label{HS_op_normineq}
\end{equation}
By Theorem 7 of RBV, $L$ and $L_{n}$ are Hilbert-Schmidt.

Furthermore, for any $f\in\mathcal{H}$, 
\begin{eqnarray*}
||Mf||_{\mathcal{H}} & = & ||f-\int f(r)dG(r)||_{\mathcal{H}} \\
& \leq & ||f||_{\mathcal{H}}+||1||_{\mathcal{H}}\int f(r)dG(r) \\
& \leq & ||f||_{\mathcal{H}}+||1||_{\mathcal{H}}\sup_{r\in\mathcal{S}}|f(r)|
\end{eqnarray*}
so that (\ref{sup_ineq}) implies that $||M||$ is a bounded operator. By the
same argument, so is $M_{n}$ (almost surely). Thus, from (\ref%
{HS_op_normineq}), also $\bar{L}$, $\bar{L}_{n}$ and $\hat{L}_{n}$ are
Hilbert-Schmidt for almost all $\mathbf{s}_{n}$.

Conditioning on $\mathbf{s}_{n}$ throughout, we have the almost sure
inequalities%
\begin{equation*}
||\hat{L}_{n}-\bar{L}||_{HS}\leq ||\hat{L}_{n}-\bar{L}_{n}||_{HS}+||\bar{L}%
_{n}-\bar{L}||_{HS}
\end{equation*}%
and using (\ref{HS_op_normineq}) 
\begin{eqnarray*}
||\hat{L}_{n}-\bar{L}_{n}||_{HS} &\leq
&||(M_{n}-M)L_{n}M_{n}||_{HS}+||ML_{n}(M_{n}-M)||_{HS} \\
&\leq &||M_{n}-M||\cdot ||M_{n}||\cdot ||L_{n}||_{HS}+||M_{n}-M||\cdot
||M||\cdot ||L_{n}||_{HS}
\end{eqnarray*}%
and 
\begin{eqnarray*}
||(M_{n}-M)f||_{\mathcal{H}} &=&\left\Vert \int f(r)dG_{n}(r)-\int
f(r)dG(r)\right\Vert _{\mathcal{H}} \\
&=&||1||_{\mathcal{H}}\left\vert \int f(r)dG_{n}(r)-\int
f(r)dG(r)\right\vert .
\end{eqnarray*}

Now consider the sequence of real independent random variables $f(s_{l}),$
which have mean $\mathbb{E}[f(s_{l})]=\int f(r)dG(r)$, and, by (\ref%
{sup_ineq}), are almost surely bounded. Since $\int
f(r)(dG_{n}(r)-dG(r))=n^{-1}\sum_{l=1}^{n}f(s_{l})-\mathbb{E}[f(s_{1})]$, so
that by Hoeffding's inequality, with probability of at least $1-2e^{-\delta
} $ 
\begin{equation*}
\left\vert \int f(r)(dG_{n}(r)-dG(r))\right\vert \leq \sqrt{2\delta }%
n^{-1/2}\sup_{r\in \mathcal{S}}|f(r)|
\end{equation*}%
for all $\delta \geq 0$. This holds for all $f\in \mathcal{H}$, so we
conclude that $||M_{n}-M||=O_{p}(n^{-1/2})$.

Furthermore, applying the same reasoning as in the proof of Theorem 7 of
RBV, $||\bar{L}_{n}-\bar{L}||_{HS}=O_{p}(n^{-1/2})$. Thus, $||\hat{L}_{n}-%
\bar{L}||_{HS}=O_{p}(n^{-1/2})$.

The conclusion now follows from similar arguments as employed in Proposition
10 and 12 of RBV. In particular, note that $\varphi _{i}\in \mathcal{H}$ for
all $i$. Furthermore, $\int \varphi _{i}(s)dG(s)=\lambda _{i}^{-1}\int
\varphi _{i}(r)\bar{k}(r,s)dG(r)dG(s)=0$. Thus, with $e_{i}=\sqrt{\lambda
_{i}}\varphi _{i}\in \mathcal{H}$, $Me_{i}=e_{i}$, and $\langle
e_{i},e_{i}\rangle _{\mathcal{H}}=\langle e_{i}(\cdot ),\lambda
_{i}^{-1}\int \bar{k}(r,\cdot )e_{i}(r)dG(r)\rangle _{\mathcal{H}}=\lambda
_{i}^{-1}\langle e_{i},\bar{L}e_{i}\rangle _{\mathcal{H}}=\lambda
_{i}^{-1}\langle e_{i},Le_{i}\rangle _{\mathcal{H}}=\lambda _{i}^{-1}\int
\langle e_{i}(\cdot ),k_{0}(r,\cdot )\rangle _{\mathcal{H}%
}e_{i}(r)dG(r)=\lambda _{i}^{-1}\int e_{i}^{2}(r)dG(r)=1$, so that $e_{i}$
are normalized eigenvectors of $\bar{L}:\mathcal{H\mapsto H}$. Since $%
\mathcal{H}\subset \mathcal{L}_{G}^{2}$, these are the only eigenfunctions
of $\bar{L}:\mathcal{H\mapsto H}$ with positive eigenvalue, so that the
spectrum of $\bar{L}$ is equal to $\{\lambda _{i}\}_{i=1}^{\infty }$ (cf.
Proposition 8 of RBV).

Also, $\hat{\varphi}_{i}\in \mathcal{H}$, and since $\mathbf{\hat{v}}_{i}$
is the eigenvector of $n^{-1}\mathbf{\hat{K}}_{n}$ with eigenvalue $\hat{%
\lambda}_{i}$, $n^{-1}\mathbf{\hat{K}}_{n}\mathbf{\hat{v}}_{i}=\hat{\lambda}%
_{i}\mathbf{\hat{v}}_{i}$, we obtain for $\hat{\lambda}_{i}>0$ that 
\begin{eqnarray*}
\hat{L}_{n}(\hat{\varphi}_{i})(\cdot ) &=&\int \hat{k}_{n}(r,\cdot )\hat{%
\varphi}_{i}(r)dG_{n}(r) \\
&=&n^{-1}\sum_{j=1}^{n}\hat{k}_{n}(\cdot ,s_{j})\hat{\varphi}_{i}(s_{j}) \\
&=&n^{-2}\hat{\lambda}_{i}^{-1}\sum_{j=1}^{n}\hat{k}_{n}(\cdot
,s_{j})\sum_{l=1}^{n}\hat{v}_{i,l}\hat{k}_{n}(s_{j},s_{l}) \\
&=&n^{-1}\sum_{j=1}^{n}\hat{k}_{n}(\cdot ,s_{j})\hat{v}_{i,j} \\
&=&\hat{\lambda}_{i}\hat{\varphi}_{i}(\cdot )
\end{eqnarray*}%
and 
\begin{equation*}
\int \hat{\varphi}_{i}(r)^{2}dG_{n}(r)=n^{-3}\hat{\lambda}%
_{i}^{-2}\sum_{j=1}^{n}\sum_{\ell =1}^{n}\sum_{t=1}^{n}\hat{v}_{i,j}\hat{k}%
_{n}(s_{j},s_{\ell })\hat{k}_{n}(s_{\ell },s_{t})\hat{v}_{i,t}=1.
\end{equation*}%
Furthermore, from $\sum_{l=1}^{n}\hat{v}_{i,l}=0$, also $\int \hat{\varphi}%
_{i}(s)dG_{n}(s)=0$, so that $M_{n}\hat{e}_{i}=\hat{e}_{i}$. Thus, with $%
\hat{e}_{i}=\sqrt{\hat{\lambda}_{i}}\hat{\varphi}_{i}\in \mathcal{H},$ $%
\langle \hat{e}_{i},\hat{e}_{i}\rangle _{\mathcal{H}}=\langle \hat{e}%
_{i}(\cdot ),\hat{\lambda}_{i}^{-1}\int \hat{k}_{n}(r,\cdot )\hat{e}%
_{i}(r)dG_{n}(r)\rangle _{\mathcal{H}}=\hat{\lambda}_{i}^{-1}\langle \hat{e}%
_{i},\hat{L}_{n}\hat{e}_{i}\rangle _{\mathcal{H}}=\hat{\lambda}%
_{i}^{-1}\langle \hat{e}_{i},L_{n}\hat{e}_{i}\rangle _{\mathcal{H}}=\hat{%
\lambda}_{i}^{-1}\int \langle \hat{e}_{i}(\cdot ),k_{0}(r,\cdot )\rangle _{%
\mathcal{H}}\hat{e}_{i}(r)dG_{n}(r)=\hat{\lambda}_{i}^{-1}\int \hat{e}%
_{i}(r)^{2}dG_{n}(r)=1.$ Therefore $\hat{e}_{i}$ are normalized
eigenfunctions of $\hat{L}_{n}:\mathcal{H\mapsto H}$, and since all $f\in 
\mathcal{H}$ that are orthogonal to $\hat{e}_{i}$, $i=1,\ldots ,n$ are in
the kernel of $\hat{L}_{n}$, these are the only eigenfunctions of $\bar{L}:%
\mathcal{H\mapsto H}$ with positive eigenvalue, so the spectrum of $\hat{L}%
_{n}:\mathcal{H\mapsto H}$ is equal to $\{\hat{\lambda}_{i}\}_{i=1}^{n}$
(cf. Proposition 9 of RBV).

Part (b) of the lemma now follows from $||\hat{L}_{n}-\bar{L}%
||_{HS}^{2}=O_{p}(n^{-1})$ and the development on page 911 of RBV.

To establish (\ref{eq:rot_weak}), note that with the projection operators $%
P^{q}:\mathcal{H\mapsto H}$ and $\hat{P}^{q}:\mathcal{H\mapsto H}$ defined
via $P^{q}(f)(\cdot )=\sum_{i=1}^{q}\langle f,e_{i}\rangle _{\mathcal{H}%
}e_{i}(\cdot )$ and $\hat{P}^{q}(f)(\cdot )=\sum_{i=1}^{q}\langle f,\hat{e}%
_{i}\rangle _{\mathcal{H}}\hat{e}_{i}(\cdot )$, by Proposition 6 of RBV, $||%
\hat{P}^{q}-P^{q}||_{HS}\leq 2(\lambda _{q}-\lambda _{q+1})^{-1}||\hat{L}%
_{n}-\bar{L}||_{HS}+o_{p}(n^{-1/2})=O_{p}(n^{-1/2})$. Define the $q\times q$
matrix $\mathbf{\tilde{O}}_{n}$ with $i,j$th element $\tilde{O}%
_{n,ij}=\langle \hat{e}_{i},e_{j}\rangle _{\mathcal{H}}$. Then the $j,t$th
element of $\mathbf{\tilde{O}}_{n}^{\prime }\mathbf{\tilde{O}}_{n}$ is given
by $\sum_{i=1}^{q}\tilde{O}_{n,ij}\tilde{O}_{n,it}=\sum_{i=1}^{q}\langle 
\hat{e}_{i},e_{j}\rangle _{\mathcal{H}}\langle \hat{e}_{i},e_{t}\rangle _{%
\mathcal{H}}=\langle e_{j},\hat{P}^{q}(e_{t})\rangle _{\mathcal{H}}$, and $%
\mathbf{1}[j=t]=\langle e_{j},P^{q}(e_{t})\rangle _{\mathcal{H}}$, so that
by the Cauchy-Schwarz inequality%
\begin{eqnarray*}
\left\vert \sum_{i=1}^{q}\tilde{O}_{n,ij}\tilde{O}_{n,it}-\mathbf{1}%
[j=t]\right\vert &=&\left\vert \langle e_{j},(\hat{P}^{q}-P^{q})e_{t}\rangle
_{\mathcal{H}}\right\vert \\
&\leq &||\hat{P}^{q}-P^{q}||_{HS}=O_{p}(n^{-1/2}).
\end{eqnarray*}%
Thus $|\mathbf{|\tilde{O}}_{n}^{\prime }\mathbf{\tilde{O}}_{n}-\mathbf{I}%
_{q}||=O_{p}(n^{-1/2})$, and with $\mathbf{\hat{O}}_{n}=(\mathbf{\tilde{O}}%
_{n}^{\prime }\mathbf{\tilde{O}}_{n})^{-1/2}\mathbf{\tilde{O}}_{n}$, also $||%
\mathbf{\hat{O}}_{n}-\mathbf{\tilde{O}}_{n}||=O_{p}(n^{-1/2})$. Furthermore,
with $\hat{r}_{i}^{2}=\lambda _{i}/\hat{\lambda}_{i}\overset{p}{\rightarrow }%
1$ using part (b) of the lemma, 
\begin{eqnarray*}
\sqrt{\lambda _{i}}||\sum_{j=1}^{q}\hat{O}_{n,ij}\hat{\varphi}_{j}-\varphi
_{i}||_{\mathcal{H}} &=&||\hat{r}_{i}\sum_{j=1}^{q}\hat{O}_{n,ij}\hat{e}%
_{j}-e_{i}||_{\mathcal{H}} \\
&\leq &||\sum_{j=1}^{q}\tilde{O}_{n,ij}\hat{e}_{j}-e_{i}||_{\mathcal{H}%
}+||\sum_{j=1}^{q}(\hat{r}_{i}\hat{O}_{n,ij}-\tilde{O}_{n,ij})\hat{e}_{j}||_{%
\mathcal{H}} \\
&\leq &||(\hat{P}^{q}-P^{q})e_{i}||_{\mathcal{H}}+\sum_{j=1}^{q}|\hat{r}_{i}%
\hat{O}_{n,ij}-\tilde{O}_{n,ij}| \\
&\leq &||\hat{P}^{q}-P^{q}||_{HS}+\sum_{j=1}^{q}|\hat{r}_{i}\hat{O}_{n,ij}-%
\tilde{O}_{n,ij}|=O_{p}(n^{-1/2})
\end{eqnarray*}%
so (\ref{eq:rot_weak}) follows from (\ref{sup_ineq}).

The claim in part (a) of the lemma now follows by induction from (\ref%
{eq:rot_weak}): For $p=1$, this follows directly. Suppose the result holds
for $p-1$, and let $\mathbf{\hat{O}}_{B}=\limfunc{diag}(\mathbf{\hat{O}}%
_{(1)},\ldots ,\mathbf{\hat{O}}_{(p-1)})$, so that 
\begin{equation}
\sup_{s\in \mathcal{S}}||\mathbf{\hat{O}}_{B}\mathbf{\hat{\varphi}}_{B}(s)-%
\mathbf{\varphi }_{B}(s)||=O_{p}(n^{-1/2}),  \label{eq:ind_premise}
\end{equation}%
with $\mathbf{\varphi }_{B}$ and $\mathbf{\hat{\varphi}}_{B}$ the vector of
the first $\sum_{j=1}^{p-1}m_{j}$ eigenfunctions. Now let 
\begin{equation*}
\mathbf{\hat{O}}_{I}=\left( 
\begin{array}{cc}
\mathbf{\hat{O}}_{11} & \mathbf{\hat{O}}_{12} \\ 
\mathbf{\hat{O}}_{21} & \mathbf{\hat{O}}_{22}%
\end{array}%
\right)
\end{equation*}%
be the $(\sum_{j=1}^{p}m_{j})\times (\sum_{j=1}^{p}m_{j})$ matrix $\mathbf{%
\hat{O}}_{n}$ of (\ref{eq:rot_weak}) applied with $q=\sum_{j=1}^{p}m_{j}$,
with $\mathbf{\hat{O}}_{11}$ of the same dimensions as $\mathbf{\hat{O}}_{B}$%
. Let $\mathbf{\varphi }_{I-B}$ and $\mathbf{\hat{\varphi}}_{I-B}$ be the $%
m_{p}\times 1$ vectors of eigenfunctions with indices $%
\sum_{j=1}^{p-1}m_{j}+1,\ldots ,\sum_{j=1}^{p}m_{j}$, so that by the
conclusion of (\ref{eq:rot_weak}), $\sup_{s\in \mathcal{S}}||\mathbf{\hat{O}}%
_{11}\mathbf{\hat{\varphi}}_{B}(s)+\mathbf{\hat{O}}_{12}\mathbf{\hat{\varphi}%
}_{I-B}(s)-\mathbf{\varphi }_{B}(s)||=O_{p}(n^{-1/2})$ and $\sup_{s\in 
\mathcal{S}}||\mathbf{\hat{O}}_{21}\mathbf{\hat{\varphi}}_{B}(s)+\mathbf{%
\hat{O}}_{22}\mathbf{\hat{\varphi}}_{I-B}(s)-\mathbf{\varphi }%
_{I-B}(s)||=O_{p}(n^{-1/2})$. In conjunction with (\ref{eq:ind_premise}),
the former yields $\sup_{s\in \mathcal{S}}||(\mathbf{\hat{O}}_{11}-\mathbf{%
\hat{O}}_{B})\mathbf{\hat{\varphi}}_{B}(s)+\mathbf{\hat{O}}_{12}\mathbf{\hat{%
\varphi}}_{I-B}(s)||=O_{p}(n^{-1/2})$, which implies in light of (\ref%
{eq:rot_weak}) and the linear independence of eigenvectors that both $||%
\mathbf{\hat{O}}_{11}-\mathbf{\hat{O}}_{B}||=O_{p}(n^{-1/2})$ and $||\mathbf{%
\hat{O}}_{12}||=O_{p}(n^{-1/2})$. Since $\mathbf{\hat{O}}_{I}$ and $\mathbf{%
\hat{O}}_{B}$ are rotation matrices, $\mathbf{\hat{O}}_{B}^{\prime }\mathbf{%
\hat{O}}_{B}=\mathbf{\hat{O}}_{11}^{\prime }\mathbf{\hat{O}}_{11}+\mathbf{%
\hat{O}}_{21}^{\prime }\mathbf{\hat{O}}_{21}=\mathbf{I}$, so that $||\mathbf{%
\hat{O}}_{11}-\mathbf{\hat{O}}_{B}||=O_{p}(n^{-1/2})$ further implies $||%
\mathbf{\hat{O}}_{21}||=O_{p}(n^{-1/2})$. We conclude that also $\sup_{s\in 
\mathcal{S}}||\mathbf{\hat{O}}_{22}\mathbf{\hat{\varphi}}_{I-B}(s)-\mathbf{%
\varphi }_{I-B}(s)||=O_{p}(n^{-1/2})$, so that the result for $p$ holds with 
$\mathbf{\hat{O}}_{(p)}=\mathbf{\hat{O}}_{22}$, which concludes the proof. $%
\blacksquare $

\bigskip

\textbf{Proof of Theorem \ref{thm:cv_n}: }Suppose $\max (\overline{\func{cv}}%
^{2}-\func{cv}_{n}^{2},0)\overset{p}{\rightarrow }0$ does not hold. Then
there exists $\delta >0$ such that $\limsup_{n\rightarrow \infty }\mathbb{P}(%
\overline{\func{cv}}^{2}-\func{cv}_{n}^{2}>\delta )>\delta $. Define $%
\varkappa (\kappa ,\overline{\func{cv}}^{2})=\mathbb{P}\left(
\sum_{i=0}^{\infty }\omega _{i}(\kappa ,\overline{\func{cv}}%
)Z_{i}^{2}>0\right) $, so that $\sup_{0\leq \kappa <1}\varkappa (\kappa ,%
\overline{\func{cv}}^{2})=\alpha $ by definition of $\overline{\func{cv}}$.
By continuity of $\varkappa $, there exists $0\leq \kappa _{0}<1$ and $%
\overline{\func{cv}}^{2}-\delta /2\leq \overline{\func{cv}}_{0}^{2}\leq 
\overline{\func{cv}}^{2}$ such that $\varkappa (\kappa _{0},\overline{\func{%
cv}}_{0}^{2})=\alpha $. If $\kappa _{0}=0$, set $c_{n,1}=c_{n,0}$.
Otherwise, let $c_{n,1}\rightarrow \infty $ be such that the corresponding $%
a_{n,1}=c_{n,1}^{d}/n\rightarrow a_{1}$ satisfies $a_{1}\sigma
_{B}^{0}(0)/(a_{1}\sigma _{B}^{0}(0)+\int \sigma _{B}^{0}(s)ds)=\kappa _{0}$%
. Now let $\func{cv}_{n,1}^{2}$ solve 
\begin{equation*}
\mathbb{P}_{\mathbf{\Sigma (}c_{n,1})}^{0}(\tau _{n}^{2}\geq \func{cv}%
_{n,1}^{2}|\mathbf{s}_{n})=\alpha \ \ \text{ a.s.},
\end{equation*}%
so that clearly, $\func{cv}_{n,1}^{2}\leq \func{cv}_{n}^{2}$ a.s. for all
large enough $n$. Thus, with $\mathcal{A}_{n}$ the event that $\mathbf{s}%
_{n} $ takes on a value such that $\overline{\func{cv}}^{2}-\func{cv}%
_{n^{\prime },1}^{2}>\delta $, we also have $\limsup_{n\rightarrow \infty }%
\mathbb{P}(\mathcal{A}_{n})>\delta $, and there exists a subsequence $%
n^{\prime }\rightarrow \infty $ of $n$ such that $\mathbb{P}(\mathcal{A}%
_{n^{\prime }})>\delta $ for all $n^{\prime }$.

For all such $n^{\prime }$, 
\begin{equation}
\alpha =\mathbb{P}_{\mathbf{\Sigma (}c_{n^{\prime },1})}^{0}(\tau
_{n^{\prime }}^{2}\geq \func{cv}_{n^{\prime },1}^{2}|\mathcal{A}_{n^{\prime
}})\geq \mathbb{P}_{\mathbf{\Sigma (}c_{n^{\prime },1})}^{0}(\tau
_{n^{\prime }}^{2}\geq \overline{\func{cv}}^{2}-\delta |\mathcal{A}%
_{n^{\prime }})\text{ a.s.}  \label{eq:cvnp1}
\end{equation}%
and by Theorem \ref{thm:kernel_conv}, $\mathbb{P}_{\mathbf{\Sigma (}%
c_{n^{\prime },1})}^{0}(\tau _{n^{\prime }}^{2}\geq \overline{\func{cv}}%
^{2}-\delta |\mathcal{A}_{n^{\prime }})\rightarrow \varkappa (\kappa _{0},%
\overline{\func{cv}}^{2}-\delta )>\alpha $. This contradicts (\ref{eq:cvnp1}%
), and the result follows. $\blacksquare $

\bigskip {}{}

\begin{theorem}
\label{thm:asysize_qhat}Let $\hat{q}_{n}$ be an arbitrary function of $%
\mathbf{s}_{n}$ taking values in $\mathcal{Q}=\{1,2,\ldots,q_{\max}\}$ for
some sample size independent finite and nonrandom $q_{\max}$. Then for a
t-statistic $\tau_{n}(q)$ that satisfies the conditions of Theorem \ref%
{thm:cv_n} for all $q\in\mathcal{Q}$ with critical value $\func{cv}_{n}(q)$
as in (\ref{eq:robust_cv_n}), for any $\epsilon>0$, $\lim\sup_{n\rightarrow%
\infty}\mathbb{P(P}(\tau_{n}^{2}(\hat{q}_{n})>\func{cv}_{n}(\hat{q}_{n})^{2}|%
\mathbf{s}_{n})>\alpha+\epsilon)=0$.
\end{theorem}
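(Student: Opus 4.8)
The plan is to reduce the data-dependent statement to the fixed-$q$ conclusion of Theorem \ref{thm:cv_n} by exploiting two features of the hypothesis: that $\hat{q}_n$ is measurable with respect to $\mathbf{s}_n$, and that it takes values in the \emph{finite} set $\mathcal{Q}=\{1,\ldots,q_{\max}\}$ with $q_{\max}$ fixed and sample-size independent. The boundedness of $\mathcal{Q}$ is exactly what makes the argument work; were $q_{\max}$ allowed to grow with $n$, the union-bound step below would fail.

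First I would observe that, conditional on $\mathbf{s}_n$, the selector $\hat{q}_n=\hat{q}_n(\mathbf{s}_n)$ is a fixed integer. Hence the conditional rejection probability evaluated at $\hat{q}_n$ satisfies the pointwise (in $\mathbf{s}_n$) bound
\begin{equation*}
\mathbb{P}\left(\tau_n^2(\hat{q}_n)>\func{cv}_n(\hat{q}_n)^2 \mid \mathbf{s}_n\right)\leq\max_{q\in\mathcal{Q}}\mathbb{P}\left(\tau_n^2(q)>\func{cv}_n(q)^2 \mid \mathbf{s}_n\right),
\end{equation*}
since evaluating at the realized value of $\hat{q}_n$ cannot exceed the maximum over all admissible indices. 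Consequently, for any $\epsilon>0$, the event $\{\mathbb{P}(\tau_n^2(\hat{q}_n)>\func{cv}_n(\hat{q}_n)^2\mid\mathbf{s}_n)>\alpha+\epsilon\}$ is contained in the finite union $\bigcup_{q\in\mathcal{Q}}\{\mathbb{P}(\tau_n^2(q)>\func{cv}_n(q)^2\mid\mathbf{s}_n)>\alpha+\epsilon\}$.

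The remaining step is a union bound over the finitely many indices in $\mathcal{Q}$, giving
\begin{equation*}
\mathbb{P}\left(\mathbb{P}(\tau_n^2(\hat{q}_n)>\func{cv}_n(\hat{q}_n)^2\mid\mathbf{s}_n)>\alpha+\epsilon\right)\leq\sum_{q\in\mathcal{Q}}\mathbb{P}\left(\mathbb{P}(\tau_n^2(q)>\func{cv}_n(q)^2\mid\mathbf{s}_n)>\alpha+\epsilon\right).
\end{equation*}
By hypothesis each $\tau_n(q)$ satisfies the conditions of Theorem \ref{thm:cv_n} with critical value $\func{cv}_n(q)$ as in (\ref{eq:robust_cv_n}), so for each fixed $q$ the corresponding summand tends to $0$ as $n\to\infty$. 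A finite sum of terms each converging to $0$ converges to $0$, which yields $\limsup_{n\to\infty}\mathbb{P}(\mathbb{P}(\tau_n^2(\hat{q}_n)>\func{cv}_n(\hat{q}_n)^2\mid\mathbf{s}_n)>\alpha+\epsilon)=0$, as claimed.

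The only point requiring care is the conditioning argument in the second paragraph: because $\hat{q}_n$ depends on the data only through $\mathbf{s}_n$, conditioning on $\mathbf{s}_n$ renders $\hat{q}_n$ non-random, so the conditional probability at $\hat{q}_n$ is literally one of the $q_{\max}$ conditional probabilities indexed by $q\in\mathcal{Q}$ and is therefore dominated by their maximum. I emphasize that no uniformity of the convergence in Theorem \ref{thm:cv_n} across $q$ is needed; finiteness of $\mathcal{Q}$ alone suffices to pass from the pointwise-in-$q$ result to the bound for the selected $\hat{q}_n$. This is the mild sense in which the proof is essentially a bookkeeping extension of Theorem \ref{thm:cv_n} rather than a new argument.
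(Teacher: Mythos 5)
Your proof is correct and rests on the same core idea as the paper's: because $\hat{q}_n$ is a function of $\mathbf{s}_n$ alone and takes values in the finite set $\mathcal{Q}$, the statement reduces to the fixed-$q$ conclusion of Theorem \ref{thm:cv_n} applied separately to each $q\in\mathcal{Q}$. The only difference is presentational --- the paper argues by contradiction, partitioning on the events $\{\hat{q}_n=i\}$ and extracting a subsequence along which a single fixed $q$ carries probability bounded away from zero (a pigeonhole step), whereas your direct max-plus-union-bound argument accomplishes the same reduction more transparently and is equally valid.
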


\begin{proof}
Suppose otherwise. Then there exists $\epsilon >0$ and a subsequence $%
n^{\prime }\rightarrow \infty $ such that with $\mathcal{B}_{n}=\{\mathbf{s}%
_{n}:\mathbb{P}(\tau _{n}^{2}(\hat{q})>\func{cv}_{n}(\hat{q})^{2}|\mathbf{s}%
_{n})>\alpha +\epsilon \}\subset \mathcal{S}$, 
\begin{equation*}
\lim_{n^{\prime }\rightarrow \infty }\mathbb{P(}\mathbf{s}_{n^{\prime }}\in 
\mathcal{B}_{n^{\prime }})>\epsilon \text{.}
\end{equation*}%
Let $\mathcal{A}_{n,i}=\{\mathbf{s}_{n}:\hat{q}_{n}=i\}$, so that $%
\lim_{n^{\prime }\rightarrow \infty }\sum_{i=1}^{q_{\max }}\mathbb{P(}%
\mathbf{s}_{n^{\prime }}\in \mathcal{B}_{n^{\prime }}\cap \mathcal{A}%
_{n^{\prime },i})>\epsilon $. There hence exists some $1\leq q\leq q_{\max }$
and a further subsequence $n^{\prime \prime }$ of $n^{\prime }$ such that $%
\lim_{n^{\prime \prime }}\mathbb{P(}\mathbf{s}_{n^{\prime \prime }}\in 
\mathcal{B}_{n^{\prime \prime }}\cap \mathcal{A}_{n^{\prime \prime},q})>\epsilon
/q_{\max }$. But along this subsequence, $q$ is fixed, so Theorem \ref%
{thm:cv_n} applies and yields $\lim_{n^{\prime \prime }\rightarrow \infty }%
\mathbb{P(}\mathbf{s}_{n^{\prime \prime }}\in \mathcal{B}_{n^{\prime \prime
}}\cap \mathcal{A}_{n^{\prime \prime},q})\rightarrow 0$, yielding the desired
contradiction.
\end{proof}

\bigskip

The proof of Theorem \ref{thm:robust} relies on some preliminary results.

\begin{lemma}
\label{lm:schurcncve}The $%
\mathbb{R}
^{q}\mapsto 
\mathbb{R}
$ function 
\begin{equation*}
J(\mathbf{\eta })=\frac{1}{\pi }\int_{0}^{1}\frac{x^{\frac{q-1}{2}}}{\sqrt{%
(1-x)\prod_{i=1}^{q}(x+\eta _{i})}}dx
\end{equation*}%
with $\mathbf{\eta }=(\eta _{1},\ldots ,\eta _{q})$ is Schur convex.
\end{lemma}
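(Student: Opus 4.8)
The plan is to apply the Schur--Ostrowski criterion for Schur convexity (see \cite{Marshall2011}). Since $J$ depends on $\boldsymbol{\eta}$ only through the symmetric polynomial $\prod_{i=1}^{q}(x+\eta_i)$, it is a symmetric function of its arguments, so it suffices to work on the symmetric convex domain $\mathcal{D}=\{\boldsymbol{\eta}:\eta_i>0\text{ for all }i\}$, verify that $J$ is continuously differentiable there, and show that $(\eta_i-\eta_j)(\partial J/\partial\eta_i-\partial J/\partial\eta_j)\ge 0$ for every pair $i\neq j$. Schur convexity on $\mathcal{D}$ then extends to its closure (allowing some $\eta_k=0$) by continuity of $J$, which holds because the integral still converges when some $\eta_k$ vanish.

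First I would establish that $J$ is $C^1$ on $\mathcal{D}$ by differentiating under the integral sign. For each fixed $x\in(0,1)$ the integrand is smooth in $\boldsymbol{\eta}$, and on any compact subset of $\mathcal{D}$ the factors $(x+\eta_k)^{-1/2}$ together with their $\eta$-derivatives are uniformly bounded; the only integrability issues therefore come from the fixed endpoint factors $x^{(q-1)/2}$ near $x=0$ and $(1-x)^{-1/2}$ near $x=1$, both of which are integrable. This legitimizes interchanging $\partial_{\eta_i}$ with $\int_0^1$.

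The key computation is then routine. Differentiating $\prod_k(x+\eta_k)^{-1/2}$ gives
\begin{equation*}
\frac{\partial J}{\partial\eta_i}=-\frac{1}{2\pi}\int_0^1\frac{x^{\frac{q-1}{2}}}{\sqrt{(1-x)\prod_{k=1}^{q}(x+\eta_k)}}\,\frac{1}{x+\eta_i}\,dx,
\end{equation*}
so that, using $\tfrac{1}{x+\eta_i}-\tfrac{1}{x+\eta_j}=\tfrac{\eta_j-\eta_i}{(x+\eta_i)(x+\eta_j)}$,
\begin{equation*}
(\eta_i-\eta_j)\left(\frac{\partial J}{\partial\eta_i}-\frac{\partial J}{\partial\eta_j}\right)=\frac{(\eta_i-\eta_j)^2}{2\pi}\int_0^1\frac{x^{\frac{q-1}{2}}}{\sqrt{(1-x)\prod_{k=1}^{q}(x+\eta_k)}}\,\frac{dx}{(x+\eta_i)(x+\eta_j)}.
\end{equation*}
On $\mathcal{D}$ the integrand is nonnegative for every $x\in(0,1)$, so the right-hand side is $\ge 0$, which is precisely the Schur--Ostrowski inequality.

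The steps are all elementary, and the only point requiring genuine care is the justification of differentiation under the integral sign near the endpoints. It is also worth flagging that the domain on which Schur convexity is claimed is the (closed) positive orthant, which matches the intended application, where $\eta_i=-\omega_i/\omega_0\ge 0$ by the sign result of Theorem \ref{thm:tau_finite_q}. Once the interchange is justified and the sign of the integrand is settled, the conclusion is immediate.
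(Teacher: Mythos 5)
Your proof is correct and takes essentially the same approach as the paper: both verify the Schur--Ostrowski criterion by differentiating under the integral sign, using $\frac{1}{x+\eta_i}-\frac{1}{x+\eta_j}=\frac{\eta_j-\eta_i}{(x+\eta_i)(x+\eta_j)}$ to show the criterion's quantity is a nonnegative multiple of $(\eta_i-\eta_j)^2$. The only difference is that you make explicit the justification of the interchange of differentiation and integration and the domain/continuity details, which the paper leaves implicit by working pointwise with the factor $\tilde{J}=(x+\eta_i)^{-1/2}(x+\eta_j)^{-1/2}$.
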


\begin{proof}
By the Schur-Ostrowski criterion (Theorem 3.A.4 in \cite{Marshall2011}), $J$
is Schur convex if (and only if) 
\begin{equation*}
(\eta _{i}-\eta _{j})\left( \frac{\partial J}{\partial \eta _{i}}-\frac{%
\partial J}{\partial \eta _{j}}\right) \geq 0\text{ for all }1\leq i,j\leq q%
\text{.}
\end{equation*}%
With $\tilde{J}=(x+\eta _{i})^{-1/2}(x+\eta _{j})^{-1/2}$, by a direct
calculation, 
\begin{equation*}
(\eta _{i}-\eta _{j})\left( \frac{\partial \tilde{J}}{\partial \eta _{i}}-%
\frac{\partial \tilde{J}}{\partial \eta _{j}}\right) =\frac{(\eta _{i}-\eta
_{j})^{2}}{2(x+\eta _{i})^{3/2}(x+\eta _{j})^{3/2}}\geq 0
\end{equation*}%
so the result follows.
\end{proof}

\begin{lemma}
\label{lm:quadconvex}For any two $q\times q$ positive semi-definite matrices 
$\mathbf{B}_{1}$ and $\mathbf{B}_{2}$ and vectors $\mathbf{v}_{1},\mathbf{v}%
_{2}\in 
\mathbb{R}
^{q}$, and all $p\in \lbrack 0,1]$, 
\begin{multline*}
\varsigma (p)=(p\mathbf{v}_{1}+(1-p)\mathbf{v}_{2})^{\prime }(\mathbf{I}%
_{q}+p\mathbf{B}_{1}+(1-p)\mathbf{B}_{2})^{-1}(p\mathbf{v}_{1}+(1-p)\mathbf{v%
}_{2}) \\
-p\mathbf{v}_{1}^{\prime }(\mathbf{I}_{q}+\mathbf{B}_{1})^{-1}\mathbf{v}%
_{1}-(1-p)\mathbf{v}_{2}^{\prime }(\mathbf{I}_{q}+\mathbf{B}_{2})^{-1}%
\mathbf{v}_{2}\leq 0\text{.}
\end{multline*}
\end{lemma}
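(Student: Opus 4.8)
The plan is to read $\varsigma(p)\leq 0$ as a statement of \emph{joint convexity} of the map $f(\mathbf{v},\mathbf{A})=\mathbf{v}^{\prime}\mathbf{A}^{-1}\mathbf{v}$ on $\mathbb{R}^{q}$ times the cone of positive definite matrices. The three quadratic forms in $\varsigma(p)$ are precisely $f$ evaluated at the point $(p\mathbf{v}_{1}+(1-p)\mathbf{v}_{2},\,\mathbf{I}_{q}+p\mathbf{B}_{1}+(1-p)\mathbf{B}_{2})$ and at the two endpoints $(\mathbf{v}_{1},\mathbf{I}_{q}+\mathbf{B}_{1})$ and $(\mathbf{v}_{2},\mathbf{I}_{q}+\mathbf{B}_{2})$. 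Since $p\mapsto(p\mathbf{v}_{1}+(1-p)\mathbf{v}_{2},\,\mathbf{I}_{q}+p\mathbf{B}_{1}+(1-p)\mathbf{B}_{2})$ is exactly the affine combination of these two endpoints with weights $p$ and $1-p$, the claimed inequality $\varsigma(p)\leq 0$ is nothing but $f(p\,x_{1}+(1-p)\,x_{2})\leq p\,f(x_{1})+(1-p)\,f(x_{2})$. First I would check that all three arguments lie in the domain of $f$: because $\mathbf{B}_{1},\mathbf{B}_{2}$ are positive semidefinite, $\mathbf{I}_{q}+p\mathbf{B}_{1}+(1-p)\mathbf{B}_{2}\succeq\mathbf{I}_{q}\succ 0$ for every $p\in[0,1]$, and the two endpoints are positive definite as well, so the inverses are well defined throughout.

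The core step is to establish the joint convexity of $f$, and the cleanest route is the variational identity $\mathbf{v}^{\prime}\mathbf{A}^{-1}\mathbf{v}=\sup_{\mathbf{w}\in\mathbb{R}^{q}}\bigl(2\mathbf{w}^{\prime}\mathbf{v}-\mathbf{w}^{\prime}\mathbf{A}\mathbf{w}\bigr)$, valid for any positive definite $\mathbf{A}$, with maximizer $\mathbf{w}=\mathbf{A}^{-1}\mathbf{v}$ (verified by differentiating the concave objective in $\mathbf{w}$). For each fixed $\mathbf{w}$ the map $(\mathbf{v},\mathbf{A})\mapsto 2\mathbf{w}^{\prime}\mathbf{v}-\mathbf{w}^{\prime}\mathbf{A}\mathbf{w}$ is affine in $(\mathbf{v},\mathbf{A})$, and a pointwise supremum of affine functions is convex; hence $f$ is jointly convex on $\mathbb{R}^{q}\times\{\mathbf{A}\succ 0\}$. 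Applying this joint convexity along the affine path described above yields $\varsigma(p)\leq 0$ directly.

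I do not anticipate a genuine obstacle here; the only subtlety is selecting a representation that delivers \emph{joint} convexity in $(\mathbf{v},\mathbf{A})$ rather than mere convexity in each argument separately, which the supremum identity does automatically. An equivalent and equally short alternative would invoke the Schur-complement characterization $\mathbf{v}^{\prime}\mathbf{A}^{-1}\mathbf{v}\leq t\iff\left(\begin{smallmatrix}\mathbf{A}&\mathbf{v}\\ \mathbf{v}^{\prime}&t\end{smallmatrix}\right)\succeq 0$, which exhibits the epigraph of $f$ as the preimage of the positive semidefinite cone under an affine map and hence as a convex set; either formulation reduces the lemma to one line once the affine-path observation is in place.
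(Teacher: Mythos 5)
Your proof is correct, but it takes a genuinely different route from the paper's. The paper establishes the lemma by showing that the scalar function $p\mapsto\varsigma(p)$ is convex on $[0,1]$: it differentiates twice, writes $\mathbf{G}(p)=\mathbf{I}_{q}+p\mathbf{B}_{1}+(1-p)\mathbf{B}_{2}$, and exhibits the second derivative as a quadratic form in the stacked vector $(\mathbf{\Delta}(p)^{\prime},\mathbf{r}(p)^{\prime})^{\prime}$ with the positive semidefinite block matrix whose four $q\times q$ blocks all equal $\mathbf{I}_{q}$, hence nonnegative; it then concludes $\max_{p\in[0,1]}\varsigma(p)\leq\max(\varsigma(0),\varsigma(1))=0$, using that a convex function on an interval attains its maximum at an endpoint. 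You instead prove \emph{joint} convexity of the matrix-fractional function $f(\mathbf{v},\mathbf{A})=\mathbf{v}^{\prime}\mathbf{A}^{-1}\mathbf{v}$ on $\mathbb{R}^{q}\times\{\mathbf{A}\succ0\}$ via the variational identity $f(\mathbf{v},\mathbf{A})=\sup_{\mathbf{w}}\left(2\mathbf{w}^{\prime}\mathbf{v}-\mathbf{w}^{\prime}\mathbf{A}\mathbf{w}\right)$, a pointwise supremum of functions affine in $(\mathbf{v},\mathbf{A})$, and then note that $\varsigma(p)\leq0$ \emph{is} the convexity inequality along the affine segment joining $(\mathbf{v}_{1},\mathbf{I}_{q}+\mathbf{B}_{1})$ and $(\mathbf{v}_{2},\mathbf{I}_{q}+\mathbf{B}_{2})$ (the identity matrix being absorbed correctly because the weights sum to one). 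The two arguments rest on the same underlying fact --- convexity of $f$ restricted to that segment --- but yours is structural where the paper's is computational: you avoid all derivative calculations, obtain the stronger conclusion of joint convexity on the full domain, and need no endpoint-evaluation step since the chord inequality is itself the claim. What the paper's approach buys is self-containedness: it uses nothing beyond elementary matrix calculus and in particular does not require recognizing the variational (or Schur-complement) characterization. Your Schur-complement alternative is likewise valid and standard, and either of your formulations would serve as a drop-in replacement for the paper's proof.
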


\begin{proof}
We first show that $\varsigma (p)$ is convex. Write $\mathbf{G}(p)=\mathbf{I}%
_{q}+p\mathbf{B}_{1}+(1-p)\mathbf{B}_{2}$. The first derivative of the
nonlinear part of $\tfrac{1}{2}\varsigma (p)$ is given by 
\begin{equation*}
(\mathbf{v}_{1}-\mathbf{v}_{2})^{\prime }\mathbf{G}(p)^{-1}(p\mathbf{v}%
_{1}+(1-p)\mathbf{v}_{2})-\tfrac{1}{2}(p\mathbf{v}_{1}+(1-p)\mathbf{v}%
_{2})^{\prime }\mathbf{G}(p)^{-1}(\mathbf{B}_{1}-\mathbf{B}_{2})\mathbf{G}%
(p)^{-1}(p\mathbf{v}_{1}+(1-p)\mathbf{v}_{2})
\end{equation*}%
so that the second derivative of $\tfrac{1}{2}\varsigma (p)$ equals 
\begin{multline*}
(\mathbf{v}_{1}-\mathbf{v}_{2})^{\prime }\mathbf{G}(p)^{-1}(\mathbf{v}_{1}-%
\mathbf{v}_{2})-2(\mathbf{v}_{1}-\mathbf{v}_{2})^{\prime }\mathbf{G}(p)^{-1}(%
\mathbf{B}_{1}-\mathbf{B}_{2})\mathbf{G}(p)^{-1}(p\mathbf{v}_{1}+(1-p)%
\mathbf{v}_{2}) \\
+(p\mathbf{v}_{1}+(1-p)\mathbf{v}_{2})^{\prime }\mathbf{G}(p)^{-1}(\mathbf{B}%
_{1}-\mathbf{B}_{2})\mathbf{G}(p)^{-1}(\mathbf{B}_{1}-\mathbf{B}_{2})\mathbf{%
G}(p)^{-1}(p\mathbf{v}_{1}+(1-p)\mathbf{v}_{2}).
\end{multline*}%
With $\mathbf{\Delta }(p)=\mathbf{G}(p)^{-1/2}(\mathbf{v}_{1}-\mathbf{v}_{2})
$ and $\mathbf{r}(p)=-\mathbf{G}(p)^{-1/2}(\mathbf{B}_{1}-\mathbf{B}_{2})%
\mathbf{G}(p)^{-1}(p\mathbf{v}_{1}+(1-p)\mathbf{v}_{2})$, the second
derivative may be rewritten as 
\begin{equation*}
\left( 
\begin{array}{c}
\mathbf{\Delta }(p) \\ 
\mathbf{r}(p)%
\end{array}%
\right) ^{\prime }\left( 
\begin{array}{cc}
\mathbf{I}_{q} & \mathbf{I}_{q} \\ 
\mathbf{I}_{q} & \mathbf{I}_{q}%
\end{array}%
\right) \left( 
\begin{array}{c}
\mathbf{\Delta }(p) \\ 
\mathbf{r}(p)%
\end{array}%
\right) \geq 0
\end{equation*}%
and convexity follows. Thus $\max_{p\in \lbrack 0,1]}\varsigma (p)\leq \max
(\varsigma (1),\varsigma (0))=0$.
\end{proof}

\begin{lemma}
\label{lm:lam1Abar}Let $\mathbf{A}_{1}=\int \mathbf{P}^{-1}\mathbf{D}(\func{%
cv})\mathbf{\Omega }(\theta )\mathbf{P}dF(\theta )$. The $q+1$ eigenvalues
of $\mathbf{A}_{1}$ are real, and only one is positive, and the same holds
for $\mathbf{A}(\theta )$, $\theta \in \Theta $. Furthermore, $\lambda _{1}(%
\mathbf{A}_{1})\geq 1$.{}
\end{lemma}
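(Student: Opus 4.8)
The plan is to dispatch the reality and sign claims uniformly and then treat the lower bound $\lambda _1(\mathbf{A}_1)\ge 1$ by reducing it to a scalar secular equation and invoking a concavity/Jensen argument. First I would write $\mathbf{\Omega }_1:=\int \mathbf{\Omega }(\theta )dF(\theta )$, so that by linearity $\mathbf{A}_1=\mathbf{P}^{-1}\mathbf{D}(\func{cv})\mathbf{\Omega }_1\mathbf{P}$ and likewise $\mathbf{A}(\theta )=\mathbf{P}^{-1}\mathbf{D}(\func{cv})\mathbf{\Omega }(\theta )\mathbf{P}$. Each is thus similar to $\mathbf{D}(\func{cv})\mathbf{\Omega }$ with $\mathbf{\Omega }\in \{\mathbf{\Omega }_1,\mathbf{\Omega }(\theta )\}$, and since $\mathbf{\Omega }\succ 0$ (note $\mathbf{\Omega }_1\succ 0$ as a probability mixture of the full-rank, hence positive definite, $\mathbf{\Omega }(\theta )$), this is in turn similar to the symmetric matrix $\mathbf{\Omega }^{1/2}\mathbf{D}(\func{cv})\mathbf{\Omega }^{1/2}$. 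Hence all $q+1$ eigenvalues are real. The symmetric matrix is congruent to $\mathbf{D}(\func{cv})$ via the nonsingular $\mathbf{\Omega }^{1/2}$, so by Sylvester's law of inertia it inherits the inertia of $\mathbf{D}(\func{cv})=\limfunc{diag}(1,-\func{cv}^{2}\mathbf{I}_q)$, namely one positive and $q$ negative eigenvalues. This settles the reality and ``only one positive'' claims for both $\mathbf{A}_1$ and $\mathbf{A}(\theta )$.

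Next I would reduce $\lambda _1(\mathbf{A}_1)\ge 1$ to a scalar condition. With $\mathbf{P}_{\perp }=\mathbf{I}-\mathbf{e}_1\mathbf{e}_1^{\prime }$ we have $\mathbf{D}(\func{cv})=\mathbf{e}_1\mathbf{e}_1^{\prime }-\func{cv}^{2}\mathbf{P}_{\perp }$, and the eigenvalues of $\mathbf{D}(\func{cv})\mathbf{\Omega }$ solve $\det (\mathbf{D}(\func{cv})-\lambda \mathbf{\Omega }^{-1})=0$. For $\lambda >0$ the matrix $\mathbf{R}(\lambda ):=\func{cv}^{2}\mathbf{P}_{\perp }+\lambda \mathbf{\Omega }^{-1}$ is positive definite, so the matrix-determinant lemma converts this into the scalar secular equation $\phi _{\mathbf{\Omega }}(\lambda ):=\mathbf{e}_1^{\prime }\mathbf{R}(\lambda )^{-1}\mathbf{e}_1=1$. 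Since $\mathbf{R}(\lambda )$ is strictly increasing in $\lambda$ in the L\"{o}wner order (its $\lambda$-derivative is $\mathbf{\Omega }^{-1}\succ 0$), $\phi _{\mathbf{\Omega }}$ is strictly decreasing; combined with the first paragraph (a unique positive eigenvalue exists) this shows the positive eigenvalue $\lambda _1(\mathbf{D}(\func{cv})\mathbf{\Omega })$ is at least $1$ if and only if $\Phi (\mathbf{\Omega }):=\phi _{\mathbf{\Omega }}(1)=\mathbf{e}_1^{\prime }(\func{cv}^{2}\mathbf{P}_{\perp }+\mathbf{\Omega }^{-1})^{-1}\mathbf{e}_1\ge 1$. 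Under the scale normalization $\lambda _1(\mathbf{A}(\theta ))=1$ of Theorem \ref{thm:robust}, this yields $\Phi (\mathbf{\Omega }(\theta ))=1$ for every $\theta$.

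The crux is then the claim that $\Phi$ is concave in $\mathbf{\Omega }$, after which Jensen's inequality closes the argument: $\Phi (\mathbf{\Omega }_1)=\Phi (\int \mathbf{\Omega }(\theta )dF)\ge \int \Phi (\mathbf{\Omega }(\theta ))dF=1$, i.e.\ $\lambda _1(\mathbf{A}_1)\ge 1$. I expect this concavity to be the main obstacle. I would establish it through the variational (parallel-sum) identity $\mathbf{e}_1^{\prime }(\mathbf{C}+\mathbf{\Omega }^{-1})^{-1}\mathbf{e}_1=\min _{\mathbf{x}+\mathbf{y}=\mathbf{e}_1}(\mathbf{x}^{\prime }\mathbf{C}^{-1}\mathbf{x}+\mathbf{y}^{\prime }\mathbf{\Omega }\mathbf{y})$, valid for $\mathbf{C}\succ 0$: for each fixed admissible pair $(\mathbf{x},\mathbf{y})$ the bracketed quantity is affine in $\mathbf{\Omega }$, so the pointwise infimum is concave in $\mathbf{\Omega }$. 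Because here $\mathbf{C}=\func{cv}^{2}\mathbf{P}_{\perp }$ is singular, I would first apply this to $\mathbf{C}_{\varepsilon }=\func{cv}^{2}\mathbf{P}_{\perp }+\varepsilon \mathbf{I}\succ 0$ to get concavity of $\Phi _{\varepsilon }$, then let $\varepsilon \downarrow 0$ (a pointwise limit of concave functions is concave, and $\mathbf{C}_{\varepsilon }+\mathbf{\Omega }^{-1}\to \mathbf{C}+\mathbf{\Omega }^{-1}\succ 0$ so $\Phi _{\varepsilon }\to \Phi $). A final caution worth recording in the writeup: the tempting shortcut through operator convexity of the inverse, $\mathbf{\Omega }_1^{-1}\preceq \int \mathbf{\Omega }(\theta )^{-1}dF$, pushes the inequality the wrong way and is inconclusive, which is exactly why the reduction to the scalar secular function $\Phi$ and the use of \emph{its} concavity are essential.
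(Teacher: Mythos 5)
Your proof is correct. The first half (reality of the eigenvalues and the inertia count) is essentially the paper's own argument: both reduce $\mathbf{D}(\func{cv})\mathbf{\Omega }$ by similarity to a symmetric matrix congruent to $\mathbf{D}(\func{cv})$ and invoke Sylvester's law of inertia, with your observation that $\mathbf{\Omega }_{1}=\int \mathbf{\Omega }(\theta )dF(\theta )\succ 0$ matching the paper's use of $\mathbf{\Sigma }_{1}$. For the key claim $\lambda _{1}(\mathbf{A}_{1})\geq 1$, however, you take a genuinely different route. The paper factors the characteristic polynomial through a Schur complement, $\det (s\mathbf{I}-\mathbf{A}_{1})=s\bar{h}(s^{-1})\det (s\mathbf{I}_{q}+\mathbf{\tilde{W}}^{\prime }\mathbf{\Sigma }_{1}\mathbf{\tilde{W}})$, shows the scalar function $\bar{h}$ is decreasing, and then bounds $\bar{h}(1)\leq \int h_{\theta }(1)dF(\theta )=0$ via its Lemma \ref{lm:quadconvex} --- convexity of $(\mathbf{v},\mathbf{B})\mapsto \mathbf{v}^{\prime }(\mathbf{I}+\mathbf{B})^{-1}\mathbf{v}$ along mixtures, proved by a second-derivative computation. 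You instead characterize the unique positive eigenvalue by the secular equation $\mathbf{e}_{1}^{\prime }(\func{cv}^{2}\mathbf{P}_{\perp }+\lambda \mathbf{\Omega }^{-1})^{-1}\mathbf{e}_{1}=1$ (matrix determinant lemma), use strict monotonicity in $\lambda $ to reduce the claim to $\Phi (\mathbf{\Omega }_{1})\geq 1$, and obtain that from concavity of the parallel-sum functional $\mathbf{\Omega }\mapsto \mathbf{e}_{1}^{\prime }(\mathbf{C}+\mathbf{\Omega }^{-1})^{-1}\mathbf{e}_{1}$, established by the Anderson--Duffin variational identity (a minimum over splittings $\mathbf{x}+\mathbf{y}=\mathbf{e}_{1}$ of an expression affine in $\mathbf{\Omega }$), with the $\varepsilon $-regularization correctly handling the singular $\mathbf{C}=\func{cv}^{2}\mathbf{P}_{\perp }$. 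The two Jensen steps are dual to one another --- indeed the paper's Lemma \ref{lm:quadconvex} could itself be derived from the variational identity $\mathbf{v}^{\prime }(\mathbf{I}+\mathbf{B})^{-1}\mathbf{v}=\max_{\mathbf{z}}(2\mathbf{z}^{\prime }\mathbf{v}-\mathbf{z}^{\prime }(\mathbf{I}+\mathbf{B})\mathbf{z})$, a maximum of affine functions --- but yours works directly on the normalized $(q+1)\times (q+1)$ matrices, where the hypothesis $\lambda _{1}(\mathbf{A}(\theta ))=1$ translates immediately into $\Phi (\mathbf{\Omega }(\theta ))=1$, and avoids reintroducing the $n$-dimensional objects $\mathbf{l}$, $\mathbf{\tilde{W}}$, $\mathbf{\Sigma }$. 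What the paper's route buys is an elementary, self-contained calculus argument packaged as a standalone lemma that is reused in stating the theorem's conditions; what yours buys is a softer, more conceptual proof of the convexity step and a cleaner bookkeeping of the normalization, at the modest cost of requiring invertibility of each $\mathbf{\Omega }(\theta )$ (which Theorem \ref{thm:robust} assumes anyway) and the limiting argument in $\varepsilon $.
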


\begin{proof}
By similarity, the eigenvalues of $\mathbf{A}_{1}$ are equal to those of $%
\mathbf{PA}_{1}\mathbf{P}^{-1}$, which in turn is similar to the symmetric
matrix 
\begin{equation*}
\left( 
\begin{array}{cc}
\mathbf{l}^{\prime }\mathbf{\Sigma }_{1}\mathbf{l} & \mathbf{l}^{\prime }%
\mathbf{\Sigma }_{1}\mathbf{\tilde{W}} \\ 
\mathbf{\tilde{W}}^{\prime }\mathbf{\Sigma }_{1}\mathbf{l} & \mathbf{\tilde{W%
}}^{\prime }\mathbf{\Sigma }_{1}\mathbf{\tilde{W}}%
\end{array}%
\right) ^{1/2}\left( 
\begin{array}{cc}
1 & 0 \\ 
0 & -\mathbf{I}_{q}%
\end{array}%
\right) \left( 
\begin{array}{cc}
\mathbf{l}^{\prime }\mathbf{\Sigma }_{1}\mathbf{l} & \mathbf{l}^{\prime }%
\mathbf{\Sigma }_{1}\mathbf{\tilde{W}} \\ 
\mathbf{\tilde{W}}^{\prime }\mathbf{\Sigma }_{1}\mathbf{l} & \mathbf{\tilde{W%
}}^{\prime }\mathbf{\Sigma }_{1}\mathbf{\tilde{W}}%
\end{array}%
\right) ^{1/2}
\end{equation*}%
with $\mathbf{\tilde{W}=(l,W/}\func{cv})$, and the first claim follows for $%
\mathbf{A}_{1}$. The claim for $\mathbf{A}(\theta )$ follows from the same
argument.

For the last claim, let $\bar{h}:%
\mathbb{R}
\mapsto 
\mathbb{R}
$ 
\begin{equation*}
\bar{h}(t)=1-t\mathbf{l}^{\prime }\mathbf{\Sigma }_{1}\mathbf{l}+t^{2}%
\mathbf{l}^{\prime }\mathbf{\Sigma }_{1}\mathbf{\tilde{W}}(\mathbf{I}_{q}+t%
\mathbf{\tilde{W}}^{\prime }\mathbf{\Sigma }_{1}\mathbf{\tilde{W}})^{-1}%
\mathbf{\tilde{W}}^{\prime }\mathbf{\Sigma }_{1}\mathbf{l}.
\end{equation*}%
Note that $\bar{h}(t)$ is weakly decreasing in $t>0$, since with $\mathbf{%
\tilde{H}}=-t\mathbf{\tilde{W}}(\mathbf{I}_{q}+t\mathbf{\tilde{W}}^{\prime }%
\mathbf{\Sigma }_{1}\mathbf{\tilde{W}})^{-1}\mathbf{\tilde{W}}^{\prime }%
\mathbf{\Sigma }_{1}\mathbf{l}$%
\begin{equation*}
\bar{h}^{\prime }(t)=-\left( 
\begin{array}{c}
\mathbf{l} \\ 
\mathbf{\tilde{H}}%
\end{array}%
\right) ^{\prime }\left( 
\begin{array}{cc}
\mathbf{\Sigma }_{1} & \mathbf{\Sigma }_{1} \\ 
\mathbf{\Sigma }_{1} & \mathbf{\Sigma }_{1}%
\end{array}%
\right) \left( 
\begin{array}{c}
\mathbf{l} \\ 
\mathbf{\tilde{H}}%
\end{array}%
\right) <0.
\end{equation*}

The characteristic polynomial of $\mathbf{A}_{1}$ is given by 
\begin{eqnarray*}
&&\det \left( 
\begin{array}{cc}
s-\mathbf{l}^{\prime }\mathbf{\Sigma }_{1}\mathbf{l} & \mathbf{l}^{\prime }%
\mathbf{\Sigma }_{1}\mathbf{\tilde{W}} \\ 
-\mathbf{\tilde{W}}^{\prime }\mathbf{\Sigma }_{1}\mathbf{l} & s\mathbf{I}%
_{q}+\mathbf{\tilde{W}}^{\prime }\mathbf{\Sigma }_{1}\mathbf{\tilde{W}}%
\end{array}%
\right)  \\
&=&(s-\mathbf{l}^{\prime }\mathbf{\Sigma }_{1}\mathbf{l}+\mathbf{l}^{\prime }%
\mathbf{\Sigma }_{1}\mathbf{\tilde{W}}(s\mathbf{I}_{q}+\mathbf{\tilde{W}}%
^{\prime }\mathbf{\Sigma }_{1}\mathbf{\tilde{W}})^{-1}\mathbf{\tilde{W}}%
^{\prime }\mathbf{\Sigma }_{1}\mathbf{l})\det (s\mathbf{I}_{q}+\mathbf{%
\tilde{W}}^{\prime }\mathbf{\Sigma }_{1}\mathbf{\tilde{W}}) \\
&=&s\bar{h}(s^{-1})\det (s\mathbf{I}_{q}+\mathbf{\tilde{W}}^{\prime }\mathbf{%
\Sigma }_{1}\mathbf{\tilde{W}})
\end{eqnarray*}%
so that $\lambda _{1}(\mathbf{A}_{1})$ satisfies $\bar{h}(1/\lambda _{1}(%
\mathbf{A}_{1}))=0$. Similarly, $1/\lambda _{1}(\mathbf{A}(\theta ))=1$ is a
root of 
\begin{equation*}
h_{\theta }(t)=1-t\mathbf{l}^{\prime }\mathbf{\Sigma }(\theta )\mathbf{l}%
+t^{2}\mathbf{l}^{\prime }\mathbf{\Sigma }(\theta )\mathbf{\tilde{W}}(%
\mathbf{I}_{q}+t\mathbf{\tilde{W}}^{\prime }\mathbf{\Sigma }(\theta )\mathbf{%
\tilde{W}})^{-1}\mathbf{\tilde{W}}^{\prime }\mathbf{\Sigma }(\theta )\mathbf{%
l}.
\end{equation*}%
By Lemma \ref{lm:quadconvex}, for any $t>0$, 
\begin{eqnarray*}
&&\mathbf{l}^{\prime }\mathbf{\Sigma }_{1}\mathbf{\tilde{W}}(\mathbf{I}_{q}+t%
\mathbf{\tilde{W}}^{\prime }\mathbf{\Sigma }_{1}\mathbf{\tilde{W}})^{-1}%
\mathbf{\tilde{W}}^{\prime }\mathbf{\Sigma }_{1}\mathbf{l} \\
&=&\left( \int \mathbf{\tilde{W}}^{\prime }\mathbf{\Sigma }(\theta )\mathbf{l%
}dF(\theta )\right) ^{\prime }\left( \mathbf{I}_{q}+t\int \mathbf{\tilde{W}}%
^{\prime }\mathbf{\Sigma }(\theta )\mathbf{\tilde{W}}dF(\theta )\right)
^{-1}\left( \int \mathbf{\tilde{W}}^{\prime }\mathbf{\Sigma }(\theta )%
\mathbf{l}dF(\theta )\right)  \\
&\leq &\int \mathbf{l}^{\prime }\mathbf{\Sigma }(\theta )\mathbf{\tilde{W}}(%
\mathbf{I}_{q}+t\mathbf{\tilde{W}}^{\prime }\mathbf{\Sigma }(\theta )\mathbf{%
\tilde{W}})^{-1}\mathbf{\tilde{W}}^{\prime }\mathbf{\Sigma }(\theta )\mathbf{%
l}dF(\theta ).
\end{eqnarray*}%
Thus, $\bar{h}(t)\leq \int h_{\theta }(t)dF(\theta )$, and from $h_{\theta
}(1)=0$ for all $\theta $, $\bar{h}(1)\leq 0$. Since $h$ is decreasing, its
root $1/\lambda _{1}(\mathbf{A}_{1})$ must thus be smaller than unity, and
the conclusion follows. 
\end{proof}

\bigskip

\textbf{Proof of Theorem \ref{thm:robust}: }Proceeding as in the proof of
Theorem \ref{thm:tau_finite_q}, $\mathbb{P}_{\mathbf{\Sigma }_{1}}(\tau ^{2}(%
\mathbf{W}\mathbf{W}^{\prime })>\func{cv}^{2})=\mathbb{P}\left(
Z_{0}^{2}\geq \sum_{i=1}^{q}\bar{\eta}_{i}Z_{i}^{2}\right) $ with $\bar{\eta}%
_{i}=\lambda _{i}\left( -\mathbf{A}_{1}\right) /\lambda _{1}(\mathbf{A}_{1})$%
. By Lemma \ref{lm:lam1Abar}, $\bar{\eta}_{i}\geq 0$ for $i=1,\ldots ,q$.
For future reference, note that $\mathbb{P}_{\mathbf{\Sigma }_{0}}(\tau ^{2}(%
\mathbf{W}\mathbf{W}^{\prime })>\func{cv}^{2})=\alpha $ yields 
\begin{equation}
\mathbb{P}\left( Z_{0}^{2}\geq \sum_{i=1}^{q}\eta _{i}Z_{i}^{2}\right) \leq
\alpha .  \label{zeroprob}
\end{equation}%
for $\eta _{i}=\lambda _{i}\left( -\mathbf{A}_{0}\right) $.

In the following, we write $\mathbf{a}\prec \mathbf{b}$ for two vectors $%
\mathbf{a},\mathbf{b}\in 
\mathbb{R}
^{q}$ to indicate that $\mathbf{b}$ majorizes $\mathbf{a}$, that is, with
the elements of $a_{i}$ and $b_{i}$ sorted in descending order, 
\begin{equation*}
\sum_{i=1}^{j}a_{i}\leq \sum_{i=1}^{j}b_{i}\text{ for all }j=1,\ldots ,q
\end{equation*}%
and $\sum_{i=1}^{q}a_{i}=\sum_{i=1}^{q}b_{i}$. Let $\mathbf{\bar{A}}_{1}=%
\tfrac{1}{2}(\mathbf{A}_{1}+\mathbf{A}_{1}^{\prime })$. From Theorems 9.F.1
and 9.G.1 in \cite{Marshall2011} 
\begin{eqnarray}
(\lambda _{1}(-\mathbf{A}_{1}),\ldots ,\lambda _{q+1}(-\mathbf{A}_{1}))
&\prec &(\lambda _{1}(-\mathbf{\bar{A}}_{1}),\ldots ,\lambda _{q+1}(-\mathbf{%
\bar{A}}_{1}))  \label{majAbarB} \\
&\prec &\left( \int \lambda _{1}(-\mathbf{\bar{A}}(\theta ))dF(\theta
),\ldots ,\right.  \notag \\
&&\ \ \ \ \ \ \ \ \ \ \ \ \ \left. \int \lambda _{q}(-\mathbf{\bar{A}}%
(\theta ))dF(\theta ),\int \lambda _{q+1}(-\mathbf{\bar{A}}(\theta
))dF(\theta )\right) .  \notag
\end{eqnarray}%
Since $\int \lambda _{q+1}(-\mathbf{\bar{A}}(\theta ))dF(\theta )=-\int
\lambda _{1}(\mathbf{\bar{A}}(\theta ))dF(\theta )$ and $\lambda _{q+1}(-%
\mathbf{A}_{1})=-\lambda _{1}(\mathbf{A}_{1})$, we have 
\begin{equation*}
-\lambda _{1}(\mathbf{A}_{1})+\sum_{j=1}^{q}\lambda _{j}(-\mathbf{A}%
_{1})=-\int \lambda _{1}(\mathbf{\bar{A}}(\theta ))dF(\theta
)+\sum_{j=1}^{q}\int \lambda _{j}(-\mathbf{\bar{A}}(\theta ))dF(\theta ).
\end{equation*}%
The majorization result (\ref{majAbarB}) further implies 
\begin{equation}
\lambda _{1}(\mathbf{A}_{1})\leq \lambda _{1}(\mathbf{\bar{A}}_{1})\leq \int
\lambda _{1}(\mathbf{\bar{A}}(\theta ))dF(\theta )  \label{lam1ineq}
\end{equation}%
so that also 
\begin{multline*}
(\lambda _{1}(-\mathbf{A}_{1}),\ldots ,\lambda _{q}(-\mathbf{A}_{1}))\prec
\left( \int \lambda _{1}(-\mathbf{\bar{A}}(\theta ))dF(\theta ),\ldots
,\right. \\
\left. \int \lambda _{q-1}(-\mathbf{\bar{A}}(\theta ))dF(\theta ),\int
\lambda _{q}(-\mathbf{\bar{A}}(\theta ))dF(\theta )-\left( \int \lambda _{1}(%
\mathbf{\bar{A}}(\theta ))dF(\theta ))-\lambda _{1}(\mathbf{A}_{1})\right)
\right) .
\end{multline*}%
with the elements still sorted in descending order. Thus, with $\tilde{\eta}%
_{i}=\int \lambda _{i}(-\mathbf{\bar{A}}(\theta ))dF(\theta )/\lambda _{1}(%
\mathbf{A}_{1})$ for $i=1,\ldots ,q-1$ and 
\begin{equation*}
\tilde{\eta}_{q}=\frac{\int \lambda _{q}(-\mathbf{\bar{A}}(\theta
))dF(\theta )-(\int \lambda _{1}(\mathbf{\bar{A}}(\theta ))dF(\theta
))-\lambda _{1}(\mathbf{A}_{1}))}{\lambda _{1}(\mathbf{A}_{1})}
\end{equation*}%
we have $(\bar{\eta}_{1},\ldots ,\bar{\eta}_{q})\prec (\tilde{\eta}%
_{1},\ldots ,\tilde{\eta}_{q})$, so that by (\ref{eq:BS}) and Lemma \ref%
{lm:schurcncve}, $\mathbb{P}\left( Z_{0}^{2}\geq \sum_{i=1}^{q}\bar{\eta}%
_{i}Z_{i}^{2}\right) \leq \mathbb{\mathbb{P}}\left( Z_{0}^{2}\geq
\sum_{i=1}^{q}\tilde{\eta}_{i}Z_{i}^{2}\right) $.

Now applying (\ref{lam1ineq}) 
\begin{equation*}
\tilde{\eta}_{i}^{\ast }=\int \lambda _{i}(-\mathbf{\bar{A}}(\theta
))dF(\theta )/\int \lambda _{1}(\mathbf{\bar{A}}(\theta ))dF(\theta )\leq 
\tilde{\eta}_{i}
\end{equation*}%
for $i=1,\ldots ,q-1$, and since from Lemma \ref{lm:lam1Abar}, $\lambda _{1}(%
\mathbf{A}_{1})\geq 1$, also 
\begin{equation*}
\tilde{\eta}_{q}^{\ast }=\frac{\int \lambda _{q}(-\mathbf{\bar{A}}(\theta
))dF(\theta )-(\int \lambda _{1}(\mathbf{\bar{A}}(\theta ))dF(\theta )-1)}{%
\int \lambda _{1}(\mathbf{\bar{A}}(\theta ))dF(\theta )}\leq \tilde{\eta}_{q}
\end{equation*}%
provided 
\begin{equation}
\int \lambda _{q}(-\mathbf{\bar{A}}(\theta ))dF(\theta )-\left( \int \lambda
_{1}(\mathbf{\bar{A}}(\theta ))dF(\theta )-1\right) \geq 0.  \label{signlamq}
\end{equation}%
Since $\mathbb{P}(Z_{0}^{2}\geq \left. \sum_{i=1}^{q}\tilde{\eta}%
_{i}Z_{i}^{2}\right. )$ is a decreasing function in $\tilde{\eta}_{i}$, $%
\mathbb{P}\left( Z_{0}^{2}\geq \sum_{i=1}^{q}\tilde{\eta}_{i}Z_{i}^{2}%
\right) \leq \mathbb{P}\left( Z_{0}^{2}\geq \sum_{i=1}^{q}\tilde{\eta}%
_{i}^{\ast }Z_{i}^{2}\right) .$ By Theorem 3.A.8 of \cite{Marshall2011},
Lemma \ref{lm:schurcncve}, and (\ref{zeroprob}), it now suffices to show
that 
\begin{equation}
\sum_{i=1}^{j}\tilde{\eta}_{q+1-i}^{\ast }\geq \sum_{i=1}^{j}\eta _{q+1-i}
\label{finineq}
\end{equation}%
for all $1\leq j\leq q$, and since $\eta _{q}\geq 0$, this also ensures that
(\ref{signlamq}) holds. Condition (\ref{finineq}) may be rewritten as $%
\sum_{i=1}^{j}\int \nu _{i}(\theta )dF(\theta )\geq 0$, and the result
follows. $\blacksquare $

\newpage

\bibliographystyle{econometrica}
\bibliography{diss}

\end{document}